\documentclass[a4paper,USenglish,hyperref,cleveref,autoref,thm-restate,numberwithinsect,nameinlink,pagebackref]{lipics-v2021}
\hideLIPIcs %
\nolinenumbers

\usepackage[T1]{fontenc}
\usepackage{color}

\usepackage[utf8]{inputenc}
\usepackage{amsmath}
\usepackage{amsfonts}
\usepackage{amssymb}
\usepackage{mathtools}
\usepackage{xspace}
\usepackage[numbers,sort]{natbib}
\usepackage[dvipsnames]{xcolor}
\usepackage{todonotes}
\usepackage{dsfont}
\usepackage[shortlabels]{enumitem}
\usepackage{algpseudocode}
\usepackage{algorithm}

\usepackage{cleveref}
\usepackage{calc}
\usepackage{thm-restate}
\usepackage{orcidlink}

\usepackage{tikz}
\usetikzlibrary{matrix,fit,decorations.pathreplacing,decorations.pathmorphing,shapes,patterns,patterns.meta,intersections,positioning,chains,calc,backgrounds,arrows.meta,math,decorations.markings}

\tikzstyle{vertex_small}=[circle,draw,fill, inner sep=2pt]
\tikzset{
	position/.style args={#1:#2 from #3}{
		at=(#3.#1), anchor=#1+180, shift=(#1:#2)
	}
}

\definecolor{italyGreen}{RGB}{0, 146, 70}
\definecolor{italyRed}{RGB}{206, 43, 55}

\newlist{compactitem}{itemize}{4}
\setlist[compactitem,1]{nolistsep,label=--}

\DeclareMathOperator{\W}{W}
\DeclareMathOperator{\dist}{dist}
\DeclareMathOperator{\degree}{deg}
\DeclareMathOperator*{\argmin}{arg\,min}

\DeclareMathOperator{\cc}{cc}

\DeclareMathOperator{\module}{mdl}

\DeclareMathOperator{\abov}{ab}
\DeclareMathOperator{\below}{be}

\algnewcommand\algorithmicforeach{\textbf{for each}}
\algdef{S}[FOR]{ForEach}[1]{\algorithmicforeach\ #1 \algorithmicdo}

\newcommand{\problemDef}[3]{%
\begin{center}
	\setlength{\tabcolsep}{2pt}
	\begin{tabular}{@{}lp{10cm}@{}}
		\multicolumn{2}{@{}l}{\textsc{#1}} \\%
		\textbf{Input:} & #2 \\%
		\textbf{Question:} & #3 \\%
	\end{tabular}
\end{center}%
}

\newcommand{\N}{\mathds{N}}

\newcommand{\MADSTl}{\textsc{MAD Spanning Tree}\xspace}
\newcommand{\MADST}{MADST\xspace}
\newcommand{\ExThreeCov}{\textsc{Exact Cover by $3$-Sets}\xspace}
\newcommand{\XtC}{\textsc{X$3$C}\xspace}
\newcommand{\MAD}{MAD tree\xspace}
\newcommand{\MADs}{MAD trees\xspace}

\usepackage{etoolbox}

\newcommand{\appsymb}{$\star$}
\newcommand{\appref}[1]{{\hyperref[proof:#1]{\appsymb}}}

\newcommand{\toappendix}[1]{%
  \gappto{\appendixProofText}
  {{
    #1
  }}
}

\newcommand{\appendixproof}[2]{%
  \gappto{\appendixProofs}
  {
    \subsection{Proof of \Cref{#1}}\label{proof:#1}
    #2
  }
}

\makeatletter
\let\@mkboth\@gobbletwo
\makeatother

\ccsdesc[500]{Theory of computation~Parameterized complexity and exact algorithms}
\ccsdesc[500]{Theory of computation~Problems, reductions and completeness}

\author{Tom-Lukas Breitkopf}{Algorithmics and Computational Complexity, Technische Universität Berlin, Germany}{t.breitkopf@tu-berlin.de}{https://orcid.org/0009-0008-2875-1945}{}

\author{Vincent Froese}{Algorithmics and Computational Complexity, Technische Universität Berlin, Germany}{vincent.froese@tu-berlin.de}{https://orcid.org/0000-0002-8499-0130}{}

\author{Anton Herrmann}{Algorithmics and Computational Complexity, Technische Universität Berlin, Germany}{a.herrmann@tu-berlin.de}{https://orcid.org/0009-0008-8473-9043}{}

\author{André Nichterlein}{Algorithmics and Computational Complexity, Technische Universität Berlin, Germany}{andre.nichterlein@tu-berlin.de}{https://orcid.org/0000-0001-7451-9401}{}

\author{Camille Richer}{Université Paris-Dauphine, PSL Research University, CNRS, UMR 7243, LAMSADE, Paris, France \&
Orange Research, Châtillon, France}{camille.richer@dauphine.eu}{https://orcid.org/0009-0000-3636-6571}{}

\authorrunning{T.-L.\ Breitkopf, V.\ Froese, A.\ Herrmann, A.\ Nichterlein, C.\ Richer}

\begin{document}

\title{Parameterized Algorithms for\\ Computing MAD Trees}

\titlerunning{Parameterized Algorithms for Computing MAD Trees}

\maketitle

\begin{abstract}
We consider the well-studied problem of finding a spanning tree with minimum average distance between vertex pairs (called a MAD tree).
This is a classic network design problem which is known to be NP-hard.
While approximation algorithms and polynomial-time algorithms for some graph classes are known, the parameterized complexity of the problem has not been investigated so far.

We start a parameterized complexity analysis with the goal of determining the border of algorithmic tractability for the MAD tree problem. 
To this end, we provide a linear-time algorithm for graphs of constant modular width and a polynomial-time algorithm for graphs of bounded treewidth; the degree of the polynomial depends on the treewidth.
That is, the problem is in FPT with respect to modular width and in XP with respect to treewidth.
Moreover, we show it is in FPT when parameterized by vertex integrity or by an above-guarantee parameter.
We complement these algorithms with NP-hardness on split graphs.

\keywords{Optimum Distance Spanning Trees $\cdot$ Wiener Index $\cdot$ Network Design $\cdot$ Routing Cost $\cdot$ Width Parameters.}
\end{abstract}

\section{Introduction}
\label{sec:intro}

Computing a spanning tree is a classic algorithmic graph problem in computer science. There are numerous objectives to optimize, e.g. the weight, the diameter, the radius or the number of leaves of the spanning tree. In this work, we consider the problem of finding a spanning tree of an undirected unweighted graph with \emph{minimum average distance} (called a \emph{MAD tree}), that is, a spanning tree that minimizes the average distance of all vertex pairs. Note that this is equivalent to minimizing the total sum of pairwise vertex distances also known as the Wiener index; an important concept in chemical graph theory \cite{Wiener47}.

The problem of computing a MAD tree (also known as a \emph{minimum routing cost spanning tree}) was introduced by \citet{Hu74} (there called \emph{optimum distance spanning tree}) and has various applications, e.g. in network design for transportation and communication networks, facility location, and also in genome alignment in biology \cite{FischettiLS02}.
It is known to be NP-hard \cite{JLK78}.
On the algorithmic side, a PTAS was shown by \citet{WLBCRT00}.
Regarding exact algorithms, a cubic algorithm is known for series-parallel graphs~\cite{EC85}. The problem is also known to be linear-time solvable on complete $r$-partite graphs~\cite{BE95}, distance hereditary graphs~\cite{DDGS03}, and interval graphs~\cite{DDR04} and solvable in quadratic time on permutation graphs~\cite{JM12}, trapezoid graphs~\cite{Mondal13} and circular-arc graphs~\cite{JMP15}. 
Moreover, several mixed integer programming formulations were developed \cite{FischettiLS02} (for an overview we refer to \citet{ZCFL19}). Beyond that, many heuristic approaches have been considered (see the survey by \citet{MNSS19}).

We extend these algorithmic results by adopting a \emph{parameterized complexity} view, that is, we develop algorithms where the exponential part of the running time can be confined to certain parameters of the input graph. This leads to efficient algorithms for inputs where the parameter is small (constant).
Concretely, we obtain the following results: The problem is \emph{fixed-parameter tractable} (in FPT) for the vertex integrity and the modular width of the input graph and it is in XP for the treewidth.
Moreover, we give a stronger hardness result showing NP-hardness on split graphs.

\subparagraph*{Further Related Work.}
The structure of MAD trees is well studied in the graph theory literature~\cite{BES97,Entringer,Da00,LYTN13}.
They appear in many different contexts, e.g. in computational geometry~\cite{Abu-AffashCLM24}, distributed computing~\cite{HochuliHW14}, computational biology~\cite{FischettiLS02}, and operations research~\cite{ZCFL19}.
There are also many other variants of special spanning trees of interest, for example, shortest-path trees~\cite{HZ96}, $k$-leaf spanning trees \cite{Ze18}, or minimum-diameter spanning tree \cite{HT95}.

\section{Preliminaries}

For~$n \in \N$, we define~$[n] \coloneqq \{1,2,\ldots,n\}$ with~$[0] \coloneqq \emptyset$.
\toappendix{For a set~$X$, we define~$n_X \coloneq |X|$. 
Two (sub-)graphs~$G$ and~$H$ are called isomorphic if there exists a bijection~$\phi \colon V(G) \rightarrow V(H)$ such that~$uv \in E(G) \Leftrightarrow \phi(u)\phi(v) \in E(H)$.
In this case, $\phi$ is called an isomorphism and we write~$G \cong H$.}

\subparagraph*{Graphs.}\label{sec:Notation}
All graphs in this work are simple and unweighted.
For a graph~$G$, we denote the set of vertices by~$V(G)$ and the set of edges by~$E(G)$.
For an edge~$\{u,v\} \in E(G)$, we write~$uv$ as a shorthand.
We set~$n_G \coloneqq |V(G)|$ and $m_G \coloneqq |E(G)|$.
We denote the degree of a vertex~$v\in V(G)$ by~$\deg_G(v)\coloneqq |N_G(v)|$, where~$N_G(v)\coloneqq\{u\in V(G)\mid uv \in E(G)\}$.
The set~$N_G(v)$ is called the open neighborhood of~$v$ and~$N_G [v] \coloneqq N_G(v) \cup \{v\}$ is called the closed neighborhood.
If the considered graph is clear from the context, then we drop the subscripts.
We write~$H \subseteq G$ to denote that~$H$ is a subgraph of~$G$.
For a subset~$X\subseteq V(G)$, we write~$G-X$ for the subgraph of~$G$ obtained by deleting the vertices from~$X$ and all edges which have nonempty intersection with~$X$.
For an edge~$uv \in E(G)$, we write~$G-uv$ for the subgraph of~$G$ obtained by deleting~$uv$ from~$G$.

A tree~$T$ is an acyclic and connected graph.
A subgraph~$H \subseteq G$ is a spanning tree of~$G$ if~$H$ is a tree and~$V(H) = V(G)$.
Throughout the rest of this work, we assume~$G$ to be connected since we are interested in spanning trees.
We define the distance~$\dist_G(u,v)$ between the two vertices~$u$ and~$v$ in~$G$ as the length of the shortest~$u$-$v$-path in~$G$.
For a vertex~$v$, we write~$\dist_G(v)\coloneqq \sum_{u \in V(G)} \dist_G(u,v)$.
A vertex~$v$ of~$G$ is said to be \emph{median} if it minimizes~$\dist_G(v)$.
For a vertex set~$A \subseteq V$, we define~$\dist_G(A,A) \coloneqq \frac{1}{2} \sum_{u,v \in A} \dist_G(u,v)$ and if~$B \subseteq V$ is another vertex set with~$A \cap B = \emptyset$, then we define~$\dist_G(A,B) \coloneqq \sum_{u \in A, v \in B} \dist_G(u,v)$.
For an edge~$uv$ of a tree~$T$, we define~$T_{uv}^u$ to be the subtree of~$T$ that contains~$u$ when removing the edge~$uv$ from~$T$.
Finally, $P_{u,v}^T$ denotes the unique path between vertices~$u$ and~$v$ in the tree~$T$ and if~$T$ is clear from the context we drop the superscript.

\subparagraph*{Wiener Index.}
The Wiener index of a graph~$G$ is the sum of all pairwise distances, that is, $\W(G)\coloneqq\sum_{u,v\in V}\dist_G(u,v)$.
Several formulas for the Wiener index of a tree~$T$ are known~\cite{Schmuck10}.
In particular, we heavily use the following formula:
For an edge~$uv$, let~$w_T(uv) \coloneqq |V(T_{uv}^u)| \cdot |V(T_{uv}^v)|$.
Then~$\W(T) = \sum_{uv \in E(T)} w_T(uv)$.
 A spanning tree with minimum Wiener index is called a \emph{minimum average distance} (MAD) tree.
\citet{DDR04} proved the following useful lemma about median vertices in MAD trees.
    \begin{lemma}[\citet{DDR04}]
        \label{lem:Dankelmann}
        If~$c$ is a median vertex of a MAD tree~$T$ of a graph~$G$, then every path in~$T$ starting at~$c$ is an induced path in~$G$. Moreover,
         the tree $T$ and the median~$c$ can be chosen such that there is no vertex~$c' \neq c$ with~$N_G[c]\subset N_G[c']$.
    \end{lemma}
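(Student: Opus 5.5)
We prove the two parts separately; both are exchange arguments on the edge-weight formula $\W(T)=\sum_{uv\in E(T)} w_T(uv)$.

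\emph{First part.} Let $T$ be a MAD tree with median $c$, and root $T$ at $c$. The key preliminary fact is the standard characterization of medians in trees: $c$ is a median of $T$ if and only if every component of $T-c$ has at most $n/2$ vertices. Equivalently, for every vertex $x\neq c$, the subtree $T_x$ hanging below $x$ has $|V(T_x)|\le n/2$.

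Suppose some path $P=c=v_0,v_1,\dots,v_k$ in $T$ is not induced in $G$. Then there is an edge $v_iv_j\in E(G)$ with $j\ge i+2$; choose such a pair. Form $T'=T-v_{j-1}v_j+v_iv_j$. This is a spanning tree, since the subtree $T_{v_j}$ is reattached as a child of $v_i$. The following edge weights change:
\begin{itemize}
\item For each edge $v_{l-1}v_l$ with $i<l\le j-1$, the below side loses $s\coloneqq|V(T_{v_j})|$ vertices.
\item The edge $v_{j-1}v_j$ is replaced by $v_iv_j$ with the same split, so its weight is unchanged.
\item No other edge changes.
\end{itemize}

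Consider such an edge on the path, and let $b\ge s+1$ be the size of its below side. Its weight changes from $b(n-b)$ to $(b-s)(n-b+s)$. The difference is
\[
b(n-b)-(b-s)(n-b+s)=s(n-2b+s).
\]
Since $b\le n/2$ by the median property, this difference is at least $s^2>0$. Hence $\W(T')<\W(T)$, contradicting the minimality of $T$. So every path in $T$ starting at $c$ is induced in $G$.

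The main care needed is the median size bound. It must hold for below-sides computed in $T$, which it does because $c$ is a median of $T$ itself.

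\emph{Second part.} Among all pairs $(T,c)$ with $T$ a MAD tree and $c$ a median of $T$, choose one where $|N_G[c]|$ is maximal. Suppose some $c'\neq c$ satisfies $N_G[c]\subsetneq N_G[c']$. By the first part, the $T$-neighbours of $c$ are $G$-neighbours of $c$, and hence adjacent to $c'$. In particular $c'$ is adjacent to $c$ in $G$.

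Build $T'$ by swapping the roles of $c$ and $c'$ in $T$. That is, take the automorphism-like relabelling $\phi$ that exchanges $c$ and $c'$, and let $T'=\phi(T)$. Then $T'$ is isomorphic to $T$, so $\W(T')=\W(T)$, and $c'$ is a median of $T'$. We must check that $T'\subseteq G$:
\begin{itemize}
\item Edges of $T$ at $c$ become edges at $c'$. Their other endpoints lie in $N_G(c)\setminus\{c'\}\subseteq N_G(c')$, so these edges are in $G$.
\item Edges of $T$ at $c'$ become edges at $c$. These are the problematic ones.
\end{itemize}

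To handle the edges at $c$, first modify $T$. Every $T$-neighbour $x\neq c$ of $c'$ with $x\notin N_G(c)$ needs care. I would instead reattach the children of $c'$ directly: let $T''$ be obtained from $T$ by moving all edges at $c$ to $c'$, except the edge $cc'$ if present, and making $c$ a leaf attached to $c'$. By a weight comparison, $c'$ now has $c$'s old subtrees plus its own. I expect this step to be the main obstacle: showing $\W(T'')\le\W(T)$, likely via $\dist_{T''}(c')\le\dist_T(c)$ and median arguments. Then $c'$ is a median of a MAD tree with $|N_G[c']|>|N_G[c]|$, contradicting maximality. Iterating terminates, since the neighbourhood size strictly increases.
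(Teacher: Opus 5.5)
The paper does not prove this lemma; it quotes it from Dahlhaus, Dankelmann and Ravi, so I assess your argument on its own. Your first part is correct and complete. For $v_l\neq c$ the centroid bound gives $b\le n/2$. Only the $j-i-1\ge 1$ path edges strictly between $v_i$ and $v_j$ change their split, and each decreases in weight by $s(n-2b+s)\ge s^2>0$. The second part, however, has a genuine gap, for three reasons:
\begin{itemize}
\item The inequality $\W(T'')\le\W(T)$, on which everything rests, is left open.
\item The claim that $c'$ is a median of $T''$ is asserted without proof.
\item The construction of $T''$ silently needs $cc'\in E(T)$. Writing ``if present'' is not enough: if $cc'\notin E(T)$, moving the edge of $c$ that leads towards $c'$ closes a cycle.
\end{itemize}
Your suggested route via $\dist_{T''}(c')\le\dist_T(c)$ would not close the first gap. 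The distance sum of a single vertex does not control the Wiener index.

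All three points follow quickly from tools you already use.
\begin{itemize}
\item Since $c\in N_G[c]\subseteq N_G[c']$, we have $cc'\in E(G)$. By your first part, the $T$-path from $c$ to $c'$ is induced in $G$, so it is the single edge $cc'$.
\item Let $A\ni c$ and $B\ni c'$ be the vertex sets of the two sides of $cc'$ in $T$. Re-hanging each subtree $T^x_{cx}$, for $x\in N_T(c)\setminus\{c'\}$, at $c'$ leaves the split of every edge other than $cc'$ unchanged. Meanwhile the weight of $cc'$ drops from $w_T(cc')=|A|\cdot|B|$ to $w_{T''}(cc')=n-1\le|A|\cdot|B|$. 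Hence $\W(T'')\le\W(T)$, and $T''$ is a MAD tree.
\item The components of $T''-c'$ are $\{c\}$, the components of $T-c$ other than $B$, and subsets of $B$, where $B$ is itself a component of $T-c$. All of them have at most $n/2$ vertices, so $c'$ is a median of $T''$ by the centroid criterion.
\end{itemize}
This contradicts the maximality of $|N_G[c]|$.

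Incidentally, optimality of $T$ forces $|A|\cdot|B|=n-1$. Since a median is not a leaf for $n\ge 3$, this means $c'$ was already a leaf of $T$, so your abandoned swap $\phi$ would have worked as well.
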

 
We study the following decision problem:
\problemDef{\MADSTl (\MADST)}
{An undirected connected graph $G=(V,E)$ and an integer~$b$.}
{Is there a spanning tree~$T$ of~$G$ with~$\W(T)\le b$?}

\subparagraph*{Parameterized Complexity.}
We assume the reader to be familiar with basic notions from complexity theory like P and NP.
Parameterized complexity is a multivariate approach to measure the time complexity of computational problems~\cite{DF13,CFKLMPPS15}. 
An instance~$(x,k)$ of a parameterized problem consists of a classical instance~$x$ together with a number~$k$ called \emph{parameter}. 
A parameterized problem is called \emph{fixed-parameter tractable} (that is, it is contained in the class FPT) if there is an algorithm deciding an instance~$(x,k)$ in time~$f(k)\cdot|x|^{O(1)}$, where~$f$ is an arbitrary function solely depending on~$k$. 
The class XP contains all parameterized problems which are polynomial-time solvable for constant parameter values, that is, in time $|x|^{f(k)}$. It is known that FPT $\subset$ XP.

\section{Width Parameters}
\label{sec:width-params}

In this section, we analyze the complexity of \MADST with respect to treewidth and modular width.
Our first result is a dynamic program running in polynomial time on graphs of bounded treewidth. 
Interestingly, our second algorithm for \MADST parameterized by modular width is a branching algorithm.

\subsection*{Treewidth.}
Our result generalizes for example the polynomial-time algorithm on series-parallel graphs \cite{EC85}, which have treewidth two.
We use a bottom-up dynamic programming approach on a nice tree-decomposition, which can be computed in~$2^{O(k^3)}n$ time~\cite{CFKLMPPS15}.
To this end, whenever an edge~$uv$ of the graph is seen for the last time, we compute its contribution~$w_T(uv)$ to each potential solution~$T$.
Recall that~$w_T(uv) \coloneqq |V(T_{uv}^u)| \cdot |V(T_{uv}^v)|$ where~$|V(T_{uv}^u)|$ denotes the number of vertices in the tree~$T$ that are closer to~$u$ than to~$v$.
As for a tree~$T$ the Wiener index is~$\W(T) = \sum_{uv \in E(T)} w_T(uv)$, we can simply add all the contributions to a potential solution~$T$ and store the cost of all partial solutions in a table.
Here, consistency in keeping the number of vertices closer to one vertex than a neighboring vertex is the technically involved part responsible for the majority of information we store in the table.
We achieve the following:\footnote{The \appsymb{} indicates the proof is deferred to the appendix due to space restrictions.}

\begin{restatable}[\appref{thm:treewidth-XP}]{theorem}{treewidthXP}
	\label{thm:treewidth-XP}
	\MADST is solvable in $2^{O(2^k)}n^{O(k)}$ time on treewidth-$k$ graphs.
\end{restatable}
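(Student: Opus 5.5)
We propose a bottom-up dynamic program over a nice tree decomposition $(\mathcal{T},\{X_t\})$ of width $k$, computed in $2^{O(k^3)}n$ time. The objective is additive, $\W(T)=\sum_{uv\in E(T)} w_T(uv)$. We therefore charge the term $w_T(uv)$ at the forget node where the edge $uv$ is seen for the last time, that is, where the first of $u,v$ is forgotten. The difficulty is that $w_T(uv)=|V(T^u_{uv})|\cdot|V(T^v_{uv})|$ depends on the global tree, including vertices not yet introduced. Since $|V(T^u_{uv})|+|V(T^v_{uv})|=n$, it suffices to know one side. For each tree edge $uv$ with both ends in the bag, the table will \emph{guess} how many vertices of the final tree lie on the $u$-side. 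These guesses are then verified consistently as vertices are forgotten.

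Concretely, a partial solution at node $t$ is a forest $F$ on $V_t$, the vertices introduced below $t$, using edges of $G[V_t]$. Every component of $F$ must meet $X_t$, since a component that misses the bag can never be connected later. The state at $t$ records three things.
\begin{enumerate}[(a)]
\item The edge set $F\cap E(G[X_t])$ and the partition of $X_t$ induced by the components of $F$. This gives $2^{O(k^2)}$ choices, or with care $2^{O(2^k)}$ after folding in the next item.
\item For every tree edge $uv$ inside the bag, a guessed number $s_{uv}\in[n]$ of vertices of the final tree on the $u$-side. This gives $n^{O(k)}$ choices, because the tree edges inside the bag form a forest and so number at most $k$.
\item For every subset $S\subseteq X_t$, or more precisely for every way the final tree can separate the bag vertices, the number of already forgotten vertices that end up attached on each side. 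For this we record, for every component $C$ of $F$ and every ``side pattern'', the number of forgotten vertices of $C$ hanging at each bag vertex. It is cleanest to store, for each bag vertex $x$, the number $h_x$ of forgotten vertices whose $F$-path to the bag first hits $x$. This gives $n^{O(k)}$ values.
\end{enumerate}
The $2^{O(2^k)}$ factor pays for the combinatorial shape of how the future tree links the bag vertices. We guess the induced ``contracted'' tree structure on $X_t$, in which each bag vertex belongs to a subset of sides, and this is indexed by subsets of the bag.

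The transitions are as follows. At an introduce node we add a vertex $v$ with $h_v=1$. At an introduce-edge step we optionally add $uv$ to $F$ with a fresh guess $s_{uv}$, and we forbid cycles via the partition. At a join node we merge two states that have identical bag edges and guesses. Their partitions combine by union-find, again checking acyclicity, and their $h$-counts add, minus the double-counted bag vertex. At a forget node for $v$ we do three things. First, we require that $v$ is connected in $F$ to some bag vertex, unless the bag becomes empty. Second, $v$ is attached through a unique bag-edge $vx$ or through a bag path, so we transfer $h_v$ into $h_x$. Third, for every tree edge $uv$ in the bag whose last appearance is now, we add $s_{uv}(n-s_{uv})$ to the cost. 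Before discarding the edge, we must check that $s_{uv}$ is correct. The $v$-side of $uv$ in the final tree consists of $v$, the vertices hanging at $v$, and everything reachable from $v$ other than through $u$. Only the part of this already below $t$ is visible. The remainder is future vertices, and their count must be certified by the guessed quantities on the other bag edges incident to vertices of the $v$-side. We therefore encode this certification as linear consistency constraints among the $s$-values and $h$-values. We enforce them at each forget node, and at the root the total must be exactly $n$.

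The main obstacle is precisely this consistency of the side counts $s_{uv}$. The guesses on different bag edges must agree with each other, for example $s_{uv}$ must equal the sum over the $v$-side of the neighbouring guesses. Moreover, a forgotten vertex may lie on the $v$-side of an edge $uv$ only through connections completed later, via future vertices that link different bag components. My first attempt will guess, for each bag vertex pair, whether they end up on the same side of each bag edge. This amounts to guessing the restriction of the final tree's topology to the bag, a tree on at most $k+1$ labelled vertices together with subset labels, which accounts for $2^{O(2^k)}$. With that topology fixed, each $s_{uv}$ decomposes as a sum of $h$-counts and of future counts attributed to bag vertices. We can then verify this decomposition locally at forget nodes and at the root, where no future vertices remain. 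The table has $2^{O(2^k)}n^{O(k)}$ entries, each transition costs polynomial time in the table size, and there are $O(kn)$ nodes. This yields the claimed running time $2^{O(2^k)}n^{O(k)}$ for \Cref{thm:treewidth-XP}.
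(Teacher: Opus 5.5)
\textbf{There is a genuine gap: the state you define in (a)--(c) cannot certify the guesses $s_{uv}$.} Your charging scheme (pay $w_T(uv)$ when the first endpoint of $uv$ is forgotten) matches the paper's. But certifying $s_{uv}$ is exactly the step you leave open (``my first attempt will guess\dots''), and two pieces of information are missing.

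First, the partition of $X_t$ induced by the components of $F$ is too coarse to determine sides. Take bag $\{u,v,w\}$, bag edge $uv$, and one forgotten vertex $z$. The cases $zu,zw\in F$ and $zv,zw\in F$ give the same bag edges, the same partition and, attributing $z$ to $w$, the same $h$-values. Yet $w$ lies on opposite sides of $uv$. Relatedly, $h_x$ and the step ``transfer $h_v$ into $h_x$'' are not well defined for forgotten vertices whose $F$-component meets several bag vertices.

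Second, and more seriously, nothing in the state records how many not-yet-introduced vertices will end up behind each bag vertex. When $u$ is forgotten, $uv$ and its guess $s_{uv}$ leave the state. The $u$-side may still contain future vertices hanging at bag vertices that are linked to $u$ only through forgotten paths and have no bag edges of their own. There is then no later moment at which $s_{uv}$ can be checked. An unchecked guess is fatal: the DP would simply pick $s_{uv}\in\{1,n-1\}$ to minimize $s_{uv}(n-s_{uv})$. The ``future counts attributed to bag vertices'' and the guessed bag topology in your last paragraph are precisely the missing state components. However, they are not part of (a)--(c) and come with no transitions or verification.

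\textbf{How the paper closes this gap.} It indexes the table by $(F,\mathcal{C},\abov,\below)$.
\begin{itemize}
\item For each bag vertex $x$, $\abov(x)$ and $\below(x)$ are \emph{global} counts: the number of vertices of the final tree whose path from $x$ starts with a hidden edge (one with a not-yet-introduced endpoint), respectively with a forgotten edge. Note that $\below(x)$ includes other bag vertices reached through forgotten subtrees, and everything beyond them.
\item $\mathcal{C}$ is the collection of \emph{below connections}, i.e.\ sets of bag vertices joined by a forgotten subtree. This, rather than a guessed future topology, is where the $2^{O(2^k)}$ factor comes from.
\end{itemize}
With these, the $v$-side of a bag edge $uv$ is the sum of $1+\abov(x)+\below(x)$ over the component of $v$ in the bag forest minus $uv$. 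So every edge weight is computable from the state at the moment the edge is forgotten, and no per-edge guesses are needed. The counts are tied together by the local condition $\sum_{x\in V(H)}(1+\abov(x)+\below(x))=n$ for each bag-forest component $H$.

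The guessed above-counts are discharged deterministically:
\begin{itemize}
\item at an introduce node, a neighbour's above-count shrinks by the side of the introduced vertex;
\item at a join node, each child's $\abov$ absorbs the other child's share of $\below$;
\item a vertex can only be forgotten when its above-count is $0$;
\item at the root one requires $\abov=0$ and $\below=n-1$.
\end{itemize}
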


\appendixproof{thm:treewidth-XP}
{\treewidthXP*
We first introduce our notation for nice tree-decomposition and then define our dynamic-programming table.
Let $(G,b)$ be an instance of \MADST.
Assume we have a nice tree decomposition $(D, \{B_t \colon t \in V(D)\})$ of $G$, where $D$ is a tree and for every node $t \in V(D)$, $B_t \subseteq V(G)$ is the \emph{bag} associated with $t$.
The nice tree decomposition satisfies the following conditions:
\begin{enumerate}
    \item $D$ is rooted at a node $r$ such that $|B_r|=1$.
    \item Every node of $D$ has at most two children.
    \item If a node $t$ of $D$ has two children $t'$ and $t''$ , then $B_t=B_{t'}=B_{t''}$; in that case we call $t$ a \emph{join node}.
    \item If a node $t$ of $D$ has exactly one child $t'$ then exactly one of the following holds: \begin{enumerate}
        \item $|B_t|=|B_{t'}|+1$ and $B_{t'} \subset B_t$; in that case we call $t$ an \emph{introduce node}.
        \item $|B_t|=|B_{t'}|-1$ and $B_t \subset B_{t'}$; in that case we call $t$ a \emph{forget node}.
    \end{enumerate}
    \item If a node $t$ of $D$ is a leaf then $|B_t|=1$; we call these \emph{leaf nodes}.
\end{enumerate}

Given a node $t$ in $D$, we let $Y_t$ be the set of all vertices contained in the bags of the subtree rooted at $t$.
For the following definitions, consider $T$ a spanning tree of $G$ and $t$ a node of $D$.

\begin{definition}
    An edge of $T$ is \emph{forgotten} in $t$ if at least one of its endpoints is in $Y_t \setminus B_t$. It is \emph{hidden} in $t$ if at least one of its endpoints is in $V(G) \setminus Y_t$.
\end{definition}

An edge of $T$ is either forgotten, hidden, or in the forest $T[B_t]$.

\begin{definition}
    Let $u \in B_t$. Another vertex $v \in V(G)$ is \emph{below} $u$ in $T$ if the path from $u$ to $v$ in $T$ starts with a forgotten edge. And $v$ is \emph{above} $u$ in $T$ if the path from $u$ to $v$ in $T$ starts with a hidden edge.
\end{definition}

For any connected component in the forest $T[B_t]$, a vertex of $V(G)$ is either in the connected component, or below or above some vertex of the connected component.

\begin{definition}
    Consider a maximal subtree of $T[Y_t]$ containing only forgotten edges and such that no vertex of $B_t$ is an internal vertex. Let $C$ be the leaves of the subtree that are in $B_t$. If $|C|\geq 2$, we say that the vertices in $C$ are \emph{connected from below} in $T$ or that $C$ is a \emph{below connection} in $T$.
\end{definition}

Between any two vertices of a below connection, there is a path consisting only of forgotten edges and vertices of $Y_t \setminus B_t$.
See \Cref{fig:tw-def} for an example.
\begin{figure}[t]
    \centering
    \begin{tikzpicture}[xscale=0.75, yscale=0.5]
        \draw[rounded corners=10pt] (0.5, -1) rectangle (7.5, 2);
        \node[inner sep=1pt,minimum size=14pt, draw=black, circle] (a) at (1,0) {$a$};
        \node[inner sep=1pt,minimum size=14pt, draw=black, circle] (b) at (2,1) {$b$};
        \node[inner sep=1pt,minimum size=14pt, draw=black, circle] (c) at (3,0) {$c$};
        \node[inner sep=1pt,minimum size=14pt, draw=black, circle] (d) at (4,1) {$d$};
        \node[inner sep=1pt,minimum size=14pt, draw=black, circle] (e) at (5,1) {$e$};
        \node[inner sep=1pt,minimum size=14pt, draw=black, circle] (f) at (6,0) {$f$};
        \node[inner sep=1pt,minimum size=14pt, draw=black, circle] (g) at (7,1) {$g$};
        \draw (c)--(d) (e)--(f)--(g);
        \node[circle, fill=blue, inner sep=1.5pt] (f1) at (0.5,-2.5) {};
        \node[circle, fill=blue, inner sep=1.5pt] (f2) at (1,-2) {};
        \node[circle, fill=blue, inner sep=1.5pt] (f3) at (2,-2) {};
        \node[circle, fill=blue, inner sep=1.5pt] (f4) at (2.5,-2.5) {};
        \node[circle, fill=blue, inner sep=1.5pt] (f5) at (4,-2) {};
        \node[circle, fill=blue, inner sep=1.5pt] (f6) at (5,-2) {};
        \node[circle, fill=blue, inner sep=1.5pt] (f7) at (7,-2) {};
        \node[circle, fill=blue, inner sep=1.5pt] (f8) at (7.5,-2) {};
        \draw[blue] (f1)--(f2)--(f3)--(f4) (f7)--(f8) (a)--(f2) (b)--(f3) (c)--(f3) (c)--(f5) (f)--(f6) (f)--(f7);
        \node[circle, fill=red, inner sep=1.5pt] (h1) at (2,3) {};
        \node[circle, fill=red, inner sep=1.5pt] (h2) at (5,3) {};
        \node[circle, fill=red, inner sep=1.5pt] (h3) at (5.5,3) {};
        \node[circle, fill=red, inner sep=1.5pt] (h4) at (6,2.5) {};
        \node[circle, fill=red, inner sep=1.5pt] (h5) at (7,3) {};
        \draw[red] (h2)--(h3)--(h4) (b)--(h1) (d)--(h2) (f)--(h2) (g)--(h5);
        \draw [decorate, decoration = {brace}] (8,3.5) --  (8,2.2);
        \node (l1) at (8,2.9) [label=right:{\footnotesize hidden vertices and edges}] {};
        \draw [decorate, decoration = {brace}] (8,2) --  (8,-1);
        \node (l2) at (8,0.5) [label=right:{\footnotesize $T[B_t]$}] {};
        \draw [decorate, decoration = {brace}] (8,-1.2) --  (8,-3);
        \node (l3) at (8,-2) [label=right:{\footnotesize forgotten vertices and edges}] {};
        \begin{scope}[on background layer]
        \draw[line width=6.5pt, color=lightgray,cap=round] (f1)--(f2)--(f3)--(f4) (a)--(f2) (b)--(f3) (c)--(f3);
        \draw[line width=6.5pt, color=lightgray,cap=round] (c)--(f5);
        \draw[line width=6.5pt, color=lightgray,cap=round] (f)--(f6);
        \draw[line width=6.5pt, color=lightgray,cap=round] (f)--(f7)--(f8);
        \end{scope}
    \end{tikzpicture}
    \caption{The four subtrees of forgotten vertices are highlighted. The set $\{a,b,c\}$ of leaves of the leftmost one is the only below connection in $T$. There is no vertices above or below $e$, there are 13 above and 3 below $f$, and 1 above and none below~$g$.}
    \label{fig:tw-def}
\end{figure}
In the dynamic programming algorithm, we use the formula over edges to compute the cost of a solution: $W(T)=\sum\limits_{uv \in E(T)} w_{T}(uv)$.

\begin{definition}
    The \emph{partial cost} of $T$ in $t$ is the cost of its forgotten edges according to the formula above: $$\sum\limits_{uv \in E(T[Y_t]) \setminus E(T[B_t])} w_{T}(uv).$$
\end{definition}

We provide a dynamic programming algorithm on the nice tree decomposition of $G$ solving \MADST. For each node $t$ of $D$, we keep a table $\mathcal{T}_t$ indexed by: \begin{itemize}
    \item a set of edges $F \subseteq E(G[B_t])$,
    \item a collection $\mathcal{C}$ of subsets of $B_t$ where each subset has size at least 2,
    \item a function $\abov \colon B_t \to \{0, \ldots, n_G-1\}$,
    \item a function $\below \colon B_t \to \{0, \ldots, n_G-1\}$.
\end{itemize}

The value stored in $\mathcal{T}_t[F,\mathcal{C},\abov,\below]$ is the minimum partial cost of a spanning tree $T$ of $G$ satisfying the following: 
\begin{itemize}
    \item The edges of the forest $T[B_t]$ are exactly $F$.
    \item For every $C \in \mathcal{C}$, the vertices of $C$ are connected from below in $T$.
    \item For every $u \in B_t$, $\abov(u)$ is equal to the number of vertices above $u$ in $T$.
    \item For every $u \in B_t$, $\below(u)$ is equal to the number of vertices below $u$ in $T$.
\end{itemize}

If there is no solution satisfying the conditions imposed by $(F,\mathcal{C}, \abov, \below)$, then we set $\mathcal{T}_t[F,\mathcal{C},\abov,\below]\coloneqq\infty$.
Note that $\mathcal{T}_t[F,\mathcal{C},\abov,\below] \neq \infty$ implies: \begin{enumerate}
    \item The set $F$ induces a forest on $B_t$.
    \item For every $C \in \mathcal{C}$, $C$ contains at most one vertex from each connected component of the forest induced by $F$.
    \item For each connected component $H$ of the forest induced by $F$, it holds \begin{align*}
        \sum\limits_{u \in V(H)} (1+\abov(u)+\below(u)) = n_G.
    \end{align*}
\end{enumerate}
We denote these necessary conditions by~$(\star)$.
If $(F,\mathcal{C},\abov,\below)$ does not satisfy $(\star)$, then there is no solution for $(F,\mathcal{C},\abov,\below)$ and the corresponding table entry is automatically set to $\infty$.

The cost of an optimal solution will be retrieved at the end of the algorithm from the table of the root node $r$ as $\mathcal{T}_r[\emptyset, \emptyset, \abov,\below]$, where $\abov(v)=0$ and $\below(v)=n_G-1$ (with $B_r=\{v\}$).

We now describe how to process each node of the tree decomposition.

\subsubsection*{Leaf node.}
Let $t$ be a leaf node with $B_t=\{u\}$. 
In any spanning tree of $G$, all edges incident to $u$ are hidden in $t$ and all other vertices of $V(G)$ are above $u$.
Thus, we set
\begin{equation*}
    \mathcal{T}_t[\emptyset, \emptyset, \abov,\below]\coloneqq
    \begin{cases}
      0, & \text{if}\ \abov(u)=n_G-1 \ \text{and}\ \below(u)=0 \\
      \infty, & \text{otherwise.}
    \end{cases}
\end{equation*}

\subsubsection*{Introduce node.}
Let $t$ be an introduce node introducing vertex $u$ and let $t'$ denote its child. 
Let $(F,\mathcal{C}, \abov, \below)$ satisfying $(\star)$ with respect to $t$.
Since $t$ introduces $u$, there is no edge between $u$ and $Y_{t} \setminus B_t$ in $G$, so in any spanning tree $T$ of $G$, there should be no vertex below $u$ in $T$ with respect to $t$.
So, if $u \in C$ for some $C \in \mathcal{C}$ or if $\below(u) \neq 0$, then we can set $\mathcal{T}_t[F,\mathcal{C}, \abov, \below] \coloneqq \infty$.
We now assume that $(F,\mathcal{C}, \abov, \below)$ satisfies $(\star)$, $u \notin C$ for all $C \in \mathcal{C}$ and $\below(u)=0$.

Then 
\begin{align}\label{eq:tw_introduce}
    \mathcal{T}_t[F,\mathcal{C},\abov,\below] = \mathcal{T}_{t'}[F \setminus \{e : u \in e\},\mathcal{C},\abov',\below'],
\end{align}
where $\below'$ is the restriction of $\below$ to $B_{t'}$ and $\abov'\colon B_{t'}\to\{0,\ldots,n_G-1\}$ is given by
\begin{equation*}
    \abov'(v)\coloneqq
    \begin{cases}
      \abov(v), & \text{if}\ uv \notin F \\
      \abov(v)+\sum\limits_{x \in V(F^{u}_{uv})} (1+\abov(x)+\below(x)), & \text{otherwise}
    \end{cases},
\end{equation*}
where $F^{u}_{uv}$ denotes the connected component containing $u$ in the forest induced by $F \setminus \{uv\}$.

\begin{proof}[of \Cref{eq:tw_introduce}]
    Consider a spanning tree $T$ of $G$ satisfying the conditions $(F, \mathcal{C}, \abov, \below)$ with respect to $t$, where $\below(u)=0$ and $u \notin C$ for all $C \in \mathcal{C}$. See \Cref{fig:tw-intro-node} for an illustration. 
    
    Since no edge is forgotten between $t'$ and $t$, the partial cost of $T$ is the same with respect to $t$ and $t'$. 
    The edges of the forest $T[B_{t'}]$ are those of $F$ except those containing $u$ which are hidden with respect to $t'$.
    
    No edge or vertex is forgotten, so below connections in $T$ are identical with respect to $t$ and $t'$ and for every $v \in B_{t'} \subset B_t$, the same vertices are below $v$ in $T$ with respect to $t'$ and $t$.
    
    Let $v \in B_{t'}$.
    If $v$ is not a neighbor of $u$ in $T$, then the hidden edges incident to $v$ are identical with respect to $t$ and $t'$, so the same vertices are above $v$ in $T$ with respect to $t$ and $t'$.
    If $v$ is a neighbor of $u$ in $T$ then the edge $uv$ is hidden with respect to $t'$ but is in $E(T[B_t])$.
    All the vertices in the connected component of $T-uv$ containing $u$ are above $v$ in $T$ with respect to $t'$ but not with respect to $t$. 
    Let $T[B_t]^{u}_{uv}$ be the connected component of $T[B_t] - uv$ containing $u$.
    All these vertices above $v$ with respect to $t'$ but not to $t$ are either in $T[B_t]^{u}_{uv}$, or above or below some vertex of $T[B_t]^{u}_{uv}$, so their number is given by 
    \begin{align*}
        \sum\limits_{x \in V(T[B_t]^{u}_{uv})} 1+\abov(x)+\below(x).
    \end{align*}
    The number of vertices above $v$ in $T$ with respect to $t'$ is then
    \begin{align*}
        \abov(v)+\sum\limits_{x \in V(T[B_t]^{u}_{uv})} 1+\abov(x)+\below(x).
    \end{align*}
\end{proof}

\begin{figure}[t]
    \centering
    \begin{tikzpicture}[xscale=0.75, yscale=0.5]
        \draw[rounded corners=10pt] (0.5, -1) rectangle (9, 3);
        \draw[loosely dashed] (0.5,1.5)--(9,1.5);
        \node[inner sep=1pt,minimum size=12pt, draw=black, circle] (u) at (5,2.25) {$u$};
        \node[inner sep=1pt,minimum size=12pt, draw=black, circle] (v1) at (3,0) {$v$};
        \node[circle, fill=black, inner sep=1.5pt] (v2) at (4,0) {};
        \node[circle, fill=black, inner sep=1.5pt] (v3) at (6,0) {};
        \node[circle, fill=black, inner sep=1.5pt] (a) at (1,0) {};
        \node[circle, fill=black, inner sep=1.5pt] (b) at (2,1) {};
        \node[circle, fill=black, inner sep=1.5pt] (c) at (5,0) {};
        \node[circle, fill=black, inner sep=1.5pt] (d) at (5,1) {};
        \node[circle, fill=black, inner sep=1.5pt] (e) at (7,0) {};
        \node[circle, fill=black, inner sep=1.5pt] (f) at (7,1) {};
        \node[circle, fill=black, inner sep=1.5pt] (g) at (8,0.5) {};
        \node[circle, fill=black, inner sep=1.5pt] (h) at (8.5,0.5) {};
        \draw (a)--(b) (v1)--(u)--(v2) (u)--(v3) (v2)--(c)--(d) (e)--(v3)--(f);
        \node[circle, fill=blue, inner sep=1.5pt] (f1) at (1,-2) {};
        \node[circle, fill=blue, inner sep=1.5pt] (f2) at (3,-2) {};
        \node[circle, fill=blue, inner sep=1.5pt] (f3) at (5,-1.5) {};
        \node[circle, fill=blue, inner sep=1.5pt] (f6) at (8,-2) {};
        \draw[blue] (a)--(f1) (v1)--(f2) (f2)--(f6)--(g) (h)--(f6) (c)--(f3);
        \node[circle, fill=red, inner sep=1.5pt] (h1) at (2,4) {};
        \node[circle, fill=red, inner sep=1.5pt] (h2) at (3,4) {};
        \node[circle, fill=red, inner sep=1.5pt] (h3) at (7,4) {};
        \node[circle, fill=red, inner sep=1.5pt] (h4) at (8,4) {};
        \draw[red] (h1)--(h2) (a)--(h1) (u)--(h2) (f)--(h3) (g)--(h4);
        \draw [decorate, decoration = {brace}] (9.2,1.5) --  (9.2,-1);
        \node (l1) at (9,0.25) [label=right:{\footnotesize $B_{t'}$}] {};
        \draw [decorate, decoration = {brace}] (10,3) --  (10,-1);
        \node (l2) at (10,1) [label=right:{\footnotesize $B_t$}] {};
        \begin{scope}[on background layer]
        \filldraw[line join=round,draw=lightgray,line width=15pt, fill=lightgray] (u.center)--(f.center)--(e.center)--(v2.center)--cycle;
        \end{scope}
    \end{tikzpicture}
    \caption{The vertex $u$ is introduced in $t$ and thus only has neighbors in $B_{t'}$ or above. The connected component $T[B_t]^{u}_{uv}$ of the forest $T[B_t]-uv$ containing $u$ is highlighted in grey. The number of vertices above $v$ with respect to $t$ is 0 while it is 14 with respect to $t'$.}
    \label{fig:tw-intro-node}
\end{figure}

\subsubsection*{Join node.}
Let $t$ be a join node with children $t'$ and $t''$.

Let $(F,\mathcal{C}, \abov, \below)$ satisfying $(\star)$. Then \begin{align}\label{eq:tw_join}
    \mathcal{T}_t[F,\mathcal{C}, \abov, \below]= \min_{\substack{\mathcal{C}' \subseteq \mathcal{C} \\ 0 \leq \below' \leq \below}} \{\mathcal{T}_{t'}[F,\mathcal{C}', \abov+\below-\below', \below'] + \mathcal{T}_{t''}[F,\mathcal{C} \setminus \mathcal{C}', \abov+\below', \below-\below'] \}
\end{align}

\begin{proof}[of \Cref{eq:tw_join}]
    Consider a spanning tree $T$ of $G$ satisfying the conditions $(F, \mathcal{C}, \abov, \below)$ with respect to $t$.
    See \Cref{fig:tw-join-node} for an illustration.

    An edge forgotten with respect to $t$ has been forgotten in exactly one of the two subtrees rooted in the children $t'$ and $t''$.
    So the minimum partial cost of a solution $\mathcal{T}_t[F, \mathcal{C}, \abov, \below]$ is obtained by summing two minimum partial costs stored in the children: 
    \begin{align*}
        \mathcal{T}_t[F,\mathcal{C}, \abov, \below]= \mathcal{T}_{t'}[F', \mathcal{C}', \abov', \below'] + \mathcal{T}_{t''}[F'', \mathcal{C}'', \abov'', \below'']
    \end{align*}
    for some suitable indices $(F', \mathcal{C}', \abov', \below')$ and $(F'', \mathcal{C}'', \abov'', \below'')$.

    Since $T[B_{t'}]=T[B_{t''}]=T[B_t]$, we should have $F'=F''=F$.
    Consider a below connection in $T$ with respect to $t$. 
    By definition, all the forgotten edges of the associated subtree in $T$ are forgotten in the same subtree rooted either in $t'$ or in $t''$, so we should have $\mathcal{C}''=\mathcal{C} \setminus \mathcal{C}'$.

    Let $u \in B_t$ and $v \in V$. 
    If $v$ is below $u$ in $T$ with respect to $t$, the path from $u$ to $v$ in $T$ starts with an edge forgotten with respect to $t$.
    If that edge was forgotten in the subtree rooted in $t'$, then it is a hidden edge with respect to $t''$, so $v$ is above $u$ with respect to $t''$.
    Conversely, if $v$ is below $u$ with respect to $t''$, then $v$ is above $u$ with respect to $t'$.
    If $v$ is above $u$ in $T$ with respect to $t$, then the path from $u$ to $v$ in $T$ starts with an edge hidden with respect to $t$, so that edge is also hidden with respect to $t'$ and $t''$.
    So for each $u \in B_t$, we should have
    \begin{align*}
        & 0 \leq \below'(u) \leq \below(u) \\
        & \below''(u)=\below(u)-\below'(u) \\
        & \abov'(u)=\abov(u)+\below''(u)=\abov(u)+\below(u)-\below'(u) \\
        & \abov''(u)=\abov(v)+\below'(v).
    \end{align*}

    For every $\mathcal{C}' \subseteq \mathcal{C}$ and for every function $\below' \colon B_{t'} \to \{à, \ldots, n_G-1 \}$ satisfying $0 \leq \below'(u) \leq \below(u)$ for each $u \in B_{t'}$, with the above relations there is a single pair of indices $(F', \mathcal{C}', \abov', \below')$ and $(F'', \mathcal{C}'', \abov'', \below'')$ such that $\mathcal{T}_t[F,\mathcal{C}, \abov, \below]= \mathcal{T}_{t'}[F', \mathcal{C}', \abov', \below'] + \mathcal{T}_{t''}[F'', \mathcal{C}'', \abov'', \below'']$.
    We keep the minimum partial cost over every possibility for $\mathcal{C}'$ and $\below'$.
\end{proof}

\begin{figure}[t]
    \centering
    \begin{tikzpicture}[xscale=0.75, yscale=0.4]
        \draw[rounded corners=10pt] (0.5, -1) rectangle (6.5, 2);
        \node[inner sep=1pt,minimum size=10pt, draw=black, circle] (a) at (1,0) {\scriptsize $a$};
        \node[inner sep=1pt,minimum size=10pt, draw=black, circle] (b) at (3,1) {\scriptsize $b$};
        \node[inner sep=1pt,minimum size=10pt, draw=black, circle] (c) at (2,0) {\scriptsize $c$};
        \node[inner sep=1pt,minimum size=10pt, draw=black, circle] (d) at (4,0) {\scriptsize $d$};
        \node[inner sep=1pt,minimum size=10pt, draw=black, circle] (e) at (5,1) {\scriptsize $e$};
        \node[inner sep=1pt,minimum size=10pt, draw=black, circle] (f) at (6,0) {\scriptsize $f$};
        \draw (b)--(c) (d)--(e)--(f);
        \node[inner sep=1pt,minimum size=10pt, draw=black, circle] (j) at (1.5,-2) {\scriptsize $j$};
        \node[inner sep=1pt,minimum size=10pt, draw=black, circle] (k) at (5,-2) {\scriptsize $k$};
        \node[inner sep=1pt,minimum size=10pt, draw=black, circle] (l) at (6,-2) {\scriptsize $l$};
        \node[inner sep=1pt,minimum size=10pt, draw=black, circle] (m) at (7,-2) {\scriptsize $m$};
        \draw (a)--(j)--(c) (f)--(k) (f)--(l) (f)--(m);
        \node[inner sep=1pt,minimum size=10pt, draw=black, circle] (g) at (3,3) {\scriptsize $g$};
        \node[inner sep=1pt,minimum size=10pt, draw=black, circle] (h) at (4,3) {\scriptsize $h$};
        \node[inner sep=1pt,minimum size=10pt, draw=black, circle] (i) at (5,3) {\scriptsize $i$};
        \draw (b)--(g) (h)--(e) (g)--(h)--(i);

        \begin{scope}[xshift=-5cm,yshift=-7cm]
        \draw[rounded corners=10pt] (0.5, -1) rectangle (6.5, 2);
        \node[inner sep=1pt,minimum size=10pt, draw=black, circle] (a') at (1,0) {\scriptsize $a$};
        \node[inner sep=1pt,minimum size=10pt, draw=black, circle] (b') at (3,1) {\scriptsize $b$};
        \node[inner sep=1pt,minimum size=10pt, draw=black, circle] (c') at (2,0) {\scriptsize $c$};
        \node[inner sep=1pt,minimum size=10pt, draw=black, circle] (d') at (4,0) {\scriptsize $d$};
        \node[inner sep=1pt,minimum size=10pt, draw=black, circle] (e') at (5,1) {\scriptsize $e$};
        \node[inner sep=1pt,minimum size=10pt, draw=black, circle] (f') at (6,0) {\scriptsize $f$};
        \draw (b')--(c') (d')--(e')--(f');
        \node[inner sep=1pt,minimum size=10pt, draw=black, circle] (j') at (1.5,-2) {\scriptsize $j$};
        \node[inner sep=1pt,minimum size=10pt, draw=black, circle] (l') at (6,-2) {\scriptsize $l$};
        \node[inner sep=1pt,minimum size=10pt, draw=black, circle] (m') at (7,-2) {\scriptsize $m$};
        \draw (a')--(j')--(c') (f')--(l') (f')--(m');
        \node[inner sep=1pt,minimum size=10pt, draw=black, circle] (g') at (3,3) {\scriptsize $g$};
        \node[inner sep=1pt,minimum size=10pt, draw=black, circle] (h') at (4,3) {\scriptsize $h$};
        \node[inner sep=1pt,minimum size=10pt, draw=black, circle] (i') at (5,3) {\scriptsize $i$};
        \node[inner sep=1pt,minimum size=10pt, draw=black, circle] (k') at (6,3) {\scriptsize $k$};
        \draw (b')--(g') (h')--(e') (g')--(h')--(i') (f')--(k');
        \end{scope}
        \begin{scope}[xshift=5cm,yshift=-7cm]
        \draw[rounded corners=10pt] (0.5, -1) rectangle (6.5, 2);
        \node[inner sep=1pt,minimum size=10pt, draw=black, circle] (a'') at (1,0) {\scriptsize $a$};
        \node[inner sep=1pt,minimum size=10pt, draw=black, circle] (b'') at (3,1) {\scriptsize $b$};
        \node[inner sep=1pt,minimum size=10pt, draw=black, circle] (c'') at (2,0) {\scriptsize $c$};
        \node[inner sep=1pt,minimum size=10pt, draw=black, circle] (d'') at (4,0) {\scriptsize $d$};
        \node[inner sep=1pt,minimum size=10pt, draw=black, circle] (e'') at (5,1) {\scriptsize $e$};
        \node[inner sep=1pt,minimum size=10pt, draw=black, circle] (f'') at (6,0) {\scriptsize $f$};
        \draw (b'')--(c'') (d'')--(e'')--(f'');
        \node[inner sep=1pt,minimum size=10pt, draw=black, circle] (k'') at (5,-2) {\scriptsize $k$};
        \draw (f'')--(k'');
        \node[inner sep=1pt,minimum size=10pt, draw=black, circle] (g'') at (3,3) {\scriptsize $g$};
        \node[inner sep=1pt,minimum size=10pt, draw=black, circle] (h'') at (4,3) {\scriptsize $h$};
        \node[inner sep=1pt,minimum size=10pt, draw=black, circle] (i'') at (5,3) {\scriptsize $i$};
        \node[inner sep=1pt,minimum size=10pt, draw=black, circle] (j'') at (1.5,3) {\scriptsize $j$};
        \node[inner sep=1pt,minimum size=10pt, draw=black, circle] (l'') at (6,3) {\scriptsize $l$};
        \node[inner sep=1pt,minimum size=10pt, draw=black, circle] (m'') at (7,3) {\scriptsize $m$};
        \draw (a'')--(j'')--(c'') (b'')--(g'') (h'')--(e'') (g'')--(h'')--(i'') (f'')--(l'') (f'')--(m'');

        \end{scope}
    \end{tikzpicture}
    \caption{A solution $T$ with respect to a join node (at the top) and its children (below). The below connection $\{a,c\}$ is also a below connection in the left child but not in the right child. There are three vertices below $f$ in the parent node, two of which are below $f$ in the left node and the last one is below $f$ in the right node.}
    \label{fig:tw-join-node}
\end{figure}

\subsubsection*{Forget node.}
Let $t$ be a forget node forgetting vertex $u$ and let $t'$ denote the child of $t$.
Let $(F,\mathcal{C}, \abov, \below)$ satisfying $(\star)$ with respect to $t$. Then \begin{align}\label{eq:tw_forget}
    \mathcal{T}_t[F,\mathcal{C}, \abov, \below]= \min\limits_{\substack{\text{suitable} \\ (F',\mathcal{C}', \abov', \below')}} \{ \mathcal{T}_{t'}[F',\mathcal{C}', \abov', \below'] + \omega \}
\end{align}
where $\omega$ is the cost of the edges forgotten when $u$ is forgotten.

Since $u$ is forgotten by $t$, there is no edge between $u$ and $V \setminus Y_{t'}$, so in any spanning tree $T$ of $G$ there should be no vertex above $u$ in $T$ with respect to $t'$. So in any suitable $(F',\mathcal{C}', \abov', \below')$, we should have $\abov'(u)=0$.

Let $T$ be a spanning tree of $G$ satisfying $(F,\mathcal{C}, \abov, \below)$ with respect to $t$.
Since there is no vertex above $u$ in $T$ with respect to $t'$, $u$ must be connected to vertices of $B_t$ with edges or below connections in $T$ with respect to $t'$.
We split the set of suitable $(F',\mathcal{C}', \abov', \below')$ into three cases according to how $u$ is connected to vertices of $B_t$: \begin{itemize}
    \item there is a unique below connection $\{u,v\}$ between $u$ and some other vertex $v \in B_t$ in $T$ with respect to $t'$;
    \item there is a unique edge $uv$ between $u$ and some vertex $v \in B_t$ in $T$ with respect to $t'$
    \item there are several edges or below connections between $u$ and vertices of $B_t$ in $T$ with respect to $t'$.
\end{itemize}
Each case has its own recursive formula, so we will compute the minimum for each case and then keep the best among these three.

Case 1: Among the solutions where $u$ is connected to a unique vertex $v \in B_t$ via a below connection $\{u,v\} \in \mathcal{C}'$ with respect to $t'$, the one with minimum partial cost has partial cost 
\begin{align}
    \min\limits_{v \in N_G(u) \cap B_{t}} \{ \mathcal{T}_{t'}[F,\mathcal{C} \cup \{\{u,v\}\}, \abov',\below'] \} \label{eqt:forget-node-1}
\end{align}
where $\abov'$ is the extension of $\abov$ to $B_{t'}$ with $\abov'(u)=0$ and $\below'$ is the extension of $\below$ to $B_{t'}$ with $\below'(u)=n_G-1$. 

\begin{proof}[of case 1 -- \Cref{eqt:forget-node-1}]
    Let $T$ be a solution satisfying $(F,\mathcal{C}, \abov, \below)$, where there is no vertex above $u$ in $T$ with respect to $t'$, $u$ has no neighbor in $B_t$ and $u$ has a single below connection $\{u,v\}, v \in B_t$ with respect to $t'$.
    See \Cref{fig:tw-forget-node-1} for an illustration of this case.

\begin{figure}[t]
    \centering
    \begin{tikzpicture}[xscale=0.75, yscale=0.5]
        \draw[rounded corners=10pt] (0.5,-0.75) rectangle (7.5,2.5);
        \draw[loosely dashed] (0.5,0.5)--(7.5,0.5);
        \node[inner sep=1pt,minimum size=12pt, draw=black, circle] (u) at (6,-0.1) {$u$};
        \node[inner sep=1pt,minimum size=12pt, draw=black, circle] (v) at (7,1.75) {$v$};
        \node[circle, fill=black, inner sep=1.5pt] (a) at (1,1) {};
        \node[circle, fill=black, inner sep=1.5pt] (b) at (2,2) {};
        \node[circle, fill=black, inner sep=1.5pt] (c) at (3,1) {};
        \node[circle, fill=black, inner sep=1.5pt] (d) at (4,2) {};
        \node[circle, fill=black, inner sep=1.5pt] (e) at (5,1) {};
        \node[circle, fill=black, inner sep=1.5pt] (f) at (6,1) {};
        \draw (a)--(b) (e)--(f)--(v);
        \node[circle, fill=blue, inner sep=1.5pt] (f1) at (1,-1.5) {};
        \node[circle, fill=blue, inner sep=1.5pt] (f2) at (2,-1.5) {};
        \node[circle, fill=blue, inner sep=1.5pt] (f3) at (3,-1.5) {};
        \node[circle, fill=blue, inner sep=1.5pt] (f4) at (4,-1.5) {};
        \node[circle, fill=blue, inner sep=1.5pt] (f5) at (5,-1.5) {};
        \node[circle, fill=blue, inner sep=1.5pt] (f6) at (6,-1.5) {};
        \node[circle, fill=blue, inner sep=1.5pt] (f7) at (7,-1.5) {};
        \node[circle, fill=blue, inner sep=1.5pt] (f8) at (8,-1.5) {};
        \draw[blue] (a)--(f1) (a)--(f2) (c)--(f2) (e)--(f4) (u)--(f6) (v)--(f7) (v)--(f8) (f2)--(f3)--(f4) (f5)--(f6)--(f7);
        \node[circle, fill=red, inner sep=1.5pt] (h1) at (2,3.5) {};
        \node[circle, fill=red, inner sep=1.5pt] (h2) at (4,3.5) {};
        \node[circle, fill=red, inner sep=1.5pt] (h3) at (5,3.5) {};
        \node[circle, fill=red, inner sep=1.5pt] (h4) at (7,3.5) {};
        \draw[red] (h1)--(b) (d)--(h2) (e)--(h3) (v)--(h4) (h2)--(h3);
        \draw [decorate, decoration = {brace}] (7.7,2.5) --  (7.7,0.25);
        \node (l1) at (7.6,1.25) [label=right:{\footnotesize $B_{t}$}] {};
        \draw [decorate, decoration = {brace}] (8.5,2.5) --  (8.5,-0.75);
        \node (l2) at (8.4,0.75) [label=right:{\footnotesize $B_{t'}$}] {};
        \begin{scope}[on background layer]
        \draw[line width=6.5pt, color=lightgray,cap=round] (f5)--(f6) (f6)--(u) (f6)--(f7) (f7)--(v);
        \end{scope}
    \end{tikzpicture}
    \caption{Case 1: $u$ is connected to $B_t$ in $T$ with a single below connection $\{u,v\}$. All vertices (except $u$) are below $u$ with respect to $t'$ and no vertex is above. No edge is forgotten when $u$ is forgotten.}
    \label{fig:tw-forget-node-1}
\end{figure}
    
    Since forgotten edges are identical with respect to $t$ and $t'$, the partial cost of $T$ with respect to $t$ is the same as its partial cost with respect to $t'$, below connections in $T$ (other than $\{u,v\}$) are identical with respect to $t$ and $t'$, and for every $w \in B_t$, the same vertices are below $w$ in $T$ with respect to $t$ and $t'$.
    
    Since hidden edges are identical with respect to $t$ and $t'$, for every $w \in B_t$ the same vertices are above $w$ in $T$ with respect to $t$ and $t'$.
    
    Since $\{u\}$ is a connected component in $T[B_{t'}]$ and there is no vertex above $u$ in $T$ with respect to $t'$, there must be $n_G-1$ vertices below $u$ in $T$ with respect to $t'$.
    Hence \Cref{eqt:forget-node-1}.
\end{proof}

Case 2: Among the solutions where $u$ is connected to a unique vertex $v \in B_t$ via an edge $uv$, the one with minimum partial cost has partial cost
\begin{align}
    \min\limits_{\substack{v \in N_G(u) \cap B_{t} \\ i \in \{0, \ldots, n_G-1\}}} \{ \mathcal{T}_{t'}[F \cup \{uv\},\mathcal{C}, \abov',\below'] + (i+1)(n_G - (i+1)) \} \label{eqt:forget-node-2}
\end{align}
where $\abov'$ is the extension of $\abov$ to $B_{t'}$ with $\abov'(u)=0$, and for each $v \in N_G(u) \cap B_{t}$, for each $i \in \{0, \ldots, n_G-2\}$, $\below'$ is given by
\begin{equation*}
    \below'(w)=
    \begin{cases}
      \below(w) & \text{if}\ w \neq u,v \\
      i & \text{if}\ w=u \\
      \below(v)-(i+1)\ & \text{if}\ w=v.
    \end{cases}
\end{equation*}

\begin{proof}[of case 2 -- \Cref{eqt:forget-node-2}]
    Let $T$ be a solution satisfying $(F,\mathcal{C}, \abov, \below)$ where there is no vertex above $u$ in $T$ with respect to $t'$, where $u$ has a single neighbor $v \in B_t$ and $u$ appears in no below connection with respect to $t'$.
    See \Cref{fig:tw-forget-node-2} for an illustration of this case.
    
\begin{figure}[t]
    \centering
    \begin{tikzpicture}[xscale=0.75, yscale=0.5]
        \draw[rounded corners=10pt] (0.5,-0.75) rectangle (7.5,2.5);
        \draw[loosely dashed] (0.5,0.5)--(7.5,0.5);
        \node[inner sep=1pt,minimum size=12pt, draw=black, circle] (u) at (6,-0.1) {$u$};
        \node[inner sep=1pt,minimum size=12pt, draw=black, circle] (v) at (7,1.75) {$v$};
        \node[circle, fill=black, inner sep=1.5pt] (a) at (1,1) {};
        \node[circle, fill=black, inner sep=1.5pt] (b) at (2,2) {};
        \node[circle, fill=black, inner sep=1.5pt] (c) at (3,1) {};
        \node[circle, fill=black, inner sep=1.5pt] (d) at (4,2) {};
        \node[circle, fill=black, inner sep=1.5pt] (e) at (5,1) {};
        \node[circle, fill=black, inner sep=1.5pt] (f) at (6,1) {};
        \draw (a)--(b) (e)--(f)--(v);
        \draw[blue] (u)--(v);
        \node[circle, fill=blue, inner sep=1.5pt] (f1) at (1,-1.5) {};
        \node[circle, fill=blue, inner sep=1.5pt] (f2) at (2,-1.5) {};
        \node[circle, fill=blue, inner sep=1.5pt] (f3) at (3,-1.5) {};
        \node[circle, fill=blue, inner sep=1.5pt] (f4) at (4,-1.5) {};
        \node[circle, fill=blue, inner sep=1.5pt] (f5) at (5,-1.5) {};
        \node[circle, fill=blue, inner sep=1.5pt] (f6) at (6,-1.5) {};
        \node[circle, fill=blue, inner sep=1.5pt] (f7) at (7,-1.5) {};
        \node[circle, fill=blue, inner sep=1.5pt] (f8) at (8,-1.5) {};
        \draw[blue] (a)--(f1) (a)--(f2) (c)--(f2) (e)--(f4) (u)--(f6) (v)--(f7) (v)--(f8) (f2)--(f3)--(f4) (f5)--(f6);
        \node[circle, fill=red, inner sep=1.5pt] (h1) at (2,3.5) {};
        \node[circle, fill=red, inner sep=1.5pt] (h2) at (4,3.5) {};
        \node[circle, fill=red, inner sep=1.5pt] (h3) at (5,3.5) {};
        \node[circle, fill=red, inner sep=1.5pt] (h4) at (7,3.5) {};
        \draw[red] (h1)--(b) (d)--(h2) (e)--(h3) (v)--(h4) (h2)--(h3);
        \draw [decorate, decoration = {brace}] (7.7,2.5) --  (7.7,0.25);
        \node (l1) at (7.6,1.25) [label=right:{\footnotesize $B_{t}$}] {};
        \draw [decorate, decoration = {brace}] (8.5,2.5) --  (8.5,-0.75);
        \node (l2) at (8.4,0.75) [label=right:{\footnotesize $B_{t'}$}] {};
        \begin{scope}[on background layer]
        \draw[line width=6.5pt, color=lightgray,cap=round] (u)--(v);
        \end{scope}
    \end{tikzpicture}
    \caption{Case 2: $u$ is connected to $B_t$ with a single edge in $T$ (highlighted in grey). There are 2 vertices below $u$ with respect to $t'$, so there are 5 vertices below $v$ with respect to $t$ (2 vertices are below $v$ with respect to $t$ and $t'$, plus the 2 vertices below $u$ and $u$ itself). The cost of $uv$ is $3(n_G-3)$ since there are 2 vertices below $u$ when it is forgotten.}
    \label{fig:tw-forget-node-2}
\end{figure}

    Only the edge $uv$ is forgotten when $u$ is forgotten, so the partial cost of $T$ with respect to $t$ is its partial cost with respect to $t'$ plus the cost of the edge $uv$.
    
    The edges of the forest $T[B_{t'}]$ are $F \cup \{uv\}$.
    Below connections are identical with respect to $t$ and $t'$.
    
    Hidden edges are identical with respect to $t$ and $t'$ so for every $w \in B_t$, the same vertices are above $w$ in $T$ with respect to $t$ and $t'$.
    
    For any $w \in B_{t} \setminus \{v\}$, no edge incident to $w$ is forgotten when $u$ is forgotten, so the same vertices are below $w$ in $T$ with respect to $t$ and $t'$.
    For $v$ however, the vertices below $u$ in $T$ with respect to $t'$ and $u$ itself are below $v$ with respect to $t$. 
    If there are $i$ vertices below $u$ with respect to $t'$, then there are $\below(v)-(i+1)$ vertices below $v$ in $T$ with respect to $t'$.
    
    The cost of the edge $uv$ is $(i+1)(n_G - (i+1))$ because there are $i+1$ vertices closer to $u$ than to $v$ in $T$.

    We keep the best solution among all possible neighbors $v \in B_t \cap N_G(u)$ and all possible values of $\below'(u)=i \in \{0, \ldots, n_G-2\}$.
    Hence \Cref{eqt:forget-node-2}.
\end{proof}

Case 3: 
Among the solutions where $u$ has several edges or below connections to vertices of $B_t$ with respect to $t'$, the one with minimum partial cost has partial cost 
\begin{align}
    \min\limits_{\substack{C \in \mathcal{C} \\ X \subseteq C \cap N_G(u) \\ p \in \mathcal{P}(C \setminus X) \\ 0 \leq \below'' \leq \below}} \{ \mathcal{T}_{t'}[F \cup \{uv \colon v \in X\},(\mathcal{C} \setminus C) \cup \{Y \cup \{u\} \colon Y \in p \}, \abov',\below'] + \omega \} \label{eqt:forget-node-3}
\end{align}
where $\mathcal{P}(C \setminus X)$ is the set of partitions of $C \setminus X$, $\below''\colon X \cup \{u\} \to \{0, \ldots, n_G-1 \}$ verifies $0 \leq \below''(v) \leq \below(v)$ for all $v \in X$ and $\below''(u) \in \{0, \ldots, n_G-1-|X|\}$, $\abov'$ is the extension of $\abov$ to $B_{t'}$ with $\abov'(u)=0$, and $\below'$ is given by:
\begin{equation*}
    \below'(v)=
    \begin{cases}
      \below(v) & \text{if}\ v \notin X \cup \{u\} \\
      \below''(v) & \text{if}\ v \in X \cup \{u\}
    \end{cases}
\end{equation*}
and $\omega$ is the cost of all edges between $u$ and $X$. 
Let $v \in X$ and let $F^{v}$ be the connected component containing $v$ in the forest induced by $F$.
The vertices closer to $v$ than to $u$ are those that are in $F^{v}$ or above or below a vertex of $F^{v}$, so their number is given by:
\begin{align*}
    \sum\limits_{x \in V(F^{v})} \abov'(x)+\below'(x)+1
\end{align*}
and the cost of the edge $uv$ is given by:
\begin{align*}
    w_{F'}(uv)=\big(\sum\limits_{x \in V(F^{v})} \abov'(x)+\below'(x)+1 \big)\big(n_G-\sum\limits_{x \in V(F^{v})} \abov'(x)+\below'(x)+1 \big)
\end{align*}
so the cost of all edges forgotten between $t'$ and $t$ is
\begin{align*}
    \omega= \sum\limits_{v \in X} w_{F'}(uv). 
\end{align*}

\begin{proof}[of case 3 -- \Cref{eqt:forget-node-3}]
    Let $T$ be a solution satisfying $(F,\mathcal{C}, \abov, \below)$ where there is no vertex above $u$ in $T$ with respect to $t'$ and such that $u$ has several edges or below connections with vertices of $B_t$ with respect to $t'$.
    See \Cref{fig:tw-forget-node-3} for an illustration of this case.
    
\begin{figure}[t]
    \centering
    \begin{tikzpicture}[xscale=0.75, yscale=0.5]
        \draw[rounded corners=10pt] (0.5,-0.75) rectangle (7.5,2.75);
        \draw[loosely dashed] (0.5,0.5)--(7.5,0.5);
        \node[inner sep=1pt,minimum size=12pt, draw=black, circle, fill=lightgray] (u) at (4,-0.1) {$u$};
        \node[inner sep=1pt,minimum size=12pt, draw=black, circle] (v) at (1,1.2) {$v$};
        \node[inner sep=1pt,minimum size=12pt, draw=black, circle] (w) at (2,1.2) {$w$};
        \node[inner sep=1pt,minimum size=12pt, draw=black, circle] (x) at (3,1.2) {$x$};
        \node[inner sep=1pt,minimum size=12pt, draw=black, circle, line width=1pt] (y) at (4,2) {$y$};
        \node[inner sep=1pt,minimum size=12pt, draw=black, circle, line width=1pt] (z) at (5,1.2) {$z$};
        \node[circle, fill=black, inner sep=1.5pt] (a) at (2,2) {};
        \node[circle, fill=black, inner sep=1.5pt] (b) at (6,1) {};
        \node[circle, fill=black, inner sep=1.5pt] (c) at (6,2) {};
        \node[circle, fill=black, inner sep=1.5pt] (d) at (7,1) {};
        \node[circle, fill=black, inner sep=1.5pt] (e) at (7,2) {};
        \draw (a)--(v) (z)--(b)--(c) (d)--(e);
        \draw[blue] (y)--(u) (z)--(u);
        \node[circle, fill=blue, inner sep=1.5pt] (f1) at (1.5,-1.5) {};
        \node[circle, fill=blue, inner sep=1.5pt] (f2) at (3,-1.2) {};
        \node[circle, fill=blue, inner sep=1.5pt] (f3) at (4,-1.7) {};
        \node[circle, fill=blue, inner sep=1.5pt] (f4) at (5,-1.5) {};
        \node[circle, fill=blue, inner sep=1.5pt] (f5) at (6,-1.5) {};
        \draw[blue] (v)--(f1) (w)--(f1) (x)--(f2) (u)--(f2) (u)--(f3) (u)--(f4) (z)--(f5) (f1)--(f3);
        \node[circle, fill=red, inner sep=1.5pt] (h1) at (1,3.5) {};
        \node[circle, fill=red, inner sep=1.5pt] (h2) at (4,3.5) {};
        \node[circle, fill=red, inner sep=1.5pt] (h3) at (6,3.5) {};
        \node[circle, fill=red, inner sep=1.5pt] (h4) at (7,3.5) {};
        \draw[red] (h1)--(v) (y)--(h2) (z)--(h3) (e)--(h4) (h3)--(h4);
        \draw [decorate, decoration = {brace}] (7.7,2.5) --  (7.7,0.25);
        \node (l1) at (7.6,1.25) [label=right:{\footnotesize $B_{t}$}] {};
        \draw [decorate, decoration = {brace}] (8.5,2.5) --  (8.5,-0.75);
        \node (l2) at (8.4,0.75) [label=right:{\footnotesize $B_{t'}$}] {};
        \begin{scope}[on background layer]
        \draw[line width=6.5pt, color=lightgray,cap=round] (z)--(u.center) (y)--(u.center) (u.center)--(f4) (u.center)--(f3) (u.center)--(f2) (f2)--(x) (f3)--(f1) (f1)--(w) (f1)--(v); 
        \end{scope}
    \end{tikzpicture}
    \caption{Case 3: $u$ is connected (with edges and/or below connections) to several vertices in $B_t$ that become a single below connection when $u$ is forgotten. There is a below connection $C$ with respect to $t$, highlighted in grey, that contains $u$. With respect to $t'$, it is divided into $X=\{y,z\}$ the neighbors of $u$ in $B_t$ (highlighted in bold) and $C \setminus X=\{v,w,x\}$ the other vertices in the below connection. These vertices of $C \setminus X$ are partitioned into $\{v,w\}, \{x\}$ and each set of the partition gives a below connection with $u$ with respect to $t'$, here $\{v,w,u\}$ and $\{x,u\}$. The edges between $u$ and $X$ are forgotten when $u$ is forgotten.}
    \label{fig:tw-forget-node-3}
\end{figure}

    Denote $C \in \mathcal{C}$ the below connection with respect to $t$ whose associated subtree contains $u$.
    Denote $X=N_T(u) \cap B_t$ the neighbors of $u$ in $B_t$. 
    The vertices of $X$ are leaves in the subtree associated with $C$, so $X \subseteq N_G(u) \cap C$.
    The partial cost of $T$ with respect to $t$ is its partial cost with respect to $t'$ plus the cost of the edges between $u$ and $X$.
    The edges of the forest $T[B_{t'}]$ are $F \cup \{uv \colon v \in X \}$.
    The vertices of $C \setminus X$ are all involved in a below connection with $u$ with respect to $t'$.
    Denote $Y_1, \ldots, Y_\ell$ the below connections involving $u$ in $T$ with respect to $t'$.
    Then $(Y_1 \setminus \{u\}, \ldots, Y \setminus \{u\})$ is a partition of $C \setminus X$.
    The hidden edges in $T$ are identical with respect to $t$ and $t'$, so for every $w \in B_t$, the vertices above $w$ in $T$ are identical with respect to $t$ and $t'$.
    Let $w \in B_t$. 
    If $w \notin X$, the forgotten edges incident to $w$ are identical with respect to $t$ and $t'$, so the same vertices are below $w$ in $T$ with respect to $t$ and $t'$.
    If $v \in X$, then since the edge $uv$ is forgotten in $t$ but not in $t'$, the number of vertices below $v$ in $T$ with respect to $t'$ is in $\{0, \ldots, \below(v)\}$.
    The number of vertices closer to $v$ than to $u$ in $T$ can be counted with the vertices in the connected component containing $v$ in the forest $T[B_t]$, denoted $T[B_t]^{v}$:
    \begin{align*}
        \sum\limits_{x \in V(T[B_t]^{v})} \abov'(x)+\below'(x)+1
    \end{align*}
    so the cost of that edge is
    \begin{align*}
        (\sum\limits_{x \in V(T[B_t]^{v})} \abov'(x)+\below'(x)+1)(n_G - \sum\limits_{x \in V(T[B_t]^{v})} \abov'(x)+\below'(x)+1)
    \end{align*}
\end{proof}

Among all the possible below connections whose associated subtree contains $u$ $C \in \mathcal{C}$, all the possible sets of neighbors of $u$ $X \subseteq C \cap N_G(u)$, all the possible combinations of below connections between the vertices of $C \setminus X$ $p \in \mathcal{P}(C \setminus X)$ and all the possible values for $\below''$, we keep the best solution.

Among the three sets of suitable indices \eqref{eqt:forget-node-1}, \eqref{eqt:forget-node-2}, \eqref{eqt:forget-node-3}, we store the best partial cost in $\mathcal{T}_t[F,\mathcal{C}, \abov, \below]$.

\subsection*{Running time.}
Denote $k$ the treewidth of the input graph $G$. Assume the nice tree-decomposition has width $k$ and at most $4n_G$ nodes.

The size of a table $\mathcal{T}_t$ is at most $2^{2^{k+1}+k(k+1)}n^{2(k+1)}$.
We can ignore leaves and introduce nodes in the running time, as computing an entry of the table in these cases takes constant time.
For join nodes, computing an entry of the table takes at most $n^{k+1}2^{2^{k+1}}$, so processing a join node takes at most $2^{2^{k+2}+k(k+1)}n^{3(k+1)}$.
For forget nodes, we can focus on case 3. Computing an entry of the table in case 3 takes at most $4^{k+1}(k+1)^{k+3}n^{k+1}$ (including computing the cost of forgotten edges), so processing a forget node takes at most $2^{2^{k+1}+(k+1)(k+2)} (k+1)^{k+3} n^{3(k+1)}$.

With at most $4n$ nodes to process, the algorithm takes at most
$$2^{2^{k+1}+(k+1)(k+2)+2} (k+1)^{k+3} n^{3(k+1)+1} + 2^{2^{k+2}+k(k+1)+2} n^{3(k+1)+1} \in 2^{O(2^k)}n^{O(k)}.$$
}

\subsection*{Modular Width.}
Although less prominent than treewidth, modular width is a by now well-established width parameter.
An advantage over treewidth is that modular width can be computed in~$O(n+m)$ time~\cite{TCHP08}.
As usual, modular width comes with a decomposition encoding the input graph and algorithms typically process this decomposition in a bottom-up manner while maintaining partial solutions for the problem at hand~\cite{GLO13,KN18,CDP19}.
Our approach breaks with this standard approach: we only use a small part of the structure provided by modular width in our branching algorithm as we can show that \MADs have a very special structure in graphs of low modular width.

Before describing the high-level idea of our algorithm, we need to introduce some basics on modular width and modular decompositions first.
We follow the notation of \citet{KN18}.
A \emph{module} of a graph~$G$ is a set of vertices~$M \subseteq V(G)$, such that for any vertex~$x \in V(G)\setminus M$, either~$M \subseteq N(x)$ or~$M\cap N(x) = \emptyset$, meaning all vertices of~$M$ share the same neighborhood in~$V(G)\setminus M$.
A partition~$P = \{M_1, M_2, \dots, M_{\ell}\}$ of the vertices~$V(G)$ into~$\ell \geq 2$ modules of~$G$ is called a \emph{modular partition}.
If there are~$v \in M_i$ and~$u \in M_j$ with~$uv \in E(G)$, then all vertices in~$M_i$ and~$M_j$ are adjacent and we say modules~$M_i$ and~$M_j$ are \emph{adjacent}.
The \emph{quotient graph}~$G_{/P}$ of~$G$ has vertices~$\{q_{1}, q_{2}, \dots, q_{\ell}\}$ and two vertices~$q_{i}$ and~$q_{j}$ are adjacent if and only if the modules~$M_i$ and~$M_j$ are adjacent.

The quotient graph~$G_{/P}$ stores whether or not two vertices~$u,v \in V(G)$ from \emph{different} modules are adjacent.
What is missing is the information on edges within modules; this is typically covered recursively: for a module~$M_i$ one considers a modular partition for~$G[M_i]$ until the modules are single vertices.
This leads to a modular decomposition and the modular width is the largest number of nodes in any employed quotient graph in the decomposition; we refer to \citet{GLO13} for a formal definition.

Our approach to find a \MAD only needs a modular partition~$P$ of the input graph~$G$ with the parameter~$k$ being the number of modules in~$P$.
Thus, $k$ is upper bounded by the modular width but can be much smaller as we only need a single modular partition~$P$ of~$G$ with a minimum number of modules and its corresponding quotient graph~$G_{/P}$.
For an intuition why considering ~$G_{/P}$ suffices, recall that the $n$-vertex trees with minimum and maximum Wiener index are the star~$K_{1,n-1}$ and the path~$P_n$~\cite{DEG01}.
Intuitively, we want our \MAD to be as close as possible to being a star.
This is where the modularity is beneficial: a vertex~$v$ is either adjacent to all or none of the vertices in another module. 
Thus, for two adjacent modules~$M_1$ and~$M_2$, a \MAD can be very ``star-like''.
More precisely, if the modular partition consists of only two modules~$M_1$ and~$M_2$, then it is not too difficult to verify that an optimal solution is either a star or what we call a ``double-star''---a tree with exactly two inner vertices (see \Cref{fig:polystar}).
\begin{figure}[t]
    \centering
    \begin{tikzpicture}
        \begin{scope}[node distance=10pt, scale=0.7]
            \node[vertex_small,color=CornflowerBlue](a1){};
            \foreach \x [count=\i] in {b1,c1,d1}
            {
                \node[vertex_small, position=\i*50+80:10pt from a1](\x){};
                \draw (\x) -- (a1);
            }
            \node[vertex_small,below=of a1](e1){};
            \node[vertex_small,below=of e1](f1){};
            \draw (d1)--(e1)--(f1);

            \def\horspace{30pt}
            \node[vertex_small,right=\horspace of a1](g1){};
            \node[vertex_small,right=\horspace of e1,color=CornflowerBlue](h1){};
            \node[vertex_small,right=\horspace of f1](i1){};
            \draw (g1)--(h1)--(i1);

            \begin{scope}[on background layer]
            \foreach \i in {a1,b1,c1,d1,e1,f1}
            {
                \foreach \j in {g1,h1,i1}{
                    \draw (\i) -- (\j);
                }
            }
            \end{scope}
        \end{scope}
        \begin{scope}[node distance=10pt, xshift=3.5cm, scale=0.7]
            \node[vertex_small,color=CornflowerBlue](a2){};
            \foreach \x [count=\i] in {b2,c2,d2}
            {
                \node[vertex_small, position=\i*50+80:10pt from a2](\x){};
                \draw (\x) -- (a2);
            }
            \node[vertex_small,below=of a2](e2){};
            \node[vertex_small,below=of e2](f2){};
            \draw (d2)--(e2)--(f2);

            \def\horspace{30pt}
            \node[vertex_small,right=\horspace of a2](g2){};
            \node[vertex_small,right=\horspace of e2,color=CornflowerBlue](h2){};
            \node[vertex_small,right=\horspace of f2](i2){};
            \draw (g2)--(h2)--(i2);

        \begin{scope}[on background layer]
            \node[rounded corners=5pt,draw=gray,thick,inner sep=5pt,fit=(f2) (b2) (c2) (d2),label=left:$M_1$](M1) {};
            \node[rounded corners=5pt,draw=gray,thick,inner sep=5pt,fit=(g2) (i2),label=right:$M_2$] (M2) {};
            \draw[dashed, thick, gray] (M1)--(M2);
        \end{scope}
        \end{scope}
        \begin{scope}[node distance=10pt, xshift=7.5cm, scale=0.7]
            \node[vertex_small,color=CornflowerBlue](a3){};
            \foreach \x [count=\i] in {b3,c3,d3}
            {
                \node[vertex_small, position=\i*50+80:10pt from a3](\x){};
                \draw[thick] (\x) -- (a3);
            }
            \node[vertex_small,below=of a3](e3){};
            \node[vertex_small,below=of e3](f3){};

            \def\horspace{30pt}
            \node[vertex_small,right=\horspace of a3](g3){};
            \node[vertex_small,right=\horspace of e3,color=CornflowerBlue](h3){};
            \node[vertex_small,right=\horspace of f3](i3){};
            
            \foreach \i/\j in {a3/g3, a3/h3, a3/i3, e3/h3, f3/h3}
            {
                \draw[thick] (\i) -- (\j);
            }
            \begin{scope}[on background layer]
            \node[rounded corners=5pt,draw=gray,thick,inner sep=5pt,fit=(f3) (b3) (c3) (d3),label=left:$M_1$](M1') {};
            \node[rounded corners=5pt,draw=gray,thick,inner sep=5pt,fit=(g3) (i3),label=right:$M_2$] (M2') {};
            \end{scope}
        \end{scope}

        \begin{scope}[xshift=1cm, yshift=-7cm]%
            \node[circle, fill=CornflowerBlue, inner sep=2pt] (1a) at (0.375,4.5) {};
            \node[circle, fill=black, inner sep=2pt] (1b) at (0.375,5) {};
            \node[circle, fill=black, inner sep=2pt] (2a) at (0.375,2.5) {};
            \node[circle, fill=CornflowerBlue, inner sep=2pt] (2b) at (0.375,3) {};
            \draw[densely dotted] (2a)--(2b);

            \node[circle, fill=black, inner sep=2pt] (3a) at (2.125,3.25) {};
            \node[circle, fill=CornflowerBlue, inner sep=2pt] (3b) at (2.125,3.75) {};
            \node[circle, fill=black, inner sep=2pt] (3c) at (2.125,4.25) {};
            \draw[densely dotted] (3a)--(3b)--(3c);

            \node[circle, fill=Apricot, inner sep=2pt] (4r) at (3.75,4.25) {};
            \node[circle, fill=black, inner sep=2pt] (4a) at (3.5,4.75) {};
            \node[circle, fill=black, inner sep=2pt] (4b) at (4,4.75) {};
            \node[circle, fill=black, inner sep=2pt] (4c) at (3.5,3.75) {};
            \node[circle, fill=black, inner sep=2pt] (4d) at (4,3.75) {};
            \node(root) at (3.75,5.2) {Root module};
            \draw[densely dotted] (4r)--(4a) (4r)--(4b) (4a)--(4b) (4r)--(4d);

            \node[circle, fill=black, inner sep=2pt] (5a) at (6.5,4) {};
            \node[circle, fill=CornflowerBlue, inner sep=2pt] (5b) at (6.5,4.5) {};
            \node[circle, fill=black, inner sep=2pt] (5c) at (6.5,5) {};
            \draw[densely dotted] (5a)--(5b);

            \node[circle, fill=black, inner sep=2pt] (6a) at (5.625,2.5) {};
            \node[circle, fill=CornflowerBlue, inner sep=2pt] (6b) at (5.625,3) {};
            \draw[thick] (1a)--(3b) (1b)--(3b) (2a)--(3b) (2b)--(3b) (3a)--(4r) (3b)--(4r) (3c)--(4r) (4a)--(4r) (4b)--(4r) (4c)--(3b) (5a)--(4r) (5b)--(4r) (5c)--(4r) (6a)--(4r) (6b)--(4r) (4r)--(4d);

            \begin{scope}[on background layer]
            \node[rounded corners=5pt,draw=gray,thick,inner sep=5pt,fit=(1a) (1b),label=left:$M_1$](m1) {};
            \node[rounded corners=5pt,draw=gray,thick,inner sep=5pt,fit=(2a) (2b),label=left:$M_2$](m2) {};
            \node[rounded corners=5pt,draw=gray,thick,inner sep=5pt,fit=(3a) (3b)(3c),label=above:$M_3$](m3) {};
            \node[rounded corners=5pt,draw=Apricot, thick,inner sep=5pt,fit=(4a) (4b)(4c)(4r)(4d),label=below:$M_4$](m4) {};
            \node[rounded corners=5pt,draw=gray,thick,inner sep=5pt,fit=(5a) (5b)(5c),label=right:$M_5$](m5) {};
            \node[rounded corners=5pt,draw=gray,thick,inner sep=5pt,fit=(6a) (6b),label=right:$M_6$](m6) {};
            \draw[thick, dashed, gray] (m1)--(m3) (m2)--(m3) (m3.40)--(m4.160) (m4.40)--(m5.160) (m5.270)--(m6.30) (m6)--(m4) (m6) -- (m2);
            \end{scope}
        \end{scope}

    \end{tikzpicture}
    \caption{The three top figures represent the case with two modules: on the left is the input graph, in the middle is the quotient graph (in grey dashed edges) and the internal edges of each module, and on the right is a double-star with centers highlighted in blue.
    The bottom figure represents a poly-star in solid edges.
    The root and the root module are highlighted in orange.
    One maximum-degree vertex per module is highlighted in blue. Internal edges unused in the poly-star are dotted.
    }
    \label{fig:polystar}
\end{figure}
The internal edges within~$G[M_1]$ and~$G[M_2]$ only influence which vertices can be inner vertices of the double-star (namely the vertices with highest degree in each module).
This means a \MAD can be found without access to optimal sub-solutions of~$M_1$ and~$M_2$.
It is therefore sufficient to only consider a single quotient graph~$G_{/P}$ of~$G$.

We generalize this approach to more than two modules by showing that we can always find a \MAD that is a \emph{poly-star}, that is, a tree in which there is at most one non-leaf in any module.
We restrict the structure of the optimal solution even further by showing several properties:
Crucially, there is one dedicated module, called the \emph{root module}.
For one of the maximum-degree vertices in the root module (called the \emph{root}) the entire neighborhood (in the input graph) is contained in the poly-star.
Beside these edges, the poly-star does not contain any edges between vertices that are in the same module in the input graph.
Furthermore, we show that all vertices of a non-root module are adjacent to a single vertex in the adjacent module that is closest to the root module in the quotient graph.
All vertices in the root module that are not adjacent to the root are adjacent to a single vertex in an adjacent module.
An example of such a poly-star can be seen in \Cref{fig:polystar}.

Our algorithm is given a modular partition (which is computable in linear-time~\cite{TCHP08}) and finds such a poly-star as follows:
First, enumerate all possible spanning trees of the quotient graph and each option for the root module.
For each option, create a poly-star satisfying the additional criteria.
Finally, choose the best poly-star among all considered options.
Our algorithm is shown in \Cref{alg:mw} and is the main result of this section:
\begin{algorithm}[t]
 \caption{Solving \MADST in FPT time with respect to modular width}
  \label{alg:mw}
 \newdimen\inwidth
 \setbox0=\hbox{\hspace*{\algorithmicindent}\textbf{Output:}}
 \inwidth=\wd0
 \hspace*{\algorithmicindent}\textbf{Input:} \begin{minipage}[t]{\linewidth-\inwidth}
  A connected graph~$G$ with modular partition $P=\{M_1,\ldots,M_k\}$ and quotient graph~$G_{/P}$ with vertices~$\{q_1,\dots,q_k \}$; budget~$b \in \N$.
 \end{minipage}
 \hspace*{\algorithmicindent}\textbf{Output:}
 \begin{minipage}[t]{\linewidth-\inwidth}
  \texttt{TRUE} if there exists a \MAD~$T$ of~$G$ with~$\W(T)\leq b$ and \texttt{FALSE} otherwise.
 \end{minipage}
 \begin{algorithmic}[1]

  \State For a vertex~$v \in V(G)$, define~$\module(v) \coloneqq i\in[k]$, where~$v\in M_i$.
  \State opt $\coloneqq \infty$
  \ForEach {$i \in [k]$}\label{alg_mw:i_loop}
    \State Choose a vertex~$r \in M_i$ with maximum degree in $G$ as root
    \ForEach {spanning tree~$T'$ of~$G_{/P}$ with~$N_{T'}(q_i)=N_{G_{/P}}(q_i)$}\label{alg_mw:t_loop}
      \State Initialize~$T\coloneqq (V(G),\emptyset)$
      \State Choose an arbitrary vertex~$d_j \in M_j$ for every~$j \in [k]\setminus\{i\}$ and set~$d_i\coloneqq r$
      \State Let~$D\coloneq\{d_{\ell}\mid \ell \in [k]\}$
      \State For each edge~$q_{\ell}q_{\ell'}$ in~$E(T')$, add~$d_{\ell}d_{\ell'}$ to~$E(T)$ 
      \ForEach {module~$M_j$ with~$j\neq i$}\label{alg_mw:cn_a1}
        \State Let~$c_j\in D$ be such that~$q_{\module(c_j)}$ is the unique vertex in~$P^{T'}_{q_i,q_j}\cap N_{T'}(q_j)$
        \State For every~$u \in M_j\setminus \{d_j\}$, add the edge~$uc_j$ to~$E(T)$
      \EndFor\label{alg_mw:cn_a2}
      \State For every~$u \in N_{G}(r)\cap M_i$, add the edge~$ur$ to~$E(T)$\label{alg_mw:root_neighbors}
      \State Let~$R \coloneqq T - (M_i\setminus N_T[r])$
       and choose some~$c_i\in\argmin\limits_{w \in N_T(r)\cap D}\dist_R(w)$ \label{alg_mw:cn_b1}
      \State For every~$u \in M_i\setminus N_G[r]$, add the edge~$uc_i$ to~$E(T)$\label{alg_mw:cn_b2}
      \State opt $\coloneqq \min(\text{opt},\W(T))$
    \EndFor
  \EndFor
  \State\Return opt~$\leq b$
 \end{algorithmic}
\end{algorithm}

\begin{theorem}
	\label{thrm:mw_fpt}
	\MADST can be solved in~$O(2^{k^2}k^2(m+n))$ time where~$k$ is the minimum number of modules in any modular partition of the input graph~$G$.
\end{theorem}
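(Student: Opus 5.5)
Correctness of \Cref{alg:mw} splits into two directions. \emph{Soundness} is immediate: every tree $T$ built by the algorithm is a spanning tree of $G$. Each added edge $d_\ell d_{\ell'}$ corresponds to an edge of $T'\subseteq G_{/P}$. Each edge $uc_j$ joins $M_j$ to a module adjacent to $q_j$. The edges $ur$ lie in $G$. Finally, $c_i$ lies in a module adjacent to $q_i$, because $N_{T'}(q_i)=N_{G_{/P}}(q_i)$ and $c_i\in N_T(r)\cap D$. Counting gives exactly $n-1$ edges, and connectivity is clear since every vertex is attached to $D$ and $D$ induces a copy of $T'$. So the returned value is at least the optimum.

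\emph{Completeness} is the real work: we must show some MAD tree equals one of the trees built. Start from a MAD tree $T^*$ with median $c$ as in \Cref{lem:Dankelmann}, and let $M_i$ be the module containing $c$. Since modules have identical outside neighbourhoods, the closed-neighbourhood condition forces $c$ to have maximum degree within $M_i$; swapping vertices of $M_i$ then lets us assume $c=r$.

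The induced-path property drives a sequence of exchange arguments. First, any two vertices on a root path in distinct non-adjacent modules cannot be consecutive. Second, $r$ is adjacent in $T^*$ to all of $N_G(r)$: otherwise a vertex $x\in N_G(r)$ at depth at least $2$ has a root path $r,y,\dots,x$ that is not induced. Hence depth-$1$ vertices are exactly $N_G(r)$. Each depth-$2$ vertex is a non-neighbour of $r$, so it lies in $M_i\setminus N_G[r]$ or in a module non-adjacent to $q_i$.

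Next we show that in each module $M_j\neq M_i$ at most one vertex is internal, and that all vertices of $M_j$ hang from a single common vertex $c_j$. The argument is a rerouting exchange. Take two internal vertices of $M_j$; they have the same outside neighbourhood, so we can move all children of the one with the smaller subtree to the other. The Wiener index does not increase, because $w_T(e)=ab$ with $a+b=n$ is maximised when the two sides are balanced, and concentrating subtrees away from the root only pushes vertices outward. Similarly, we reattach every vertex of $M_j$ to a parent $c_j$ at minimum depth among its possible parents. I expect this step to be the main obstacle. The cost comparison has to be phrased via $\dist(r)$ and subtree sizes, and I would prove it by the general lemma that moving a subtree to a vertex closer to the median never increases $\W$ when the subtree size is at most $n/2$; the median property guarantees this bound. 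The result is a poly-star whose skeleton on the representatives $D$ is a spanning tree $T'$ of $G_{/P}$ with $N_{T'}(q_i)=N_{G_{/P}}(q_i)$, and in which $c_j$ lies in the neighbour of $q_j$ on the $T'$-path to $q_i$, since that neighbour is at smaller depth. Because all vertices of a module are twins outside the module, the choice of $d_j$ is irrelevant up to isomorphism. For the remaining vertices $M_i\setminus N_G[r]$, all have the same options, namely vertices of $N_T(r)$ in adjacent modules. Attaching each such vertex $u$ to $w$ costs $\dist_R(w)+$const, and the interactions among these leaves depend only on their count, so the optimal choice is a single $\argmin$ as in line~\ref{alg_mw:cn_b1}. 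Hence $T^*$ is matched by some iteration.

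For the running time, $G_{/P}$ has at most $k^2/2$ edges, so there are at most $2^{k^2}$ candidate spanning trees $T'$, times $k$ root choices. Each iteration builds $T$ in $O(n+m)$ time, computes $\dist_R$ for up to $k$ candidates with $O(k\cdot n)$ work, and evaluates $\W(T)$ in $O(n)$ using the edge formula. This gives $O(2^{k^2}k^2(m+n))$ together with the linear-time computation of $P$~\cite{TCHP08}.
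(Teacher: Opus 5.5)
Your soundness check and running-time analysis are fine. The completeness argument, however, breaks at exactly the step you flagged, and the lemma you propose for it is false.

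Re-hanging a subtree $S$ from $x$ to $y$ changes the Wiener index by exactly $|S|\,(\dist_{T-S}(y)-\dist_{T-S}(x))$. This is governed by total distances, not by depth or proximity to the median. The normal form you aim for (every vertex of $M_j$ hangs at a minimum-depth possible parent) also fails for MAD trees. Take the cycle $r,h,x,s,y$ with $L+2$ pendant vertices at $r$ and $L$ pendant vertices at $x$, where $L\geq 2$. Its quotient is prime, so the minimum modular partition consists of the five cycle vertices and the two pendant classes. The unique MAD tree omits $ys$ and has $r$ as its unique median. So $s$ hangs at depth $3$ below $x$, although its neighbour $y$ has depth $1$, and re-hanging $s$ at $y$ increases $\W$ by $2L-2$.

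Your rule for merging two internal vertices $u,v\in M_j$ (``move the smaller subtree'') is unjustified for the same reason. The right direction is decided by comparing $\dist_{T_R}(u)$ with $\dist_{T_R}(v)$, where $T_R$ is the tree without the two pendant families. Strictness comes from the cross term: every pair of vertices from the two families gains $\dist_T(u,v)>0$. This comparison is the missing idea. It is \Cref{obs:common_anchor}, and it shows that one of the two moves strictly decreases $\W$, so two internal vertices of $M_j$ cannot coexist in a MAD tree.

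The skeleton also needs no normalisation, since the algorithm enumerates every $T'$. What you must prove is only that the leaves of $M_j$ hang at the parent of $d_j$. The paper applies \Cref{obs:common_anchor} once to show that all leaves of $M_j$ share a neighbour $c$. It applies it again, with $a$ the neighbour of $d_j$ towards $c$, $V_a=\{d_j\}$, $b=c$ and $V_b=M_j\setminus\{d_j\}$, to get $c\in N_T(d_j)$. Finally, the parent of $d_j$ is $G$-adjacent to all of $M_j$, so inducedness of root paths forces $c$ to be that parent.

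Several further steps need repair.
\begin{itemize}
\item Dankelmann's condition only excludes $N_G[c]\subset N_G[c']$. That says nothing about two vertices of $M_i$ with different internal neighbourhoods, so it does not force $c$ to have maximum degree in $M_i$. Nor is ``swapping'' two such vertices an automorphism. The paper gets a maximum-degree root only after the poly-star structure is in place, via the isomorphic re-embedding in \Cref{lem:root_max_degree}.
\item Your merging step needs \Cref{lem:internal_edges} (no $T$-edge inside a non-root module), so that the children of $u$ are $G$-adjacent to $v$. Your observation about consecutive vertices in non-adjacent modules is vacuous and does not supply this.
\item In the root module you overlook that a vertex $v\in N_G(r)\cap M_i$ may carry a subtree inside $M_i\setminus N_G[r]$, since root paths $r,v,u,\dots$ can be induced. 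So vertices of $M_i\setminus N_G[r]$ need not hang at depth-one vertices outside $M_i$. Case~2 of \Cref{lem:unique_high_deg} re-hangs such subtrees as leaves at some $x\in N_T(r)\setminus M_i$ without increasing $\W$.
\item The interactions among these root-module leaves do not depend only on their number: they are at distance $2$ under a common anchor and $4$ otherwise. The single-$\argmin$ conclusion survives because this term also favours a common anchor, but that must be argued.
\end{itemize}
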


For the correctness of \Cref{alg:mw}, we first prove a general lemma which does not rely on a modular partition and which describes the change in the Wiener index of a tree when repositioning a family of subtrees.
\Cref{obs:common_anchor} is visualized in \Cref{fig:lem_common_anchor}.
Using this and, crucially, \Cref{lem:Dankelmann} we then step by step show the structural observations we informally stated above relying on a modular partition.

\begin{figure}[t]
    \centering
    \begin{tikzpicture}[xscale=0.75, yscale=0.5]
        \begin{scope}
        \draw[rounded corners=5pt, black, opacity = 0.4] (0, -0.5) rectangle (6, 3);
        \node[circle, fill=black, inner sep=2pt, label=left:$a$] (a) at (1.5,2) {};
        \node[circle, fill=black, inner sep=2pt, label=right:$b$] (b) at (4.5,2) {};
        \node[] (Tr) at (-0.75,1.25) {$T_R$};
        \draw[thick] (a) to[out=-30, in=210] (2.5,0.5)
                            to[out=30,  in=180] (3,1.5)
                            to[out=0,  in=180] (3.5,1)
                            to[out=0, in=-90] (b);
        \node(P) at (1.5,0.5) {$P^T_{a,b}$};

        \draw[rounded corners=5pt, black, opacity = 0.4] (0, 3.4) rectangle (1.7, 6.4);
        \node[circle, fill=black, inner sep=2pt] (a1) at (0.85,4) {};
        \draw[thick] (0.85,4.2) -- (0.35,6) -- (1.35,6) -- cycle;
        \node[] (Tr) at (-1.95,4.9) {$T_A\coloneq\bigcup_{v \in V_a}T_{av}^v$};

        \draw[rounded corners=5pt, black, opacity = 0.4] (2.3, 3.4) rectangle (6, 6.4);
        \node[circle, fill=black, inner sep=2pt] (b1) at (3.15,4) {};
        \node[circle, fill=black, inner sep=2pt] (b2) at (5.15,4) {};
        \draw[thick] (3.15,4) -- (2.65,6) -- (3.65,6) -- cycle;
        \draw[thick] (5.15,4) -- (4.65,6) -- (5.65,6) -- cycle;
        \node[] (Tr) at (7.9,4.9) {$T_B\coloneq\bigcup_{v \in V_b}T_{bv}^v$};

        \draw[thick] (a)--(a1) (b)--(b1) (b)--(b2);
        \draw[very thick, italyGreen, dashed] (a)--(b1) (a)--(b2);
        \draw[thick, dashed] (b)--(a1);
        \end{scope}

    \end{tikzpicture}
    \caption{A sketch of \Cref{obs:common_anchor}.
    Given two options~$a$ and~$b$ to connect a family of subtrees of a spanning tree~$T$, it is always the best to choose the same option based on the distance to the rest $T_R$ of the tree.
    The figure shows the case~$\dist_{T_R} (a) \leq \dist_{T_R} (b)$, this means it would be better to replace the two edges from~$b$ to~$T_B$ by the two dashed green edges.\\
    }
    \label{fig:lem_common_anchor}
\end{figure}
\begin{restatable}[\appref{obs:common_anchor}]{lemma}{commonanchor}
    \label{obs:common_anchor}
    Let~$G$ be a graph with two distinct vertices~$a,b$ and two nonempty vertex sets~$V_a, V_b$ such that $\{a,b\} \subseteq N_G(v)$ for every~$v \in V_a \cup V_b$.
    Let~$T$ be a spanning tree of~$G$ such that~$V_a \subseteq N_T(a)$,~$V_b \subseteq N_T(b)$ and~$V(P^T_{a,b}) \cap (V_a \cup V_b) = \emptyset$.
    Let~$T_1$ and~$T_2$ be obtained from~$T$ by setting~$V(T_1) = V(T_2)\coloneq V(T)$ and
    \begin{align*}E(T_1)&\coloneq (E(T) \setminus \{bv\mid v \in V_b\}) \cup \{av\mid v \in V_b\},\\
    E(T_2)&\coloneq (E(T) \setminus \{av\mid v \in V_a\}) \cup \{bv\mid v \in V_a\}.
    \end{align*}
    Then,~$T_1$ and~$T_2$ are spanning trees of~$G$ and~$\min(\W(T_1),W(T_2)) < \W(T)$.
\end{restatable}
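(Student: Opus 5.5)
First I check that $T_1$ and $T_2$ are spanning trees. In $T$, each $v\in V_b$ is a neighbour of $b$ with $v\notin V(P^T_{a,b})$, so the subtree $T^v_{bv}$ does not contain $a$. Replacing the edge $bv$ by $av$ (an edge of $G$, since $a\in N_G(v)$) simply reattaches this subtree to $a$. The same holds simultaneously for all $v\in V_b$, because the subtrees $T^v_{bv}$ are pairwise disjoint and none contains $a$ or $b$. Hence $T_1$ is connected with $n_G-1$ edges, so it is a spanning tree; $T_2$ is symmetric.

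For the cost, let $T_A\coloneqq\bigcup_{v\in V_a}T^v_{av}$ and $T_B\coloneqq\bigcup_{v\in V_b}T^v_{bv}$, and let $T_R$ be the subtree on the remaining vertices (it contains $P^T_{a,b}$). Put $\alpha\coloneqq|V(T_A)|$, $\beta\coloneqq|V(T_B)|$ and $d\coloneqq\dist_T(a,b)\ge 1$. I split $\W(T)$ into the pair types $(R,R)$, $(A,A)$, $(B,B)$, $(A,R)$, $(B,R)$ and $(A,B)$. Moving $T_B$ from $b$ to $a$ leaves the first three types unchanged. For $x\in T_B$ let $h(x)$ be its depth in $T_B$, measured so that $\dist_T(x,b)=h(x)$. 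Then
\begin{align*}
\dist_T(T_B,T_R)=\sum_{x\in T_B}\bigl(\beta' h(x)\bigr)+\beta\,\dist_{T_R}(b),
\end{align*}
where $\beta'$ denotes the number of vertices of $T_R$; the same expression holds in $T_1$ with $a$ in place of $b$. Similarly $\dist_T(T_A,T_B)$ contains the term $\alpha\beta d$, which disappears in $T_1$ (the distance via the common anchor $a$ gains no $d$) and also disappears in $T_2$. Therefore
\begin{align*}
\W(T_1)-\W(T)&=\beta\bigl(\dist_{T_R}(a)-\dist_{T_R}(b)\bigr)-\alpha\beta d,\\
\W(T_2)-\W(T)&=\alpha\bigl(\dist_{T_R}(b)-\dist_{T_R}(a)\bigr)-\alpha\beta d.
\end{align*}
The depth contributions cancel because the subtrees themselves are unchanged.

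Set $\Delta\coloneqq\dist_{T_R}(a)-\dist_{T_R}(b)$. If $\Delta\le 0$, then the first difference is at most $-\alpha\beta d<0$, since $\alpha,\beta\ge1$ (the sets are nonempty) and $d\ge1$. If $\Delta>0$, the second difference is $-\alpha\Delta-\alpha\beta d<0$. Either way $\min(\W(T_1),\W(T_2))<\W(T)$.

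The main care point is the bookkeeping: I must make sure the $(A,R)$ distances do not change in $T_1$, which holds because $T_A$ stays attached at $a$. I must also verify that $a$ and $b$ lie in $T_R$, so that $\dist_{T_R}$ is well defined. This uses $V(P^T_{a,b})\cap(V_a\cup V_b)=\emptyset$, which guarantees that the path from $a$ to $b$ avoids $T_A$ and $T_B$.
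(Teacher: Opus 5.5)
Your proof is correct and follows essentially the same route as the paper. It uses the same split into $T_A$, $T_B$, $T_R$: the $(B,R)$ pairs are governed by $\dist_{T_R}(a)$ versus $\dist_{T_R}(b)$, and the strict improvement comes from each $(A,B)$ pair saving $\dist_T(a,b)\ge 1$. The only difference is cosmetic: you compute $\W(T_1)-\W(T)$ and $\W(T_2)-\W(T)$ exactly and split on the sign of $\Delta$, whereas the paper assumes w.l.o.g.\ $\dist_{T_R}(a)\le\dist_{T_R}(b)$ and only bounds $\W(T_1)$.
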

\appendixproof{obs:common_anchor}
{\commonanchor*
\begin{proof}
    Let~$T_A\coloneq\bigcup_{v \in V_a}T_{av}^v$ and~$T_B\coloneq\bigcup_{v \in V_b}T_{bv}^v$ and let~$T_R \coloneqq T - T_A - T_B$.
    Intuitively,~$T_A$ are all subtrees attached to~$a$,~$T_B$ are all subtrees attached to~$b$ and~$T_R$ is ``the rest of the tree'' so~$T$ without both forests.
    Since~$T$ is a tree and~$V(P^T_{a,b}) \cap (V_a \cup V_b) = \emptyset$, it holds that~$V(T_A) \cap V(T_B) = \emptyset$ and~$P^T_{a,b} \subseteq T_R$.
    Without loss of generality, let~$\dist_{T_R}(a) \leq \dist_{T_R}(b)$ and set~$T' = T_1$.
    The case~$\dist_{T_R}(b) \leq \dist_{T_R}(a)$ is entirely analogous with~$T'=T_2$.

    We first show that~$T'$ is a spanning tree.
    For each~$v \in V_b$, removing the edge~$bv$ leaves two connected components~$T_{bv}^v$ and~$T_{bv}^b$ (which are trees) since~$T$ is a tree.
    By definition, $v \in V(T_{bv}^v)$ and~$a \in V(T_{bv}^b)$. Therefore, adding the edge~$av$ creates a spanning tree again.

    We now consider the difference in the Wiener index of~$T$ and~$T'$.
    Let~$A \coloneqq V(T_A), B\coloneqq V(T_B)$ and~$R\coloneqq V(T_R)$.
    Since~$T_R \cup T_A$ is a subtree of both~$T$ and~$T'$, the distances between vertices in~$R \cup A$ are identical in~$T$ and~$T'$ and we obtain
    \begin{align*}
    \W(T) - \W(T') = \;&\dist_T(B, B) + \dist_T(B,R) + \dist_T(B,A)\\
    &-(\dist_{T'}(B,B) + \dist_{T'}(B,R) + \dist_{T'}(B,A)).
    \end{align*}
    Note that~$\dist_T(B,B)=\dist_{T'}(B,B)$.
    Further, it holds that for all~$u \in B: \dist_ {T'}(u,a)=\dist_T(u,b)$ and for all~$v\in R\cup A$: $\dist_{T'}(a,v)=\dist_T(a,v)$.
    Since by assumption~$\dist_{T_R}(a)\leq\dist_{T_R}(b)$, it follows that
    \begin{align*}
    \dist_{T'}(B,R) &= \sum\limits_{u \in B}\sum\limits_{v \in R} \dist_{T'} (u,a) + \dist_{T'}(a,v) = \sum\limits_{u \in B}\sum\limits_{v \in R} \dist_{T} (u,b) + \dist_{T}(a,v)\\
    &\leq \sum\limits_{u \in B}\sum\limits_{v \in R} \dist_{T} (u,b) + \dist_{T}(b,v) = \dist_{T}(B,R).
    \end{align*}
    Since~$a\neq b$ and therefore~$\dist_T(a,b) > 0$, we further obtain
    \begin{align*}
    \dist_{T'}(B,A) &= \sum\limits_{u \in B}\sum\limits_{v \in A}\dist_{T'} (u,a)+\dist_{T'}(a,v) = \sum\limits_{u \in B}\sum\limits_{v \in A}\dist_T(u,b)+\dist_T(a,v)\\
    &<\sum\limits_{u \in B}\sum\limits_{v \in A}\dist_T(u,b)+\dist_T(b,a)+\dist_T(a,v)=\dist_T(B,A).
    \end{align*}
    Overall, we get~$W(T)-W(T')>0$ and therefore~$W(T)>W(T')$.
\end{proof}
}

We call a vertex~$r \in V(G)$ the \emph{root} of MAD tree~$T$ of~$G$ if every~$T$-path starting in~$r$ is induced in~$G$.
Such a vertex always exists according to \Cref{lem:Dankelmann}.
We call~$M_i \ni r$ the \emph{root module} of~$G$.
We call a spanning tree~$T$ of~$G$ a \emph{poly-star} if for every module~$M_j \in P$ there is at most one vertex~$d_j$ with~$\degree_T(d_j)\geq 2$.
In \Cref{lem:internal_edges,lem:unique_high_deg_except_root,lem:unique_high_deg} we now establish that a poly-star \MAD always exists.
These lemmas are depicted in \Cref{fig:lem_no_internal_edges,fig:lem_unique_high_deg}.
\begin{figure}[t]
    \centering
    \begin{tikzpicture}[xscale=0.6, yscale=0.45]
        \begin{scope}
        \draw[rounded corners=5pt, fill=lightgray, opacity=0.4,draw=none] (0, 0) rectangle (2, 6);
        \node[circle, fill=Apricot, inner sep=2.5pt, label=above:$r$] (r) at (1,2) {};
        \node(Gi) at (-0.75,3) {$M_i$};

        \node[circle, fill=black, inner sep=2pt, label=below:$x$] (x) at (5,1) {};

        \draw[rounded corners=5pt,  fill=lightgray, opacity=0.4,draw=none] (6, 0) rectangle (8, 6);

        \node[circle, fill=black, inner sep=2pt, label=right:$y$] (y) at (6.5,1) {};
        \node[circle, fill=black, inner sep=2pt, label=right:$u$] (u) at (7.3,3.65) {};
        \node[circle, fill=black, inner sep=2pt, label=right:$v$] (v) at (6.85,5.15) {};
        \node(Gj) at (8.75,3) {$M_j$};

        \draw[thick] (x)--(y) (u)--(v);
        \draw[thick] (r) to[out=45, in=180] (3.5,5)
                            to[out=0,  in=90] (3.25,2)
                            to[out=-90, in=180] (x);
        \draw[thick] (y) to[out=45, in=-90] (7.3,2.3)
                            to[out=90,  in=-90] (6.5,3)
                            to[out=90, in=-135] (u);
        \draw[very thick, dashed, italyRed] (x)--(v);
        \end{scope}

        \begin{scope}[xshift=.9\textwidth,yshift=14.5ex]
        \draw[rounded corners=5pt,  fill=lightgray, opacity=0.4,draw=none] (0, 0) rectangle (6, 2);
        \node[circle, fill=black, inner sep=2pt, label=left:$u$] (u) at (1.5,1) {};
        \node[circle, fill=black, inner sep=2pt, label=right:$v$] (v) at (4.5,1) {};
        \node(Gj) at (6.75,1) {$M_j$};

        \node[circle, fill=black, inner sep=2pt, label=left:$u'$] (u') at (1.5,3) {};
        \node[circle, fill=black, inner sep=2pt, label=right:$v'$] (v') at (4.5,3) {};
        \node(P) at (3,4.5) {$P^T_{u,v}$};
        \draw[thick] (1.5,-0.5) -- (0.7,-3) -- (2.3,-3) -- cycle;
        \draw[thick] (4.5,-0.5) -- (3.7,-3) -- (5.3,-3) -- cycle;
        \node(P) at (1.5,-2.5) {$T_u$};
        \node(P) at (4.5,-2.5) {$T_v$};
        \draw[thick] (u)--(u') (v)--(v') (u) -- (1.5,-0.5) (v) -- (4.5,-0.5);
        \draw[thick] (u') to[out=45, in=-180] (2.25,4)
                            to[out=0,  in=180] (3,2.75)
                            to[out=0,  in=180] (3.75,3.75)
                            to[out=0, in=180] (v');
        \draw[very thick, dashed, italyGreen] (u)--(4.5,-0.5) (1.5,-0.5)--(v);
        \draw(-.9,-3.1)--(-.9,5);
        \end{scope}

    \end{tikzpicture}
    \caption{On the left is a sketch of \Cref{lem:internal_edges}.
    If we have a \MAD~$T$ with root~$r$ (highlighted in orange), then~$T$ contains no internal edge of any other module than the root module~$M_i$.
    As sketched in the figure, if there would be such an edge~$uv$ in~$T$, then the path from~$r$ to~$v$ would not be induced in~$G$ due to the dashed red edge which contradicts the assumption that~$r$ is a root.\\
    On the right is a sketch of \Cref{lem:unique_high_deg_except_root}.
    If we have a \MAD~$T$ with two non-leaves~$u$ and~$v$ in a module~$M_j$, which is not the root module, then repositioning one of the subtrees~$T_u$ or~$T_v$ by one of the dashed green edges yields a better spanning tree.
    Note that~$u$ and~$v$ have the same neighborhood outside their module and by \Cref{lem:internal_edges} they have no neighbors inside their module.}
    \label{fig:lem_no_internal_edges}
\end{figure}
\begin{lemma}
    \label{lem:internal_edges}
    Let~$T$ be a MAD tree of~$G$ with root~$r \in M_i$.
    There is no edge~$uv \in E(T)$ such that~$u,v \in M_j$ for any~$M_j \in P \setminus \{M_i\}$.
\end{lemma}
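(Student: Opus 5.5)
The plan is a proof by contradiction using only the defining property of the root, namely that every $T$-path starting in $r$ is induced in $G$ (\Cref{lem:Dankelmann}), together with the module property of $M_j$.

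Suppose $T$ contains an edge $uv$ with $u,v \in M_j$ for some $j \neq i$. Since $r \in M_i$ and $M_i \neq M_j$, both $u$ and $v$ differ from $r$. In the tree $T$ one of $u,v$ lies on the $r$-path to the other; without loss of generality the path $P^T_{r,v}$ passes through $u$ and then ends with the edge $uv$. Write this path as $r = x_0, x_1, \dots, x_\ell = u, x_{\ell+1} = v$.

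The key step is to find the vertex where this path last enters $M_j$ from outside. Since $r \notin M_j$, there is a largest index $s \le \ell$ with $x_s \in M_j$ and $x_{s-1} \notin M_j$; call $x := x_{s-1}$ and $y := x_s$. Because $xy \in E(G)$, $x \notin M_j$ and $y \in M_j$, the module property gives $M_j \subseteq N_G(x)$, so in particular $xv \in E(G)$.

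It remains to check that $xv$ is a chord of the path. The index of $x$ is $s-1 \le \ell - 1$, while $v$ has index $\ell+1$, so $x$ and $v$ are at distance at least $2$ along the path and $xv$ is not a path edge. Hence the path $P^T_{r,v}$, which starts in $r$, is not induced in $G$, contradicting that $r$ is the root of $T$.

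The only delicate point is choosing the entry vertex $x$ correctly so that the chord genuinely skips a vertex. Taking the last entry into $M_j$ before $u$ guarantees $x \ne u$, and $x \notin M_j$ rules out $x = v$. Even choosing any path vertex outside $M_j$ adjacent to some $M_j$ vertex would suffice, since $x$ then precedes $u$.
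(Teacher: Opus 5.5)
Your proof is correct and is essentially the paper's argument. Both take the $T$-path from $r$ to the endpoint $v$ that is farther from $r$, locate a path edge $xy$ with $x\notin M_j$ and $y\in M_j$, and use modularity to obtain the chord $xv$, which contradicts that $r$ is a root. You are slightly more explicit than the paper in checking that $xv$ is not itself an edge of the path, since $x\notin\{u,v\}$.
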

\begin{proof}
    Assume towards a contradiction that~$T$ contains an edge~$uv$ inside a module~$M_j$ with~$j\neq i$.
    W.l.o.g.~assume $\dist_T(r,u)<\dist_T(r,v)$ and consider the path~$P^T_{r,v}$.
    This path contains vertices~$x$ and~$y$ such that~$x \notin M_j, y \in M_j$ and~$xy \in E(T)$.
    By modularity, it holds that also~$xv \in E(G)$.
    Hence,~$P^T_{r,v}$ is not induced in~$G$ which contradicts the assumption that~$r$ is a root of~$T$.
\end{proof}

\begin{lemma}
    \label{lem:unique_high_deg_except_root}
    Let~$T$ be a MAD tree of~$G$ with root~$r \in M_i$.
    Then in any module~$M_j \in P \setminus \{M_i\}$ there is at most one vertex~$d \in M_j$ with~$\degree_T(d) \geq 2$.
\end{lemma}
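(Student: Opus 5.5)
Suppose, towards a contradiction, that $M_j$ with $j\neq i$ contains two distinct vertices $u,v$ with $\degree_T(u)\geq 2$ and $\degree_T(v)\geq 2$. We will build a spanning tree with strictly smaller Wiener index using \Cref{obs:common_anchor}. By \Cref{lem:internal_edges}, every $T$-neighbor of $u$ or $v$ lies outside $M_j$. By modularity, any vertex outside $M_j$ adjacent to $u$ in $G$ is also adjacent to $v$. So we can take $a\coloneqq u$, $b\coloneqq v$, and let $V_a$ (respectively $V_b$) be the $T$-neighbors of $u$ (respectively $v$) that do \emph{not} lie on the path $P^T_{u,v}$. Each such neighbor $w$ lies outside $M_j$ and is adjacent to $u$, hence also to $v$, so $\{a,b\}\subseteq N_G(w)$. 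Moreover, $V(P^T_{u,v})\cap(V_a\cup V_b)=\emptyset$ by construction.

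The remaining hypothesis of \Cref{obs:common_anchor} is that $V_a$ and $V_b$ are nonempty. The path $P^T_{u,v}$ uses exactly one $T$-edge at $u$ and one at $v$. Since both vertices have degree at least $2$ in $T$, each has at least one further $T$-neighbor off the path. Hence $V_a\neq\emptyset$ and $V_b\neq\emptyset$.

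\Cref{obs:common_anchor} then yields a spanning tree $T_1$ or $T_2$ of $G$ with Wiener index strictly less than $\W(T)$. This contradicts $T$ being a MAD tree. The only point needing care is justifying that every $T$-neighbor of $u$ is a $G$-neighbor of $v$. This is exactly where \Cref{lem:internal_edges} is used: it excludes neighbors inside $M_j$, where modularity gives no information. That lemma is also why the root module $M_i$ has to be excluded in the statement.
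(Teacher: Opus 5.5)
Your proof is correct and is essentially the paper's own argument. \Cref{lem:internal_edges} plus modularity makes the off-path $T$-neighbors of $u$ and $v$ common $G$-neighbors of both, and \Cref{obs:common_anchor} with $a=u$, $b=v$ then gives a spanning tree of strictly smaller Wiener index. You also spell out why $V_a$ and $V_b$ are nonempty (from $\degree_T\geq 2$ and the fact that only one $T$-neighbor of $u$, resp.\ $v$, lies on $P^T_{u,v}$), which the paper leaves implicit.
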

\begin{proof}
    Assume towards a contradiction that there are two vertices~$u,v \in M_j$ with~$\degree_T(u) \geq 2$ and~$\degree_T(v) \geq 2$.
    Since~$T$ is connected, there is the path~$P^T_{u,v}$ in~$T$.
    Let~$u',v' \in V(P^T_{u,v})$ such that~$u \in N_T(u')$ and~$v \in N_T(v')$ and
    let~$V_u \coloneqq N_T(u)\setminus \{u'\}$ and~$V_v \coloneqq N_T(v)\setminus\{v'\}$.
    By \Cref{lem:internal_edges}, we know that~$V_u \cap M_j = V_v \cap M_j = \emptyset$ and therefore, by modularity, $V_u,V_v \subseteq  N_G(v)$ and~$V_u, V_v \subseteq N_G(u)$.
    Using \Cref{obs:common_anchor}, we immediately get a contradiction, since there exists a spanning tree~$T'$ with~$\W(T')<\W(T)$.
\end{proof}

\begin{figure}[t]
    \centering
    \begin{tikzpicture}[xscale=0.65, yscale=0.5]
        \draw[rounded corners=5pt,  fill=lightgray, opacity=0.4,draw=none] (0, 0) rectangle (2, 6);
        \node[circle, fill=Apricot, inner sep=2.5pt, label=left:$r$] (r) at (1,1) {};
        \node[circle, fill=black, inner sep=2pt, label=right:$w$] (w) at (2.75,3.25) {};
        \node[circle, fill=black, inner sep=2pt, label=above:$v$] (v) at (1,4.25) {};
        \node[circle, fill=black, inner sep=2pt, label=left:$v'$] (v') at (-1.25,3.75) {};
        \node[circle, fill=black, inner sep=2pt, label=left:$v''$] (v'') at (1,2.75) {};

        \draw[thick] (v)--(w) (v)--(v') (v)--(v'');
        \draw[very thick, dashed, italyRed] (r)--(v') (w)--(v'');
        \draw[thick] (r) to[out=-30, in=210] (3,0.5)
                            to[out=30,  in=-120] (2.25,2)
                            to[out=30, in=-90] (w);

        \node(case1) at (1,-1) {\textbf{Case 1}};
        \node(Gi1) at (1,6.5) {\textbf{$M_i$}};

        \draw[thick] (5,7)--(5,-1.5);

        \draw[rounded corners=5pt,  fill=lightgray, opacity=0.4,draw=none] (8, 0) rectangle (10, 6);
        \node[circle, fill=Apricot, inner sep=2.5pt, label=right:$r$] (r2) at (9,1) {};
        \node[circle, fill=black, inner sep=2pt, label=left:$x$] (x) at (7.25,2.3) {};
        \node[circle, fill=black, inner sep=2pt, label=right:$v$] (v2) at (9,2.5) {};

        \draw[thick] (9, 3.8) -- (8.4,5.2) -- (9.6,5.2) -- cycle;

        \draw[thick] (r2)--(x) (r2)--(v2) (v2)--(9,3.8);

        \draw[very thick, dashed, italyGreen] (x)--(9.2,4.3);
        \draw[very thick, dashed, italyGreen] (x)--(9,4.6);
        \draw[very thick, dashed, italyGreen] (x)--(8.8,4.9);
        \draw[very thick, dashed, italyGreen] (x)--(8.6,5.2);

        \node(case2) at (9,-1) {\textbf{Case 2}};
        \node(Gi1) at (9,6.5) {\textbf{$M_i$}};
    \end{tikzpicture}
    \caption{A sketch of \Cref{lem:unique_high_deg}.
    Suppose we have a \MAD~$T$ with root~$r$ (highlighted in orange) and another vertex~$v$ from the root module~$M_i$ which also has degree at least two in~$T$.
    Then there are two cases.
    \textbf{Case~1:} If the path from~$r$ to~$v$ in~$T$ leaves the root module~$M_i$, then the paths from~$r$ to~$v'$ and to~$v''$ are not induced as indicated by the dashed red edges.
    \textbf{Case 2:} Otherwise, we can assume that~$v$ is a neighbor of~$r$ in~$T$. In this case we obtain another \MAD by appending the neighbors of~$v$ in a star-like manner to a neighbor~$x$ of~$r$ outside of~$M_i$.
    This is indicated by the dashed green edges.}
    \label{fig:lem_unique_high_deg}
\end{figure}

\begin{lemma}
    \label{lem:unique_high_deg}
    There exists a MAD tree~$T^*$ of~$G$ that is a poly-star, meaning in every module of~$G$ there is at most one vertex with degree larger than one in~$T^*$.
\end{lemma}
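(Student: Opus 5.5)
Choose a MAD tree $T$ together with a root $r \in M_i$ as in \Cref{lem:Dankelmann}: every $T$-path from $r$ is induced in $G$, and no $c' \neq r$ satisfies $N_G[r]\subset N_G[c']$. By \Cref{lem:unique_high_deg_except_root}, every non-root module already contains at most one vertex of $T$-degree at least two. So the task is the root module $M_i$. Among all MAD trees with such a root, pick one minimizing the number of non-leaves of $T$ in $M_i \setminus \{r\}$. Suppose towards a contradiction that some $v \in M_i\setminus\{r\}$ has $\degree_T(v)\ge 2$. Split into two cases according to whether $P^T_{r,v}$ stays inside $M_i$, as in \Cref{fig:lem_unique_high_deg}.

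\textbf{Case 1: $P^T_{r,v}$ leaves $M_i$.} Let $w$ be the predecessor of $v$ on $P^T_{r,v}$, and let $v'$ be another $T$-neighbor of $v$ (it exists since $\degree_T(v)\ge 2$). We do not need to know whether $w\in M_i$. The path contains a vertex $z\notin M_i$; take the last such vertex before $v$. If $z=w$, then $w\notin M_i$ is adjacent to $v$, so by modularity $w$ is adjacent to all of $M_i$. Now look at $v'$.
\begin{itemize}
\item If $v'\in M_i$, then $wv'\in E(G)$, a chord of the path $r,\dots,w,v,v'$.
\item If $v'\notin M_i$, then $v'$ is adjacent to all of $M_i$, in particular to $r$. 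This is a chord of $P^T_{r,v'}$, whose length is at least $3$ because the path leaves $M_i$ before reaching $v$.
\end{itemize}
If $z\neq w$, then the successor of $z$ lies in $M_i$, so $z$ is adjacent to $v$, giving a chord of $P^T_{r,v}$. In every subcase some $T$-path from $r$ is not induced, a contradiction. The only care needed is to check that the chord is not itself a path edge, which follows from the length arguments above.

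\textbf{Case 2: $P^T_{r,v}$ lies in $M_i$.} Inducedness of the path forces $v$ to be a $T$-neighbor of $r$: any longer path inside $M_i$ from $r$ would need chords to be excluded, and shortening is handled by the argument of Case 1. Let $V_v=N_T(v)\setminus\{r\}$. The connectivity of $G$ and $k\ge 2$ give some $T$-neighbor $x$ of $r$ outside $M_i$. Such an $x$ exists, since otherwise $N_G[r]\subseteq M_i$ would lie inside the module, and a vertex outside adjacent to $M_i$ would dominate $N_G[r]$ in the sense of the median choice; this needs checking. By modularity $x$ is adjacent to all of $M_i$, hence to $v$. Any $u\in V_v\cap M_i$ is adjacent to $x$. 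Any $u\in V_v\setminus M_i$ is adjacent to $r$, which contradicts inducedness of the path $r,v,u$. So $V_v\subseteq M_i\cap N_G(x)$. Now apply \Cref{obs:common_anchor} with $a=x$, $b=v$, $V_b=V_v$ and $V_a=N_T(x)\setminus\{r\}$ (or a suitable subset), using that $x$ and $v$ are both adjacent to all of these vertices. This yields a strictly better tree unless $V_a$ is empty, contradicting optimality. If $V_a$ is empty, directly moving $V_v$ onto $x$ keeps the Wiener index unchanged, since $x$ and $v$ are symmetric leaves-at-$r$ positions. The result is again a MAD tree in which $v$ is a leaf, while no vertex of $M_i$ other than $r$ gains degree. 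This contradicts minimality.

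The main obstacle is Case 2: verifying the hypotheses of \Cref{obs:common_anchor}, in particular that $x$ and $v$ both see the relevant sets, and handling the degenerate equal-cost situation. In that situation I would argue directly that the distance sums of $x$ and $v$ in the rest of the tree coincide, so moving the subtrees does not increase $\W$. I would also double-check that the existence of $x$ follows from the median choice in \Cref{lem:Dankelmann}.
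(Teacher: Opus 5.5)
Your Case~1 is correct and is essentially the paper's argument. The gap is in Case~2.

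\textbf{The reduction to $v\in N_T(r)$ is unjustified.} Inducedness does not force $v\in N_T(r)$. If $G[M_i]$ contains an induced path $r,a,v$, then the $T$-path $r,a,v$ is induced and stays inside $M_i$, and Case~1 says nothing about it. The natural repair, which the paper uses, is to pass to the vertex $v_1\in N_T(r)\cap V(P^T_{r,v})$. It is also a non-leaf of $M_i\setminus\{r\}$, but it has grandchildren, and this is exactly where your exchange fails.

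\textbf{Your exchange can destroy the root property.} You reattach only the $T$-neighbors $V_v$ to $x$, and their subtrees come along. Every vertex below $v$ lies in $M_i$, and $x\notin M_i$ is $G$-adjacent to all of $M_i$. So if some $u\in V_v$ has a child $w$, the new $T'$-path $r,x,u,w$ has the chord $xw$. Then $r$ is no longer a root of $T'$. Consequently $T'$ is not in the family over which you minimized, and \Cref{lem:unique_high_deg_except_root} no longer applies to it; for instance, $x$'s module may now contain two non-leaves. If $x$ has further children you do get a strict decrease, hence a direct contradiction. But if $x$ is a leaf of $T$, the exchange merely swaps the roles of $x$ and $v$: the Wiener index is unchanged and no contradiction with minimality follows.

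\textbf{The missing idea is to flatten the subtree, as the paper does.} Reattach every vertex of $T_Y=\bigcup_{u\in V_v}T^u_{vu}$ directly as a leaf of $x$. Compare distances inside $A=V(T_Y)\cup\{x,v\}$ and between $A$ and the remaining vertices $B$; this gives $\W(T^*)\le\W(T)$, strictly if $v$ has grandchildren. Every new path $r,x,u$ is induced, because $ru\notin E(G)$ for every vertex $u$ at depth at least two in $T$. So $r$ remains a root of $T^*$, and $T^*$ has strictly fewer non-leaves in $M_i\setminus\{r\}$, which gives your minimality contradiction. Since the minimizer is itself a MAD tree with root $r$, \Cref{lem:unique_high_deg_except_root} then handles all non-root modules. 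In your framework you therefore do not need the paper's choice of $x$ as the internal vertex of its module.

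\textbf{Two smaller points.}
\begin{enumerate}
\item \Cref{obs:common_anchor} cannot be invoked with $V_a=N_T(x)\setminus\{r\}$. Children of $x$ outside $M_i$ lie in modules non-adjacent to $M_i$, by inducedness of the paths $r,x,\cdot$, so they are not adjacent to $v$. Only the one-directional computation from the lemma's proof applies here, using $\dist_{T_R}(x)=\dist_{T_R}(v)$.
\item The vertex $x$ exists for a simpler reason than domination. By connectivity and $k\ge 2$, some vertex outside $M_i$ is adjacent to $M_i$, hence to $r$. Moreover $N_T(r)=N_G(r)$, because every $T$-path from $r$ is induced.
\end{enumerate}
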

\begin{proof}
    By \Cref{lem:unique_high_deg_except_root}, we know that there is a MAD tree~$T$ of~$G$ with root~$r \in M_i$ such that the claim holds for any module~$M_j \in P\setminus\{M_i\}$.

    We now consider the root module~$M_i$ and show the existence of a \MAD~$T^*$ with the desired properties.
    Let~$v \in M_i\setminus\{r\}$ be a vertex with~$\degree_T(v) \geq 2$.
    We consider the two cases~$V(P^T_{r,v})\cap M_i\neq V(P^T_{r,v})$ and~$V(P^T_{r,v}) \cap M_i = V(P^T_{r,v})$.

    For the first case, we show that no such~$v$ can exist.
    Assume the contrary.
    Then there must be some~$w \in V(P^T_{r,v})\setminus M_i$ and~$u \in V(P^T_{r,v})\cap M_i$ with~$wu\in E(P^T_{r,v})$.
    Consider some~$v' \in N_T(v)\setminus V(P^T_{r,v})$, which must exist since by assumption~$\degree_T(v)\geq 2$.
    If~$v' \in M_i$, then the path~$P^T_{r,v'}$ is not induced in~$G$ because of the edge~$wv'$, which exists by modularity.
    If~$v' \notin M_i$, then the path~$P^T_{r,v'}$ is not induced in~$G$ because of the edge~$rv'$, which exists by modularity.
    Either case produces a contradiction to the fact that~$r$ is root of~$T$.
    
    For the second case, we show how~$T^*$ can be obtained from~$T$.
    We assume that~$v\in N_T(r)$ as otherwise we can consider the unique vertex from~$N_T(r) \cap P^T_{r,v}$ instead.
    Let~$Y \coloneqq N_T(v)\setminus\{r\}$ and~$T_Y \coloneqq \bigcup_{u\in Y}T^u_{vu}$.
    If~$T_Y$ contains a vertex~$w$ which is not contained in the root module~$M_i$, then~$P^T_{r,w}$ is not induced in~$G$.
    This implies~$T_Y \subseteq M_i$.
    Since~$k \geq 2$, there exists~$x \in N_T(r)\setminus M_i$, because~$T$ is connected and~$P^T_{r,x}$ is induced in~$G$.
    In particular, if the module of~$x$ contains an internal vertex of~$T$, then let~$x$ be this internal vertex (it still holds that~$x \in N_T(r)\setminus M_i$, as otherwise~$P^T_{r,x}$ would be non-induced in~$G$).
    Using modularity, we obtain~$T^*$ from~$T$ by repositioning all vertices from~$T_Y$ as leaves to~$x$:
    \[E(T^*) \coloneqq \Big(E(T)\setminus \big(E(T_Y) \cup \{uv\mid u \in Y\}\big)\Big)\cup \{ux\mid u \in V(T_Y)\}.\]
    Let~$A := V(T_Y) \cup \{x,v\}$ and~$B:=V(G)\setminus A$.
    Observing that a star attains minimum Wiener index; that in~$T$ the distance to~$x$ is no larger than the distance to~$v$ for any vertex in~$B$; and that distances within~$B$ remain unaffected by the changes, we get:
    \begin{align*}
    \dist_{T^*}(A,A) \leq & \dist_{T}(A,A),\\
    \dist_{T^*}(A,B) \leq & \dist_{T}(A,B),\\
    \dist_{T^*}(B,B) = & \dist_{T}(B,B),
    \end{align*}

    and conclude~$\W(T^*)\leq\W(T)$. Hence~$T^*$ is a MAD tree.
    By iteratively applying this argument to all neighbors of~$r$ from~$M_i$ in~$T$ with degree at least two, we obtain the desired \MAD.
\end{proof}

We now further restrict the structure of a poly-star \MAD, by specifying adjacency of the leaves.
We show that all vertices of a non-root module are adjacent to a single vertex in the adjacent module closest to the root module.
All vertices in the root module that are not adjacent to the root are connected to a single neighbor in an adjacent module -- the neighbor that minimizes the distance to ``the rest'' of the graph.
Note that the proof of \Cref{lem:unique_high_deg} implies that we can always assume that in the root module the only vertex of degree at least two is the root.

\begin{restatable}[\appref{lem:common_neighbor}]{lemma}{commonneighbor}
    \label{lem:common_neighbor}
    Let~$T$ be a poly-star MAD tree of~$G$ with root~$r\in M_i$.
    Let~$d_j \in M_j$ be such that~$\degree_T(d_j) \geq \degree_T(v)$ for all $v \in M_j$ and let~$D\coloneqq\{d_1,\ldots,d_i=r,\dots,d_k\}$.
    Then the following holds:
    \begin{enumerate}[i)]
    \item\label{cn_tree} $T[D]$ is a tree.
    \item\label{cn_non_root} For every non-root module~$M_j \in P \setminus \{M_i\}$, it holds for all leaves~$v \in M_j\setminus\{d_j\}$ that~$N_T(v) = \{c_j\}$, where~$c_j$ is the unique vertex in~$P^{T[D]}_{r,d_j}\cap N_T(d_j)$.
    \item\label{cn_root} Let~$R = T - (M_i\setminus N_T[r])$.
    For all leaves~$v \in M_i\setminus N_T[r]$ from the root module which are not adjacent to the root, it holds that~$N_T(v) = \{c_i\}$, where~$c_i\in\argmin\limits_{w \in N_T(r)\cap D}\dist_{R}(w)$.
    \end{enumerate}
\end{restatable}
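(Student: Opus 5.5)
We prove the three items in order, using two tools: induced paths from the root (\Cref{lem:Dankelmann}, via our definition of root) and the exchange argument of \Cref{obs:common_anchor}. Throughout we also use the facts established so far. By \Cref{lem:internal_edges}, $T$ has no edge inside a non-root module. By the remark after \Cref{lem:unique_high_deg}, $r$ is the only vertex of $M_i$ with degree at least two. Hence every vertex of $M_j\setminus\{d_j\}$ is a leaf for every $j$ (with $d_i=r$).

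\textbf{Item \ref{cn_tree}.} Removing a leaf from a tree leaves a tree, so $T[D]$ is obtained from $T$ by deleting leaves outside $D$. It suffices to check that every vertex outside $D$ is a leaf of $T$. For $v\in M_j\setminus\{d_j\}$ this holds because $d_j$ has maximum $T$-degree in $M_j$ and $T$ is a poly-star, so any non-leaf of $M_j$ must be $d_j$. One subtlety is a module in which every vertex is a leaf; I expect this only happens for $k\le 2$ or a single-vertex module, which I treat as degenerate and handle directly. Then $T[D]=T-(V\setminus D)$ is a tree.

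\textbf{Item \ref{cn_non_root}.} Let $v\in M_j\setminus\{d_j\}$ with $j\neq i$, and let $w$ be its unique neighbor. Since $T$ has no internal edge in $M_j$, $w\notin M_j$, and since $w$ is not a leaf, $w=d_\ell\in D$. Consider the path $P^T_{r,v}$, which passes through $w$. If $w\neq c_j$, then $w$ is not the neighbor of $d_j$ on the $r$--$d_j$ path in $T[D]$. I expect the contradiction to come from inducedness, and I split into two cases.
\begin{itemize}
\item If $d_j$ lies on $P^T_{r,w}$, then the $T$-predecessor of $d_j$ on that path is adjacent to $v$ by modularity, so $P^T_{r,v}$ is not induced.
\item Otherwise, $c_j$ is an ancestor of $d_j$, and $d_j$ may hang below $c_j$ in a different branch from $w$. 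Inducedness alone does not exclude this, so here I apply \Cref{obs:common_anchor}. Take anchors $a=c_j$ and $b=w$, with $V_a=\{d_j\}$ and $V_b$ the leaves of $M_j$ attached to $w$. Modularity gives $a,b$ adjacent to all of $M_j$. We also need $V(P^T_{a,b})\cap(V_a\cup V_b)=\emptyset$; this holds since $d_j$ is not on the $c_j$--$w$ path in this case.
\end{itemize}
The lemma yields a strictly better tree, which contradicts optimality. The case $w=d_\ell$ with $d_\ell$ a descendant of $d_j$ is impossible, as it would make $P^T_{r,v}$ pass through $M_j$ twice and thus not be induced.

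\textbf{Item \ref{cn_root}.} Let $v\in M_i\setminus N_T[r]$, with unique neighbor $w$. Then $w\neq r$ and $w$ is not a leaf, so $w\in D\setminus\{r\}$. Inducedness of $P^T_{r,v}$ together with the fact that $v$ is adjacent to every neighbor of $r$ outside $M_i$ forces $w\in N_T(r)$, so $w\in N_T(r)\cap D$. It remains to show that all such $v$ share one neighbor that minimizes $\dist_R$. If two of them attach to different $w,w'$, apply \Cref{obs:common_anchor} with $a=w$ and $b=w'$. Its proof shows that moving everything to the anchor with smaller distance sum to the rest is strictly better. The rest of the tree here is exactly $R$ plus the remaining attached leaves, and I must check that the comparison reduces to $\dist_R(w)$ versus $\dist_R(w')$. \textbf{This is the main obstacle:} the leaves of $M_i$ contribute equally to both anchors, since both are at distance $2$ from them via $r$, so the sign is governed by $\dist_R$ only. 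Once all such $v$ share a single $w$, the same swap argument against any $c_i$ with smaller $\dist_R$ gives $\dist_R(w)\le \dist_R(c_i)$, so $w$ lies in the $\argmin$. Ties are handled because the $\argmin$ allows a choice, and I would formulate the statement as "can be chosen" where needed.
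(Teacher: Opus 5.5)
Your proof is correct, and items (i) and (iii) follow the paper's argument essentially step for step:
\begin{itemize}
\item vertices outside $D$ are leaves;
\item inducedness puts the neighbor of a leaf of $M_i\setminus N_T[r]$ into $N_T(r)\setminus M_i$;
\item \Cref{obs:common_anchor} gathers all these leaves at one vertex $c$;
\item moving all of $L=M_i\setminus N_T[r]$ from $c$ to any $c'\in N_T(r)\cap D$ changes the Wiener index by exactly $|L|\,(\dist_R(c')-\dist_R(c))$.
\end{itemize}
This last computation is all your ``main obstacle'' amounts to. It places the actual common neighbor in the $\argmin$, so the ``can be chosen'' reformulation you mention is unnecessary.

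Item (ii) is organized differently. The paper proceeds in three steps:
\begin{itemize}
\item it gathers all leaves of $M_j\setminus\{d_j\}$ at a common vertex $c$;
\item it applies \Cref{obs:common_anchor} a second time, with $a$ the neighbor of $d_j$ on $P^T_{d_j,c}$, $V_a=\{d_j\}$ and $V_b=M_j\setminus\{d_j\}$, to force $c\in N_T(d_j)$;
\item only then does it use inducedness to exclude that $c$ is a child of $d_j$.
\end{itemize}
You instead compare the neighbor $w$ of each leaf $v$ directly with the parent $c_j$ of $d_j$. If $d_j\in P^T_{r,w}$, the chord $c_jv$ (from modularity and \Cref{lem:internal_edges}) breaks inducedness. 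Otherwise $P^T_{c_j,w}$ avoids $d_j$ and $d_j\notin N_T(w)$. Then a single application of \Cref{obs:common_anchor} with $a=c_j$, $V_a=\{d_j\}$, $b=w$ gives a strictly better tree. Your version skips the gathering step and needs only one exchange per leaf. The paper's version isolates a gathering step that it reuses in (iii).

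One small correction concerns the ``subtlety'' you raise in (i). It is not a real issue: (i) only needs the vertices of $M_j\setminus\{d_j\}$ to be leaves, and that is automatic when all of $M_j$ consists of leaves. Your guess that this only happens for $k\le 2$ or singleton modules is also false. A path $a\,b\,c$ with two pendant vertices $z_1,z_2$ at $c$ has minimal modular partition $\{a\},\{b\},\{c\},\{z_1,z_2\}$, and $\{z_1,z_2\}$ consists only of leaves. Nothing in your argument depends on this, since your second case in (ii) works verbatim when $d_j$ is a leaf.
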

\appendixproof{lem:common_neighbor}
{\commonneighbor*
\begin{proof}
    We show the statements separately.
    \begin{enumerate}[(i),wide,itemsep=5pt]
     \item For any~$v \notin D$, it holds that~$\degree_T(v) = 1$, because~$T$ is a poly-star.
    Clearly, since~$T$ is a tree, thus also~$T[D]$ is a tree and the unique neighbor of any~$v \notin D$ is in~$D$.

    \item Consider module~$M_j \in P\setminus\{M_i\}$.
    Let~$C = \bigcup_{u \in M_j\setminus\{d_j\}}N_T(u)$.
    By \Cref{lem:internal_edges}~$C \cap M_j = \emptyset$.
    We show that all leaves $u \in M_j\setminus\{d_j\}$ are adjacent to the same vertex. Assume for a contradiction~$|C|>1$ and let~$c=\argmin_{u\in C} \dist_{T}(u)$.
    Applying \Cref{obs:common_anchor} repeatedly we obtain~$T'$ with~$N_{T'}(v) = \{c\}$ for any~$v \in M_j\setminus \{{d_j\}}$ and~$\W(T')<\W(T)$, contradicting the optimality of $T$.
    We now show~$c \in N_T(d_j)$, so assume the opposite for contradiction and choose~$c' \in N_T(d_j) \setminus M_j$ such that~$c' \in V(P^T_{d_j,c})$.
    Using \Cref{obs:common_anchor} we immediately get that there is a spanning tree of~$G$ of smaller Wiener index (to apply \Cref{obs:common_anchor} choose~$a=c'$,~$V_a = \{d_j\}$,~$b=c$ and~$V_b = M_j\setminus\{d_j\}$), contradicting the optimality of $T$.

    Finally we show that~$c$ is the unique vertex in~$P^T_{r,d_j}\cap N_T(d_j)$.
    Assume towards a contradiction~$c \notin P^T_{r,d_j}\cap N_T(d_j)$.
    Denote with~$c'$ the unique vertex in~$P^T_{r,d_j}\cap N_T(d_j)$ and let~$v \in M_j\setminus\{d_j\}$.
    The path~$P^T_{r,v}$ is not induced in~$T$, because of the edge~$c'v\in E(G)$, which exists by modularity, contradicting the assumption that~$r$ is root of~$T$.

    \item Consider module~$M_i$.
    Let~$C = \bigcup_{u \in M_i\setminus\{N_T[r]\}}N_T(u)$.
    It holds that~$C \cap M_i = \emptyset$, since~$r\notin C$ by definition and an edge from any~$u\in M_i\setminus N_T[r]$ to any~$v \in M_i\setminus\{r\}$ would imply~$\degree_T(v)\geq 2$ for~$T$ to be connected, contradicting the fact that~$T$ is a poly-star.
    By the same arguments as in case~(ii) we get that there a vertex~$c_i \in N_T(r)\setminus M_i$ such that~$N_T(v) = \{c_i\}$ for all~$v \in M_i\setminus N_T[r]$.
    Finally we show $$c_i\in\argmin\limits_{w \in N_T(r)\cap D}\dist_{R}(w)$$

    It must hold that~$c_i \in N_T(r)\cap D$, since~$c_i \in N_T(r)$ and all leaves in~$V(G)\setminus D$ are connected to vertices in~$D$.
    Assuming there was a vertex~$c_i' \in N_T(r)\cap D$ with~$\dist_{R}(c_i')<\dist_{R}(c_i)$, clearly a spanning tree of strictly smaller Wiener index than~$T$ could be constructed by replacing any edge~$vc_i\in E(T)$ with~$v \in M_i\setminus\{r\}$ by an edge~$vc_i'$ (this follows from the proof of \Cref{obs:common_anchor}).
    \end{enumerate}
\end{proof}
}

To determine the exact layout of the poly-star \MAD it remains to consider which vertices to connect to the root and how to choose the root vertex within the root module.
For the first part, note that for a \MAD~$T$ of~$G$ with root~$r \in M_i$ it holds that~$N_T(r) = N_G(r)$, as any~$v \in N_G(r)\setminus N_T(r)$ would create a non-induced $T$-path starting in~$r$.
For the second part, we argue that the root~$r$ has maximum degree in~$G$ among the vertices of~$M_i$.

\begin{restatable}[\appref{lem:root_max_degree}]{lemma}{rootmaxdegree}
    \label{lem:root_max_degree}
    There exists a poly-star \MAD~$T^*$ of~$G$ with root~$r^*\in M_{i^*}$, such that~$\degree_G(r^*) \geq \degree_G(v)$ for all~$v \in M_{i^*}$.
\end{restatable}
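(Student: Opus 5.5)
Start from a poly-star \MAD~$T$ with root~$r \in M_i$; it exists by \Cref{lem:unique_high_deg}, and by the remark after its proof we may assume that $r$ is the only vertex of $M_i$ with degree at least two in~$T$. We also know $N_T(r)=N_G(r)$. Suppose some $v\in M_i$ has $\degree_G(v)>\degree_G(r)$. Since $r$ and $v$ share the same neighborhood outside $M_i$, this means $|N_G(v)\cap M_i| > |N_G(r)\cap M_i|$. The plan is to build a spanning tree $T^*$ rooted at $v$ with $\W(T^*)\le\W(T)$. Then $T^*$ is again a \MAD, and after re-normalizing it with \Cref{lem:unique_high_deg} and \Cref{lem:common_neighbor} it is a poly-star. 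Iterating this (the root's degree strictly increases) gives a root of maximum degree in its module.

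The construction of $T^*$ proceeds as follows. Let $X\coloneqq N_T(r)\setminus M_i = N_G(r)\setminus M_i = N_G(v)\setminus M_i$. In $T$ the vertex $r$ is adjacent to all of $X$ and to $A_r\coloneqq N_G(r)\cap M_i$. All other vertices of $M_i$ hang as leaves at the single vertex $c_i\in X$ by \Cref{lem:common_neighbor}(iii); the rest of the tree hangs off $X$ and is unchanged. To form $T^*$, swap the roles of $r$ and $v$. Make $v$ adjacent to $X$ and to all of $A_v\coloneqq N_G(v)\cap M_i$. Attach every remaining vertex of $M_i$ (including $r$ if $r\notin A_v$) as a leaf to $c_i$, which is adjacent to every vertex of $M_i$ by modularity. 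The part of $T$ outside $M_i$ is kept verbatim. This is a spanning tree, since $v$ and $r$ have identical neighbors outside $M_i$.

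To compare the costs, use the edge formula $\W=\sum w(e)$, or equivalently distances. The tree $R$ induced on $\{v\}\cup(V\setminus M_i)$ in $T^*$ is isomorphic to the corresponding tree on $\{r\}\cup(V\setminus M_i)$ in $T$. In both trees the vertices of $M_i$ other than the root attach in one of two ways: a set $S$ of size $s$ attaches as leaves to the root, and the remaining $|M_i|-1-s$ vertices attach as leaves to $c_i$. Hence $\W$ is a function $f(s)$ with everything else fixed, and it remains to show $f$ is nonincreasing in $s$. Moving one leaf from $c_i$ to the root changes its distance to every other vertex by $\dist(\text{root},y)-\dist(c_i,y)$. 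This is $+1$ for $y$ on the $c_i$-side, $-1$ for $y$ on the root side, and $0$ for no vertex, since $\dist(\text{root},c_i)=1$. Leaves at the root versus leaves at $c_i$ contribute symmetrically.

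The main obstacle is this monotonicity step. Going by the remark in \Cref{lem:common_neighbor}(iii), the net change is governed by comparing $\dist_R(r)$ with $\dist_R(c_i)$ together with the counts of the leaves placed at each. I would write $f(s)$ explicitly as a quadratic in $s$ and prove $f(s+1)\le f(s)$ using $\dist_R(r)\le\dist_R(c_i)$. That inequality should follow because $r$ is a median of $T$ (\Cref{lem:Dankelmann}), or else from the optimality of $T$ itself: if moving a leaf from the root to $c_i$ were strictly better, $T$ would not be optimal. That same optimality argument, applied to the quadratic, yields that increasing $s$ cannot hurt. This gives $\W(T^*)=f(|A_v|)\le f(|A_r|)=\W(T)$.
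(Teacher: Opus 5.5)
There is a genuine gap in the monotonicity step, and the inequality $\W(T^*)\le\W(T)$ you aim for is false in general. With $m\coloneqq|M_i|-1$ and $\rho$ the root, your function is
\[
f(s)=C+s\bigl(\dist_R(\rho)-\dist_R(c_i)\bigr)+s(m-s),\qquad f(s+1)-f(s)=\dist_R(\rho)-\dist_R(c_i)+(m-1-2s).
\]
The leaves do not contribute symmetrically. A leaf moved from $c_i$ to the root gets farther from the $m-1-s$ leaves still at $c_i$. So $f$ is strictly concave, and $\dist_R(r)\le\dist_R(c_i)$ does not make its increments nonpositive for small $s$. A median $r$ would give the inequality you actually need, $\dist_T(r)\le\dist_T(c_i)$ in $T$ itself rather than in $R$. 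However, \Cref{lem:Dankelmann} says a median is a root, not that the root produced by \Cref{lem:unique_high_deg,lem:common_neighbor} is a median. Optimality of $T$ only gives $f(s)\ge f(|A_r|)$ for realizable $s$. If $A_r\neq\emptyset$, moving a leaf from $r$ to $c_i$ gives $f(|A_r|)\le f(|A_r|-1)$, and concavity then does yield what you want; in fact it shows this case cannot occur. If $A_r=\emptyset$, there is nothing to move and the claim fails. Take $G$ on $c,r,v,u$ with $c$ universal plus the edge $vu$, and $P=\{\{r,v,u\},\{c\}\}$. The star at $c$ is the unique \MAD ($\W=9$) and a poly-star. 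The vertex $r$ is a root of it with $N_T(r)=N_G(r)=\{c\}$, yet $\degree_G(v)=2>\degree_G(r)$. Your $T^*$ is the path $r\,c\,v\,u$ with $\W=10$.

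The missing idea is to stop insisting that $v$ becomes the root, and to allow the root module to change. The paper builds a tree $T^*$ \emph{isomorphic} to $T$:
\begin{itemize}
\item $v$ takes over $r$'s neighbors outside $M_i$, but receives only $|N_T(r)\cap M_i|$ neighbors inside $M_i$;
\item the rest of $M_i$ hangs at $c$;
\item in each non-root module, the internal vertex is swapped with a maximum-degree one (via \Cref{lem:internal_edges} and modularity).
\end{itemize}
Thus $T^*$ is a poly-star \MAD for free, and every vertex of $T^*$-degree at least two has maximum $G$-degree in its module. The paper then takes a root of $T^*$ of degree at least two; one exists because the neighbor of a leaf root is again a root. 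This root may lie outside $M_i$; in the example above it is $c$.
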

\appendixproof{lem:root_max_degree}
{\rootmaxdegree*
\begin{proof}
    Let~$T$ be a poly-star \MAD of~$G$ with root~$r \in M_i$.
    Let~$d_j \in M_j$ be such that~$\degree_T(d_j) \geq \degree_T(v)$ for all $v \in M_j$ and let~$D\coloneqq\{d_1,\ldots,d_i=r,\dots,d_k\}$.
    We first establish that w.l.o.g. we can assume~$\degree_G(d_j)\geq\degree_G(v)$ for all~$v \in M_j$ for any module~$M_j \in P$.
    First consider any non-root module~$M_j \in P\setminus\{M_i\})$.
    Let~$v$ be a maximum-degree vertex in~$M_j$ with respect to~$G$.
    In case~$d_j = v$ we are done.
    Otherwise~$T^*$ isomorphic to~$T$ can be obtained by exchanging the neighborhood of~$d_j$ and~$v$, since according to \Cref{lem:internal_edges}~$N_T(d_j)\cap M_j = N_T(v) \cap M_j = \emptyset$ and by modularity~$N_T(d_j)\setminus M_j = N_T(v)\setminus M_j$.

    Now consider root module~$M_i$.
    In case~$r = d_i$ has maximum degree within~$M_j$ with respect to~$G$, we are done.
    Otherwise assume there is a vertex~$v \in M_i$ with~$\degree_G(v) > \degree_G(r)$.
    Let~$c \in N_T(r) \setminus M_i)$ such that~$N_T(u) = \{c\}\ \forall u \in M_i \setminus N_T[r]$.
    Such~$c$ exists according to \Cref{lem:common_neighbor}.
    Let~$x = |N_T(r)\cap M_i|$.
    Consider~$T^*$ in which~$N_{T^*}(v)\setminus M_i =  N_{T}(r) \setminus M_i)$ (which is possible by modularity) and in which exactly~$x$ vertices in~$M_i$ are adjacent to~$v$, while all other~$|M_i| - x - 1$ vertices are adjacent to~$c$.
    This is possible since~$N_H(v)\cap M_i > x$ as~$N_H(v)>N_H(r)$ by assumption and~$N_H(v)\setminus M_i = N_H(r)\setminus M_i$ by modularity.
    $T^*$ is isomorphic to~$T$.

    Since~$T^*$ is isomorphic to~$T$, it is a poly-star \MAD of~$G$.
    In~$T^*$ any internal node~$v \in M_j$ has maximum degree with respect to~$G$ within its module~$M_j$. In any \MAD there is always a root of degree at least two, because in case a leaf is root then so is its unique neighbor.
    Since in~$T^*$ any vertex of degree at least two is a maximum-degree vertex within its module, there exists a root~$r^* \in M_{i^*}$ of~$T^*$ such that$~\degree_G(r^*)\geq \degree_G(v)$ for any~$v \in M_{i^*}$.
\end{proof}
}

Combining the structural observations made so far finally proves \Cref{thrm:mw_fpt}(\appref{thrm:mw_fpt}).
\appendixproof{thrm:mw_fpt}
{
\begin{proof}[of \Cref{thrm:mw_fpt}]
    Let~$T^*$ be a poly-star MAD tree of~$G$ with root~$r^* \in M_{i^*}$, such that~$N_{T^*}(r^*) = N_G(r^*)$ and for any~$v \in M_{i^*}~\degree_{G}(r^*)\geq\degree_{G}(v)$.
    This exists by \Cref{lem:root_max_degree} and the observation made just before it.
    Let~$d_j^* \in M_j$ be such that~$\degree_{T^*}(d_j^*)\geq \degree_{T^*}(v)$ for all $v \in M_j$ and let~$D^*\coloneqq\{d_1^*,\ldots,d_i^*=r^*,\dots,d_k^*\}$.

    Consider the iteration of~\Cref{alg:mw} where $i=i^*$ and~$T'$ on~$G_{/P}$ is isomorphic to~$T^*[D^*]$, which is a tree according to \Cref{lem:common_neighbor}.
    \Cref{alg:mw} creates vertex set~$D$ and initializes~$T[D]\cong T^*[D^*]$ and chooses~$r$ such that~$\degree_G(r)\geq \degree_G(v)$ for all~$v \in M_i$.
    The algorithm includes the entire $G$-neighborhood of~$r$ in~$T$, ensured by Lines~\ref{alg_mw:t_loop} and~\ref{alg_mw:root_neighbors}.
    Any remaining leaves are connected according to  \Cref{lem:common_neighbor} as ensured in Lines~\ref{alg_mw:cn_a1} to~\ref{alg_mw:cn_a2} and~\ref{alg_mw:cn_b1} to~\ref{alg_mw:cn_b2}.
    If~$r\neq r^*$, then \Cref{alg:mw} creates a \MAD of~$G$ that is isomorphic to~$T^*$.
    Similarly, if~$d_{j^*}\neq d_j$ for any~$j \in [k]\setminus \{i\}$, then \Cref{alg:mw} creates a \MAD isomorphic to~$T^*$ since~$T[D]$ is isomorphic to~$T^*[D^*]$ and \Cref{alg:mw} connects all~$v \in M_j\setminus \{d_j\}$ according to \Cref{lem:common_neighbor}.

    Since \Cref{alg:mw} tries all possible spanning trees~$T'$ combined with all possible choices of a root modules~$M_i$, it must at some point choose~$i=i^*$ and~$T'$ isomorphic to~$T^*[D^*]$ and thus finds a \MAD of~$G$.

    Let~$n=n_G$ and~$m=m_G$.
    A modular decomposition and thus also a modular partition can be computed in~$O(n+m)$~\cite{TCHP08}.
    All possible spanning trees of the graph~$G_{/P}$ can be enumerated in~$O(2^{k^2})$ time, as there are at most this many possible edge subsets.
    Using breadth-first search (BFS), the paths from~$q_i$ to all other vertices in the tree~$T'$ can be found in~$O(k)$ time and the neighbor closest to~$q_i$ can be stored for any vertex.
    The number of choices for the root module~$i$ is~$k$.
    A maximum-degree vertex for every module of~$G$ can be found in~$O(n+m)$ time and the Wiener index of a tree can also be computed in~$O(m+n)$ time~\cite{MP88}.
    The distances of a single vertex to all others can also be computed in~$O(n+m)$ time using BFS.
    This gives an overall running time of~$O(2^{k^2}k^2(m+n))$, which is FPT for parameter~$k$ and linear for a constant modular width.
\end{proof}
}

\section{Parameters Larger Than Treewidth}
In \Cref{sec:width-params}, the question whether \MADST is fixed-parameter tractable with respect to treewidth is left open.
Exploring the border of tractability for \MADST, we looked for larger parameters where we can show fixed-parameter tractability. 
We found two such parameterizations: vertex integrity and an above guarantee parameter; which we briefly discuss now.

\subsection*{Vertex Integrity.}
    A graph has small vertex integrity if it can be broken into small components by removing a small number of vertices.
    Formally, the vertex integrity~$k$ of a graph~$G$ is~$k \coloneqq \min_{S \subseteq V(G)} \{|S| + \max_{C \in \cc(G-S)} |C| \}$ where~$\cc(G-S)$ denotes the set of all connected components in~$G-S$.
    We remark that computing a witness~$S$ for~$k$ is fixed-parameter tractable with respect to~$k$~\cite{DH16}.

    Algorithms exploiting vertex integrity are often based on the observation that the number of different (i.e.\ non-isomorphic) components in~$G-S$ is bounded in a function of~$k$. 
    This is often combined with a problem-specific observation stating that a ``solution'' can be assumed to be the same in two isomorphic components in~$G-S$.
    While such an observation seems plausible for \MADST as well, proving it turns out to be non-trivial.
    To this end, assume that the edges of an optimal \MAD~$T$ within~$G[S]$ are given (this can be brute-forced in~$f(k)$ time).
    Further assume for simplicity that~$T$ induces a tree on~$G[S]$ and that two connected components~$C_1$ and~$C_2$ in~$G-S$ are isomorphic and have exactly the same neighbors in~$S$.
    A natural approach is to show that if the solution~$T$ is different within~$C_1$ and~$C_2$, then we simply change the solution within one component so that they are identical.
    The crux here is that this changes the distances between the vertices from~$C_1$ and~$C_2$, and one needs to argue that the sum of these distances will not increase; this seems non-trivial.
    
    We encountered similar situations when designing our algorithm for modular width.
    The difference is that for two adjacent modules \emph{each} vertex in one module is adjacent to \emph{all} vertices in the other module.
    We do not have such strong structural properties here.
    Thus, we employ an integer quadratic program to compute the best way to extend the tree on~$S$ to the connected components in~$G-S$, obtaining the following.

    \begin{restatable}[\appref{thm:fpt-vertex-integrity}]{theorem}{fptvertexintegrity}
		\label{thm:fpt-vertex-integrity}
        \MADST is in FPT with respect to the vertex integrity.
    \end{restatable}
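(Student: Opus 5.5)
We follow the approach announced just before the statement: guess the part of an optimal \MAD inside the small separator, then extend it to the components of~$G-S$ by an integer quadratic program whose number of variables depends only on~$k$. First we compute a set~$S$ witnessing vertex integrity~$k$ in FPT time~\cite{DH16}, so~$|S|\le k$ and every component of~$G-S$ has at most~$k$ vertices. Two components~$C,C'$ of~$G-S$ are of the same \emph{type} if there is an isomorphism between~$G[S\cup C]$ and~$G[S\cup C']$ that is the identity on~$S$. The number of types is bounded by some~$f(k)$, since a type is determined by a graph on at most~$k$ vertices together with its adjacency to~$S$ ($2^{O(k^2)}$ choices).

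Next we fix a hypothetical optimal \MAD~$T$ and guess, in~$f(k)$ time, the forest~$F=T[S]$. For each component~$C$, the restriction of~$T$ to~$C$ together with the edges between~$C$ and~$S$ is a \emph{local pattern}: a forest on~$S\cup C$ containing~$F$. Up to the type isomorphism, there are at most~$f(k)$ patterns per type. A pattern records:
\begin{itemize}
\item which components of~$F$ it merges through~$C$, acting as a ``below connection'' in the spirit of the treewidth algorithm;
\item the pieces of~$C$ hanging off single vertices of~$S$, each together with its internal distance profile.
\end{itemize}
Because~$T$ is a tree, only boundedly many components (at most~$|S|-1$) can use merging patterns. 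We guess these components' types and patterns explicitly, which costs~$f(k)$. We then check that~$F$ together with these merges forms a spanning tree~$T_S$ of the subgraph spanned by~$S$ and the merging components. All remaining components use ``pendant'' patterns, in which every piece of~$C$ attaches to exactly one vertex of~$S$.

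Now we introduce an integer variable~$x_{\tau,p}\ge 0$ for every type~$\tau$ and pendant pattern~$p$, counting the components of type~$\tau$ realized by~$p$. We add linear constraints~$\sum_p x_{\tau,p}=n_\tau$, where~$n_\tau$ is the number of type-$\tau$ components not already used as merging components. Once~$T_S$ is fixed, the Wiener index decomposes into three parts:
\begin{itemize}
\item a constant, namely the distances inside~$T_S$;
\item a linear term in the~$x_{\tau,p}$, covering distances within one component and from a component to~$T_S$; each pattern's contribution is computable from~$T_S$;
\item a quadratic term, covering distances between two pendant components.
\end{itemize}
For the quadratic term, the distance between a vertex hanging at depth~$a$ below~$s\in S$ and a vertex at depth~$b$ below~$s'$ equals~$a+\dist_{T_S}(s,s')+b$. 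Hence the total is~$\sum x_{\tau,p}x_{\tau',p'}\,c(\tau,p,\tau',p')$ with computable coefficients, minus the diagonal correction for a component paired with itself, which is linear. This gives an integer quadratic program with~$f(k)$ variables and coefficients polynomial in~$n$. Integer quadratic programming with a bounded number of variables and bounded coefficient \emph{dimension} is FPT in the number of variables (e.g.\ by Lokshtanov's result for IQP parameterized by the number of variables and the largest coefficient magnitude, or via Zemmer's bound). The coefficients here are polynomially large, not bounded, so this is a gap.

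The main obstacle is this last step. We expect to need either a bounded-coefficient reformulation or a convexity argument to make the program FPT-solvable. The objective is~$\tfrac12\sum_{s,s'}(\text{weights})$ where the weights are linear aggregates~$y_s=\sum x_{\tau,p}\cdot(\text{number of vertices }p\text{ places under }s)$. So we would try to rewrite the quadratic part as~$\sum_{s,s'}\dist_{T_S}(s,s')\,y_sy_{s'}$ plus linear terms in~$x$. Here only~$O(k^2)$ distinct coefficients, each at most~$k$, multiply the products. We would then solve via enumeration of the bounded-size structure and apply the IQP algorithm with coefficients bounded by a function of~$k$, after introducing the~$y_s$ as auxiliary variables; the linear terms carry large coefficients only in the linear part, which those algorithms allow. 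Correctness then follows because every feasible integer solution yields a spanning tree with exactly the computed Wiener index, and the optimal~$T$ yields a feasible solution for the correct guesses. Minimizing over all guesses and comparing with~$b$ decides the instance.
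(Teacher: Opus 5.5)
Your strategy is the paper's up to the final step: guess $T[S]$, guess the at most $|S|-1$ merging components to obtain the tree $T_S$, then count components per type and pendant extension with variables $x_{\tau,p}$ subject to $\sum_p x_{\tau,p}=n_\tau$. The proof breaks where you declare a gap, and that gap rests on a miscount. The coefficients of your program are bounded by a function of $k$, not merely by a polynomial in $n$.

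To see this, note three facts. First, $|V(T_S)|\le |S|+(|S|-1)\cdot k=O(k^2)$. Second, every component of $G-S$ has at most $k$ vertices. Third, every vertex of a pendant component lies within distance $k$ of its attachment vertex in $S$. Hence every distance between two vertices of $V(T_S)$ or of pendant components is $O(k^2)$. It follows that each pairwise coefficient $c(\tau,p,\tau',p')$, which is a sum over at most $k^2$ vertex pairs, is $O(k^5)$. The same holds for each diagonal correction and each linear coefficient (distances inside one component, and from one component to the $O(k^2)$ vertices of $T_S$).

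The constraint matrix has entries in $\{0,1\}$, and only the right-hand sides $n_\tau$ are unbounded. Lokshtanov's algorithm permits this: it is FPT in the number of variables plus the largest absolute entry of $Q$ and $A$ for $\min x^\top Qx$ subject to $Ax\le b$. The bounded linear part can be homogenized via an extra variable constrained to equal $1$. So the program you already wrote down is solvable in FPT time as it stands, and this is exactly the paper's concluding argument.

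The workaround you sketch instead does not close the gap as written. You split each distance $a+\dist_{T_S}(s,s')+b$ into a term $\sum_{s,s'}\dist_{T_S}(s,s')\,y_sy_{s'}$ plus ``linear terms''. This moves the depth contributions into terms of the form $\delta_p\,(N-|C_\tau|)\,x_{\tau,p}$, where $\delta_p$ is the total depth of the vertices placed by $p$ and $N$ is the number of vertices outside $T_S$. It is precisely this reformulation that creates coefficients of order $n$.

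The cited IQP result has no linear term, and homogenizing one requires bounded coefficients. So the claim that ``those algorithms allow'' large linear coefficients is unsupported. The natural repair, writing $N=\sum_{\tau,p}|C_\tau|\,x_{\tau,p}$, just returns you to your original bounded-coefficient program. Also, $\dist_{T_S}(s,s')$ is not bounded by $k$; it can be of order $k^2$, because $T_S$ contains the merging components.
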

	
	\appendixproof{thm:fpt-vertex-integrity}
	{\fptvertexintegrity*
    \begin{proof}
        Let~$(G,b)$ be a given \MADST-instance and~$S \subseteq V(G)$ be a witness for the vertex integrity of the graph~$G$, that is, $k = |S| + \max_{C \in \cc(G-S)} |C|$.
        The basic idea of the algorithm is to iterate over all possible forests on~$S$, extend them to trees and then add all remaining components to these trees in an optimal way via integer quadratic programming.

        We say two components~$C_1$ and~$C_2$ in~$G-S$ are of the same \emph{type} if they are isomorphic with isomorphism~$\phi\colon V(C_1) \rightarrow V(C_2)$ such that~$N_G(u) \cap S = N_G(\phi(u)) \cap S$ for all~$u \in V(C_1)$.
        Since the size of~$S$ and the size of each component is bounded by~$k$, there are at most~$2^{2k^2}$ types of components.

        Let~$F_S \subseteq G$ be a forest on~$S$ with components~$F_1, \dots, F_r$.
        If~$F_S$ is a subgraph of a spanning tree~$T$ of~$G$, then there exist up to~$r-1$ components~$C_1, \dots, C_{r-1}$ in~$G-S$ and a tree~$T_S$ such that~$T_S =T[S \cup V(C_1) \cup \dots \cup V(C_{r-1})]$.
        For each~$i \in \{1,\dots, r-1\}$, there are at most~$2^{2k^2}$ possible types for the component~$C_i$, $2^{k^2}$ possible forests on~$V(C_i)$, and~$2^{k^2}$ possible edge sets between~$S$ and~$V(C_i)$.
        Since~$r \leq k$, our algorithm only has to consider at most~$2^{4k^3}$ possible trees~${T_S \supseteq F_S}$ for every forest~$F_S$ on~$S$.
        Every such tree~$T_S$ contains at most~$k-1$ added components of size at most~$k$, hence~$|V(T_S)| \leq k^2$.
        
        In the last step, the algorithm further extends~$T_S$ in the best possible way to a spanning tree of~$G$.
        This is done with integer quadratic programming.

        Suppose~$T \supseteq T_S$ is a spanning tree of~$G$.
        Two components~$C_1,C_2$ in~$G-V(T_S)$ (hence also in~$G-S$) of the same type~$t$ induce the same \emph{extension}~$\mathcal E$ of~$T_S$ in~$T$ if there exists an isomorphism~$\phi$ from~$T[V(T_S) \cup V(C_1)]$ to~$T[V(T_S) \cup V(C_2)]$ such that~$\phi (u) = u$ for all~$u \in V(T_S)$.
        Since~$S$ has size at most~$k$ and each component in~$G-S$ has size at most~$k$, there are at most~$2^{2k^2}$ extensions~$\mathcal E$ of~$T_S$ in~$T$ for each component type~$t$.
        
        For every component type~$t$ and every possible extension~$\mathcal E$ of~$T_S$ in a spanning tree by a component of type~$t$, we introduce an integer variable~$x_{t,\mathcal E}$.
        Intuitively, the value of~$x_{t,\mathcal E}$ is the number of components of type~$t$ which induce the same extension~$\mathcal E$ of~$T_S$ in a MAD tree~$T \supseteq T_S$.
        Moreover, we need the following constants (which clearly can be computed in FPT time with respect to~$k$ since the size of all components and~$T_S$ is bounded by~$k^2$):
        Let~$C$ and~$C'$ be components of types~$t$ and~$t'$ respectively in~$G-V(T_S)$ and let~$T_S^{\mathcal E,\mathcal E'}$ be the tree which is obtained by extending~$T_S$ with~$C$ and~$C'$ according to extensions~$\mathcal E$ and~$\mathcal E'$ respectively.
        We define
        \begin{compactitem}
            \item $D_{t,\mathcal E} \coloneqq \dist_{T_S^{\mathcal E, \mathcal E'}}(V(C), V(C))$,
            \item $D_{t,\mathcal E,t',\mathcal E'} \coloneqq \dist_{T_S^{\mathcal E, \mathcal E'}}(V(C), V(C'))$,
            \item $D_{T_S,t,\mathcal E} \coloneqq \dist_{T_S^{\mathcal E, \mathcal E'}}(V(C), V(T_S))$, and
            \item $n_t$ as the number of components with type~$t$ in~$G-V(T_S)$.
        \end{compactitem}
        The algorithm precomputes these values for all types and their possible extensions of~$T_S$.
        Finding a spanning tree~$T \supseteq T_S$ which minimizes the Wiener index can then be formulated as the following integer quadratic program:
        \begin{align}
            \min \quad & \sum_{t,\mathcal E} \sum_{t',\mathcal E'} x_{t,\mathcal E} x_{t',\mathcal E'} D_{t,\mathcal E,t',\mathcal E'} + \sum_{t,\mathcal E} x_{t,\mathcal E} (D_{t,\mathcal E} + D_{T_S,t,\mathcal E}) \notag \\
            \text{s.t.} \quad 
            & \sum_{\mathcal E} x_{t,\mathcal E} = n_t \quad && \forall t \label{cons:component_numbers} \\
            & 0 \leq x_{t,\mathcal E} \leq n_t \quad && \forall t, \mathcal E. \label{cons:max_values}
        \end{align}
        If we add the value~$W(T_S)$ to the objective function of the IQP, then we obtain the Wiener index of the spanning tree~$T$ which is obtained by extending~$T_S$ according to the variable values.
        This means $x_{t,\mathcal E}$ components of type~$t$ in~$G - V(T_S)$ extend~$T_S$ in~$T$ according to~$\mathcal E$.
        The first constraint ensures that we use all components of every type and the second constraint ensures that the values of the variables are realizable in our graph.

        Finally, the algorithm returns ``yes'' if for one of the considered trees~$T_S$ the above integer quadratic program has an optimal value of at most~$b - W(T_S)$.

        Since the coefficients of the objective function as well as the number of variables are bounded by some function of~$k$, the above program is solvable in FPT time with respect to~$k$~\cite{Lokshtanov15}.
        This concludes the proof.
    \end{proof}
	}

\subsection*{Above Guarantee Parameterization.}
    For a given \MADST-instance~$(G,b)$, we define the parameter~$k \coloneqq b - W(G)$.
    We next observe that \MADST is in FPT with respect to this ``above guarantee'' parameter.

    \begin{proposition}
		\label{prop:fpt-above-guarantee}
        \MADST is solvable in $(k+2)^kn^{O(1)}$ time.
    \end{proposition}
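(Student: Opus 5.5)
Since every spanning tree $T$ is a subgraph of $G$, we have $\dist_T(u,v)\ge \dist_G(u,v)$ for all pairs, hence $\W(T)\ge \W(G)$; so $k<0$ gives a trivial no-instance, and we may assume $k\ge 0$. The plan is a bounded search tree that keeps a partial edge set and branches on ``defects''. The key quantity is the excess $\W(T)-\W(G)=\sum_{\{u,v\}}(\dist_T(u,v)-\dist_G(u,v))$, a sum of nonnegative integers. We want a branching step in which every branch forces this excess to grow by at least one while having at most $k+2$ options. That gives depth at most $k$ and therefore $(k+2)^k$ leaves, each processed in polynomial time.

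First I would observe that a spanning tree has zero excess exactly when it is distance-preserving, which happens iff $G$ is itself a tree. More useful is the following: if $G$ has a cycle, then any spanning tree $T$ omits some edge $xy\in E(G)$, and this pair has $\dist_T(x,y)\ge 2>1=\dist_G(x,y)$, contributing at least one unit of excess. The algorithm should therefore branch on which edges of $G$ are deleted. The bounded quantity is $m-(n-1)$, the number of edges not in $T$, and we need this bounded in terms of $k$. Each non-tree edge contributes at least $1$, and distinct non-tree edges are distinct pairs, so the number of deleted edges is at most $\W(T)-\W(G)\le k$. Hence $m-n+1\le k$ for any yes-instance, and we can reject immediately if $m-n+1>k$.

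With $m\le n-1+k$, the graph has cyclomatic number at most $k$, and the number of spanning trees is bounded. Concretely, I would proceed as follows. Compute a BFS tree; the at most $k$ non-tree edges each close a fundamental cycle. To pick a spanning tree, for each of the $\le k$ independent cycles we choose one edge to delete. To keep the branching factor bounded, I would first reduce the graph: repeatedly remove degree-1 vertices (their incident edge is in every spanning tree) and contract degree-2 paths into weighted ``chains''. The resulting kernel has $O(k)$ vertices of degree $\ge3$ and $O(k)$ chains. The branching then goes: pick a chain lying on a cycle; either keep all of its edges, or delete one of its edges. The latter choice needs care, since the chain may be long. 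I would bound the options through the excess: removing an edge at position $j$ of a long chain costs excess, and once the chain length is large compared to $k$ only a few positions are feasible. Alternatively, a simpler route is to argue directly that a spanning tree is determined by the set of at most $k$ deleted edges, each chosen along a current cycle. Each branch picks one edge of a shortest cycle, and a cycle of length $\ell$ through a deleted edge forces excess at least $\ell-2$ on that pair. Thus only cycles of length $\le k+2$ are relevant, giving at most $k+2$ choices per level and depth at most $k$. That yields exactly $(k+2)^k$ leaves, and at each leaf we compute $\W(T)$ in linear time and compare it with $b$.

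The main obstacle is rigorously justifying the cycle-length bound. When we delete edge $xy$ from the current graph $H$ lying on a shortest cycle $C$ of length $\ell$, the final distance $\dist_T(x,y)$ is at least $\ell-1$. Because $C$ is shortest, the remaining paths have length $\ge \ell-1$, so the pair $\{x,y\}$ alone contributes excess $\ge \ell-2$, and thus $\ell\le k+2$. The subtle part is the accounting over the whole recursion. I would carry the budget as $k':=b-\W(H)$ for the current graph $H$ (monotone, since $\W$ only increases under deletions) and argue that each branch strictly decreases it by at least $1$. If no cycle of length $\le k'+2$ exists but $H$ is not a tree, we reject. This is the step I would check most carefully.
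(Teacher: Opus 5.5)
Your final ``simpler route'' (branch on every edge of a shortest cycle of the current graph $H$, reject once that cycle has length at least $b-\W(H)+3$ because deleting an edge $xy$ of a shortest cycle of length $\ell$ forces $\dist(x,y)\ge \ell-1$, and note that each deletion raises $\W$ by at least one, so the depth is at most $k$) is exactly the paper's algorithm and argument, and your shortest-cycle justification is correct. The kernelization/chain detour and the bound $m-n+1\le k$ are not needed for this proof.
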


    \begin{proof}
        Let $(G,b)$ be a \MADST instance and consider the following branching algorithm.
        If~$G$ is a tree, then return ``yes'' if~$b\ge W(G)$ and
        otherwise return ``no''.
        If~$G$ is not a tree, then find a shortest cycle~$C$ in~$G$.
        If~$C$ has length at least~$k+3$, then return ``no''.
        If~$C$ has length at most~$k+2$, then branch for every edge~$uv$ from~$C$ by calling the algorithm on~$(G -uv,b)$.

        The correctness of the algorithm comes from the fact that a shortest cycle of length at least~$k+3$ implies that~$(G,b)$ is a no-instance since deleting any edge~$uv$ from the cycle increases the distance between~$u$ and~$v$ by at least~$k+1$.

        To see that the algorithm runs in FPT time with respect to~$k$, note that we only branch into at most~$k+2$ cases and that we branch at most~$k$ times since~$W(G-uv) > W(G)$ (and hence~$k$ decreases in each branching).
        In particular, the running time is~$(k+2)^kn^{O(1)}$.
    \end{proof}

    If~$b$ equals the actual Wiener index of a \MAD for~$G$, then~$k$, as a graph parameter, is larger than the feedback edge number~$\ell$ (number of edges to remove in order to obtain a tree): 
    Any \MAD tree has exactly~$\ell$ edges less than~$G$ and for any edge~$uv \in E(G) \setminus E(T)$ the distance between~$u$ and~$v$ is at least one larger in~$T$ than in~$G$.
    However, $k$ is still incomparable to other large parameters such as the vertex cover number or the maximum leaf number.
    
    One obstacle on improving on \Cref{prop:fpt-above-guarantee} to fixed-parameter tractability with respect to the feedback edge number or maximum leaf number is the following:
    Given a graph with at most~$k$ vertices of degree at least three, at most~$k$ paths between each pair of them and without degree-one vertices, how to find an optimal \MAD in~$f(k) \cdot n^{O(1)}$ time?
    Similarly to our approach behind \Cref{thm:fpt-vertex-integrity}, we can formulate the problem as an integer quadratic program with simple box constraints where the number of variables and constraints is bounded in the maximum leaf number. 
    However, the coefficients in the objective are unbounded.
    Nevertheless, the problem would be in FPT if \textsc{Integer Quadratic Programming} is in FPT with respect to the number of variables and constraints; this, however, is an open question posed by \citet{Lokshtanov15}.

\section{Discussion and Conclusion}

We initiated a parameterized analysis of \MADSTl, providing several parameterized algorithms.
For future work, we discuss two lines of possible research in more detail.

\subparagraph*{Lower Bounds.}
To the best of our knowledge, there is only one known NP-hardness result for \MADST dating back to 1978~\cite{JLK78}.
We are able to adapt the construction of \citet{JLK78} to obtain NP-hardness on split graphs, which are graphs that can be partitioned into a clique and an independent set. 

\begin{restatable}[\appref{thm:np-hardness}]{theorem}{nphardness}
	\label{thm:np-hardness}
	\MADST is NP-hard on split graphs.
\end{restatable}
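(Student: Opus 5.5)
We will reduce from \XtC, following the construction of \citet{JLK78} and modifying it so that the constructed graph is split. Let the instance consist of a ground set $X=\{x_1,\dots,x_{3q}\}$ and a family $\mathcal S=\{S_1,\dots,S_m\}$ of $3$-element subsets of $X$. We build $G$ as follows. There is one clique vertex $s_j$ for each set $S_j$. There is one independent vertex for each element $x\in X$, adjacent to exactly those $s_j$ with $x\in S_j$. We also add a hub vertex $h$ to the clique. To each $s_j$ we attach a large number $L$ of pendant (independent) vertices; we may also attach pendants to $h$. The clique $\{h,s_1,\dots,s_m\}$ together with the independent set of element and pendant vertices makes $G$ a split graph.

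The intended solution is as follows. Given an exact cover $\mathcal S'$ with $|\mathcal S'|=q$, take the star from $h$ to all $s_j$, attach each $s_j$'s pendants to it, and attach each element $x$ to the unique $s_j\in\mathcal S'$ containing it. Every spanning tree must contain all pendant edges and the edges from $h$ to all $s_j$: the latter we will enforce by \Cref{lem:Dankelmann} together with a large pendant load at $h$, or directly via an exchange argument. The only freedom is then the assignment $\pi\colon X\to\{s_j\}$ with $x\in S_{\pi(x)}$. We will set $b$ to the Wiener index of the tree given by an exact cover.

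The core step is to write $\W(T)$ for such a tree as a constant plus a function of the load vector $(a_j)_j$, where $a_j=|\pi^{-1}(s_j)|$ and $\sum_j a_j=3q$. Distances between two elements are $2$ if they share a parent and $4$ otherwise. Distances between an element and the pendants of $s_j$ are $2$ if $\pi(x)=s_j$ and $4$ otherwise. Hence, up to constants, the cost is $-\sum_j a_j^2$ from element pairs and $-2(L+1)\sum_j a_j$ from element–pendant pairs; the latter term is constant. Thus minimizing $\W(T)$ amounts to maximizing $\sum_j a_j^2$ subject to $a_j\le 3$ and $\sum_j a_j=3q$. This maximum is attained exactly when $q$ sets each have load $3$, which is exactly an exact cover. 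Because of this, we may drop the pendants on the $s_j$ entirely. Instead, we keep pendants on $h$ to force the structure.

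The main obstacle is justifying that an optimal tree has the assumed shape. In particular, we must show that it never uses clique edges $s_js_{j'}$ in place of $hs_j$, and never routes an $s_j$ through an element vertex. I will first try to use pendants at $h$ with weight $L\gg q$: any tree in which some $s_j$ is not adjacent to $h$ places that $s_j$ (and its subtree) at distance at least $2$ from all $L$ pendants, costing at least $L$ extra. This exceeds the total possible variation $O(q^2)$ of the element term, so optimal trees, and any tree meeting the budget $b$, are of the intended form. The converse direction, that an exact cover yields a tree within the budget, is the direct computation above.
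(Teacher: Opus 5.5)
Your reduction is correct, but it handles the structural step differently from the paper. The paper uses the bare split graph: set vertices form a clique, element vertices are independent, and there is no hub and no padding. It gets the structure from \Cref{lem:Dankelmann}. Since $N_G[x]\subset N_G[C]$ whenever $x\in C$, a MAD tree can be chosen whose median $C^*$ is a set vertex with all $T$-paths from $C^*$ induced. Hence $T[\mathcal{C}]$ is a star centred at $C^*$, $C^*$ keeps its three elements, and every element is a leaf. Because the centre is itself a set vertex, the paper's count is slightly asymmetric: it tracks how many non-centre set vertices carry one, two or three element leaves and shows that all carry three.

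Your hub $h$ with $L$ pendants replaces the median lemma by an elementary weight argument. It also makes all set vertices symmetric and reduces everything to $\sum_j a_j^2\le 9q$, with equality iff every $a_j\in\{0,3\}$. The price is the padding and a correct choice of $L$, and two details there need tightening.

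\emph{Pendant distances.} A set vertex not adjacent to $h$ is at distance at least $3$ (not $2$) from every pendant. No pendant distance can ever fall below its intended value, since set vertices are always at distance $\ge 2$ and elements at distance $\ge 3$. So, with $T'$ the tree minus the pendants, the pendant contribution is exactly its intended value plus $L(\dist_{T'}(h)-m-6q)$. This is at least $L$ for every tree not of the intended shape.

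\emph{Size of $L$.} $L$ must dominate the possible decrease of all of $\W(T')$, not just the element--element term. Set--set and set--element distances can drop too, for example if you hang all set vertices off a single set vertex. So the a priori bound on the variation is $O((m+q)^2)$, not $O(q^2)$, unless $m=O(q)$. Choose, say, $L>W_0-(m+3q)^2$, where $W_0$ is the Wiener index of an exact-cover tree with the pendants removed and $(m+3q)^2$ is the Wiener index of a star on that many vertices. This closes the argument and keeps the reduction polynomial.
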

\appendixproof{thm:np-hardness}
{\nphardness*
\begin{proof}
	We give a polynomial-time reduction from \XtC where each element appears in at most three sets.
	This problem is known to be NP-hard \cite{GJ79}.

	\problemDef{\ExThreeCov (\XtC)}
	{A universe $X=\{x_1, \dots, x_{3q}\}$ and a collection~$\mathcal{C} = \{C_1, \dots, C_s\}$ of size-three subsets of~$X$.}
	{Are there sets~$C_{i_1}, \dots , C_{i_q}$ such that~$\bigcup_{j=1}^q C_{i_j} = X$?}

	Let~$(X = \{x_1, \dots, x_{3q}\}, \mathcal{C} = \{C_1, \dots, C_s\})$ be an \XtC-instance.
	We construct our graph~$G=(V,E)$ as follows (see \Cref{fig:np-hardness}):
		\begin{align*}
		V &\coloneqq X \cup \mathcal{C}\\
		E &\coloneqq \binom{\mathcal{C}}{2} \cup \{x_iC_j \mid i \in [3q], j \in [s], x_i \in C_j\}.
	\end{align*}
	In particular, $G[\mathcal{C}]$ is a clique and~$G[X]$ is an independent set.%

	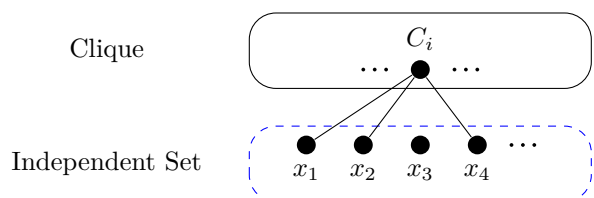
\begin{figure}[t]
		\centering
		\begin{tikzpicture}[xscale=0.75, yscale=0.5]

			\node[circle, fill=black, inner sep=0.5pt] (p1) at (2,0.5) {};
			\node[circle, fill=black, inner sep=0.5pt] (p2) at (2.2,0.5) {};
			\node[circle, fill=black, inner sep=0.5pt] (p3) at (2.4,0.5) {};
			
			\draw[rounded corners=10pt, black] (0, 0) rectangle (6, 2);
			\node[circle, fill=black, inner sep=2.5pt, label=above:$C_i$] (ci) at (3,0.5) {};
			
			\node[circle, fill=black, inner sep=0.5pt] (p4) at (3.6,0.5) {};
			\node[circle, fill=black, inner sep=0.5pt] (p5) at (3.8,0.5) {};
			\node[circle, fill=black, inner sep=0.5pt] (p6) at (4,0.5) {};
			
			\node(clique) at (-2.5,1) {Clique};
			\draw[rounded corners=10pt, dashed, blue] (0, -3) rectangle (6, -1);
			
			\node[circle, fill=black, inner sep=2.5pt, label=below:$x_1$] (a) at (1,-1.5) {};
			\node[circle, fill=black, inner sep=2.5pt, label=below:$x_2$] (b) at (2,-1.5) {};
			\node[circle, fill=black, inner sep=2.5pt, label=below:$x_3$] (c) at (3,-1.5) {};
			\node[circle, fill=black, inner sep=2.5pt, label=below:$x_4$] (d) at (4,-1.5) {};

			\node[circle, fill=black, inner sep=0.5pt] (p7) at (4.6,-1.5) {};
			\node[circle, fill=black, inner sep=0.5pt] (p8) at (4.8,-1.5) {};
			\node[circle, fill=black, inner sep=0.5pt] (p9) at (5,-1.5) {};
			
			\node(clique) at (-2.5,-2) {Independent Set};
			\draw (ci)--(a) (ci)--(b) (ci)--(d);
		\end{tikzpicture}
		\caption{A sketch of the construction from \Cref{thm:np-hardness}.
		In the shown example the set~$C_i$ contains the three elements~$x_1, x_2$ and~$x_4$ and therefore the respective vertices are adjacent.
		\Cref{lem:Dankelmann} implies that in every \MAD~$T$ there is a set-vertex~$C_i$ such that every path from~$C_i$ in~$T$ is induced in the split graph.}
		\label{fig:np-hardness}
	\end{figure}
	
	To define the exact value $b$ of the desired Wiener index, we first define the following numbers, where~$D_{AB}$ will later coincide with the sum of distances between the vertex sets~$A$ and~$B$ in a MAD tree.
	\begin{align*}
		D_{\mathcal{C}\mathcal{C}} &\coloneqq (s-1)^2, \\
		D_{\mathcal{C}X} &\coloneqq 3(1 +2(s-1)) + (3q-3)(3 + 3(s-2)), \\
		D_{XX} &\coloneqq 6 + 9(3q-3) + 2(3q-3)(3q-5).
	\end{align*}
	Finally, we set~$b\coloneqq D_{\mathcal{C}\mathcal{C}} + D_{\mathcal{C}X} + D_{XX}$ to obtain the~\MADST-instance~$(G,b)$.
	Next, we show that $(X, \mathcal{C})$ is a yes-instance if and only if there is a spanning tree~$T$ in~$G$ with~$\W(T) \le b$.

	``$\Rightarrow$'': Let~$\mathcal{S} \coloneqq \{C_{i_1}, \dots, C_{i_q}\}$ be an exact cover of~$X$.
	We define the spanning tree~$T$ such that~$T[\mathcal{C}]$ is a star with center~$C_{i_1}$ and such that each $C_{i_j} \in \mathcal{S}$ is adjacent to its three elements.
	Formally,
	\begin{align*}
		E(T) \coloneqq \{C_{i_1}C_j \mid 1<j\le s\} \cup \{x_kC_{i_j} \mid k \in[3q], j \in [q], x_k \in C_{i_j}\}.
	\end{align*}
	Now, it is straightforward to count the distances in the spanning tree~$T$.
	Since~$T[\mathcal{C}]$ is a star, we have
	\begin{align*}
		\dist_T(\mathcal{C},\mathcal{C}) = (s-1)^2 = D_{\mathcal{C}\mathcal{C}}.
	\end{align*}
	A vertex~$x_k \in C_{i_j}$ with~$j \neq 1$ has distance two to the other two elements from~$C_{i_j}$, distance three to the elements from~$C_{i_1}$ and distance four to all others.
	Consequently,
	\begin{align*}
		\dist_T(X,X) &= \dist_T(X\setminus C_{i_1}, X\setminus C_{i_1}) + \dist_T(C_{i_1}, X\setminus C_{i_1}) + \dist_T(C_{i_1}, C_{i_1})\\
		&= \frac{1}{2}(3q-3) (2 \cdot 2 + (3q-6) \cdot 4) + 3 \cdot (3q-3) \cdot 3 + 6 \\
		&= 2(3q-3)(3q-5) + 9(3q-3) + 6  = D_{XX}.
	\end{align*}
	Counting the distances between~$X$ and~$\mathcal{C}$ in the same manner, we obtain
	\begin{align*}
		\dist_T(\mathcal{C}, X) = 3(1 +2(s-1)) + (3q-3)(3 + 3(s-2)) = D_{\mathcal{C}X}
	\end{align*}
	and therefore~$\W(T) = b$.

	``$\Leftarrow$'': Let~$T$ be a \MAD of~$G$ with~$\W(T) \leq b$.
	Note that by construction the closed neighborhood of a vertex~$x_k \in X$ is strictly contained in the closed neighborhood of any vertex~$C_i \in \mathcal{C}$ where~$x_k \in C_i$.
	Hence, by \Cref{lem:Dankelmann}, we can assume that there is a vertex~$C^* \in \mathcal{C}$ for which every~$T$-path starting at~$C^*$ is induced in~$G$.
	Hence, $T[\mathcal{C}]$ is a star with center~$C^*$ and therefore all vertices from~$X$ are leaves in~$T$.
	We denote the three neighbors of~$C^*$ in~$X$ by~$X^*$.
	By simply counting the distances in the star~$T[\mathcal{C}]$, we conclude~$\dist_T(\mathcal{C},\mathcal{C}) = D_{\mathcal{C}\mathcal{C}}$.
	When counting the distances between~$\mathcal{C}$ and~$X$, we distinguish between~$X^*$ and~$X \setminus X^*$.
	This yields
	\begin{align*}
		\dist_T(\mathcal{C},X)
		&= 3(1 +2(s-1)) + (3q-3)(3 + 3(s-2)) = D_{\mathcal{C}X}.
	\end{align*}
	Since~$W(T) \leq b$, we can also conclude~$\dist_T(X,X) \leq D_{XX}$.
	We now show that in~$T$ each vertex from~$\mathcal{C}$ that has a neighbor in~$X$ has exactly three neighbors in~$X$.
	For this, let $s_\ell$ with~$\ell \in \{1,2,3\}$ be the number of vertices in~$\mathcal{C}\setminus \{C^*\}$ with exactly~$\ell$ neighbors in~$X \setminus X^*$.
	This yields
	\begin{align*}
		\dist_T(X,X) &= \dist_T(X^*,X^*) + \dist_T(X^*,X \setminus X^*) + \dist_T(X \setminus X^*, X \setminus X^*) \\
		&=6 + 3 \cdot 3(3q-3) + \big ( 4\binom{3q-3}{2} -2s_2 -6s_3  \big )\\
		&= 6 + 9(3q-3) + 2(3q-3)(3q-4) - 2s_2 - 6s_3 \\
		&= D_{XX} + 6(q-1) - 2s_2 -6s_3.
	\end{align*}
	Since~$\dist_T(X,X) \leq D_{XX}$, we must have~$6(q-1) - 2s_2 -6s_3 \leq 0$.
	Since every vertex in~$X$ is a leaf in~$T$, we know that~$s_1 + 2s_2 + 3s_3 = 3q-3$ and this together with~$6(q-1) \leq 2s_2  + 6s_3$ implies~$s_1 + 2s_2 + 3s_3 \leq s_2 + 3s_3$.
	Since~$s_1,s_2,s_3 \geq 0$, it follows~$s_1 = s_2 = 0$ and~$s_3 = q-1$.
	This allows us to define an exact cover~$\mathcal{S}$ for the \XtC-instance by setting~$\mathcal{S} \coloneqq \{C_i \in \mathcal{C} \mid |N_T(C_i) \cap X | = 3\}$.
\end{proof}
}
Note that split graphs contain no induced~$P_5$.
Thus, the NP-hardness contrasts the polynomial-time algorithm for $P_4$-free graphs (that is, cographs) implied by \Cref{thrm:mw_fpt} (cographs have modular width 2).

In general, the structure of \MADs makes it difficult not only to design algorithms but also to design reductions.
Notably, our NP-hardness result (as well as the existing one) actually shows that it is NP-hard to find a MAD tree which is a shortest-path tree.
In general, however, a \MAD is not always a shortest-path tree \cite{Da00}. 
Maybe one needs to use such special MAD trees for hardness constructions.
Further lower bounds in terms of hardness results could shed some more light on what makes computing \MADs challenging.

\subparagraph*{Pushing the Border of Tractability.}
Given the long list of algorithmic and structural insights~\cite{DDR04,WLBCRT00,EC85,BE95,DDGS03,JM12,Mondal13,JMP15}, it seems easier to further extend the list of known tractable special cases than providing lower bounds.
From the viewpoint of parameterized algorithmics, the following questions arise:
\begin{compactitem}
    \item Is the problem in FPT for treewidth? %
    As a first step, one might show FPT for the tree-depth or the maximum leaf number.
    \item Is it in XP for the cliquewidth? (An open question by \citet{DDGS03})
    \item What about other parameterizations such as ``distance to polynomial-time solvability'' (for example, distance to cluster or distance to outerplanar)?
\end{compactitem}

\newpage

\bibliographystyle{plainnat}
\bibliography{References-pruned.bib}

@article{Wiener47,
  author =        {Harry Wiener},
  journal =       {Journal of the American Chemical Society},
  number =        {1},
  pages =         {17--20},
  title =         {Structural determination of paraffin boiling points.},
  volume =        {69},
  year =          {1947},
}

@article{Hu74,
  author =        {T. C. Hu},
  journal =       {{SIAM} J. Comput.},
  number =        {3},
  pages =         {188--195},
  title =         {Optimum Communication Spanning Trees},
  volume =        {3},
  year =          {1974},
  doi =           {10.1137/0203015},
}

@article{FischettiLS02,
  author =        {Matteo Fischetti and Giuseppe Lancia and
                   Paolo Serafini},
  journal =       {Networks},
  number =        {3},
  pages =         {161--173},
  title =         {Exact algorithms for minimum routing cost trees},
  volume =        {39},
  year =          {2002},
  bibsource =     {dblp computer science bibliography, https://dblp.org},
  doi =           {10.1002/NET.10022},
  url =           {https://doi.org/10.1002/net.10022},
}

@article{JLK78,
  author =        {Johnson, D. S. and Lenstra, J. K. and
                   Kan, A. H. G. Rinnooy},
  journal =       {Networks},
  number =        {4},
  pages =         {279--285},
  title =         {The complexity of the network design problem},
  volume =        {8},
  year =          {1978},
  doi =           {10.1002/net.3230080402},
}

@article{WLBCRT00,
  author =        {Wu, Bang Ye and Lancia, Giuseppe and Bafna, Vineet and
                   Chao, Kun-Mao and Ravi, R. and Tang, Chuan Yi},
  journal =       {{SIAM} Journal on Computing},
  number =        {3},
  pages =         {761--778},
  title =         {A Polynomial-Time Approximation Scheme for Minimum
                   Routing Cost Spanning Trees},
  volume =        {29},
  year =          {2000},
  doi =           {10.1137/S009753979732253X},
}

@article{EC85,
  author =        {El-Mallah, Ehab S. and Colbourn, Charles J.},
  journal =       {SIAM Journal on Computing},
  number =        {4},
  pages =         {915--925},
  title =         {Optimum Communication Spanning Trees in
                   Series-Parallel Networks},
  volume =        {14},
  year =          {1985},
  doi =           {10.1137/0214064},
}

@inproceedings{BE95,
  author =        {Burns, K. and Entringer, R. C.},
  booktitle =     {Graph Theory, Combinatorics, and Algorithms:
                   Proceedings of the Seventh International Conference
                   on the Theory and Applications of Graphs},
  editor =        {Y. Alavi and A. J. Schwenk},
  pages =         {323--334},
  publisher =     {Wiley},
  title =         {A graph-theoretic view of the United States postal
                   service},
  year =          {1995},
}

@article{DDGS03,
  author =        {Elias Dahlhaus and Peter Dankelmann and Wayne Goddard and
                   Henda C. Swart},
  journal =       {Discrete Applied Mathematics},
  number =        {1},
  pages =         {151--167},
  title =         {{MAD} trees and distance-hereditary graphs},
  volume =        {131},
  year =          {2003},
  doi =           {10.1016/S0166-218X(02)00422-5},
}

@article{DDR04,
  author =        {Elias Dahlhaus and Peter Dankelmann and R. Ravi},
  journal =       {Information Processing Letters},
  number =        {5},
  pages =         {255--259},
  title =         {A linear-time algorithm to compute a {MAD} tree of an
                   interval graph},
  volume =        {89},
  year =          {2004},
  doi =           {10.1016/J.IPL.2003.11.009},
}

@article{JM12,
  author =        {Biswanath Jana and Sukumar Mondal},
  journal =       {Annals of Pure and Applied Mathematics},
  number =        {1},
  pages =         {74--85},
  title =         {Computation of a Minimum Average Distance Tree on
                   Permutation Graphs},
  volume =        {2},
  year =          {2012},
}

@article{Mondal13,
  author =        {Sukumar Mondal},
  journal =       {Journal of Scientific Research \& Reports},
  number =        {2},
  pages =         {598--611},
  title =         {An Efficient Algorithm for Computation of a Minimum
                   Average Distance Tree on Trapezoid Graphs},
  volume =        {2},
  year =          {2013},
}

@article{JMP15,
  author =        {Biswanath Jana and Sukumar Mondal and
                   Madhumangal Pal},
  journal =       {International Journal of Electronics Communication
                   and Computer Engineering},
  number =        {3},
  pages =         {384--390},
  title =         {Computation of a Minimum Average Distance Tree on
                   Circular-arc Graphs},
  volume =        {6},
  year =          {2015},
}

@article{ZCFL19,
  author =        {Carlos Armando Zetina and Iv{\'{a}}n A. Contreras and
                   Elena Fern{\'{a}}ndez and Carlos Luna{-}Mota},
  journal =       {Eur. J. Oper. Res.},
  number =        {1},
  pages =         {108--117},
  title =         {Solving the optimum communication spanning tree
                   problem},
  volume =        {273},
  year =          {2019},
  bibsource =     {dblp computer science bibliography, https://dblp.org},
  doi =           {10.1016/J.EJOR.2018.07.055},
  url =           {https://doi.org/10.1016/j.ejor.2018.07.055},
}

@article{MNSS19,
  author =        {Adriano Masone and Maria Elena Nenni and
                   Antonio Sforza and Claudio Sterle},
  journal =       {Soft Computing},
  number =        {9},
  pages =         {2947--2957},
  title =         {The Minimum Routing Cost Tree Problem -- State of the
                   art and a core-node based heuristic algorithm},
  volume =        {23},
  year =          {2019},
  doi =           {10.1007/S00500-018-3557-3},
  url =           {https://doi.org/10.1007/s00500-018-3557-3},
}

@article{BES97,
  author =        {Curtis A. Barefoot and Roger C. Entringer and
                   L{\'{a}}szl{\'{o}} A. Sz{\'{e}}kely},
  journal =       {Discrete Applied Mathematics},
  number =        {1},
  pages =         {37--56},
  title =         {Extremal Values for Ratios of Distances in Trees},
  volume =        {80},
  year =          {1997},
  doi =           {10.1016/S0166-218X(97)00068-1},
}

@article{Entringer,
  author =        {Roger Entringer},
  journal =       {Journal of Combinatorial Mathematics and
                   Combinatorial Computing},
  pages =         {65--84},
  title =         {Distance in Graphs: Trees},
  volume =        {24},
  year =          {1997},
}

@article{Da00,
  author =        {Dankelmann, Peter},
  journal =       {Journal of Combinatorial Mathematics and
                   Combinatorial Computing},
  pages =         {93--96},
  publisher =     {CHARLES BABBAGE RESEARCH CENTRE},
  title =         {A note on MAD spanning trees},
  volume =        {32},
  year =          {2000},
}

@article{LYTN13,
  author =        {Laurent Lyaudet and Paulin Melatagia Yonta and
                   Maurice Tchuent{\'{e}} and Ren{\'{e}} Ndoundam},
  journal =       {Discret. Math. Algorithms Appl.},
  number =        {3},
  title =         {Distance Preserving Subtrees in Minimum Average
                   Distance Spanning Trees},
  volume =        {5},
  year =          {2013},
  bibsource =     {dblp computer science bibliography, https://dblp.org},
  doi =           {10.1142/S1793830913500109},
  url =           {https://doi.org/10.1142/S1793830913500109},
}

@article{Abu-AffashCLM24,
  author =        {A. Karim Abu{-}Affash and Paz Carmi and Ori Luwisch and
                   Joseph S. B. Mitchell},
  journal =       {Comput. Geom. Topol.},
  number =        {1},
  pages =         {2:1--2:15},
  title =         {Geometric Spanning Trees Minimizing the Wiener Index},
  volume =        {3},
  year =          {2024},
  bibsource =     {dblp computer science bibliography, https://dblp.org},
  doi =           {10.57717/CGT.V3I1.52},
  url =           {https://doi.org/10.57717/cgt.v3i1.52},
}

@inproceedings{HochuliHW14,
  author =        {Alexandra Hochuli and Stephan Holzer and
                   Roger Wattenhofer},
  booktitle =     {Structural Information and Communication Complexity -
                   21st International Colloquium, {SIROCCO} 2014,
                   Takayama, Japan, July 23-25, 2014. Proceedings},
  editor =        {Magn{\'{u}}s M. Halld{\'{o}}rsson},
  pages =         {121--136},
  publisher =     {Springer},
  series =        {Lecture Notes in Computer Science},
  title =         {Distributed Approximation of Minimum Routing Cost
                   Trees},
  volume =        {8576},
  year =          {2014},
  bibsource =     {dblp computer science bibliography, https://dblp.org},
  doi =           {10.1007/978-3-319-09620-9\_11},
  url =           {https://doi.org/10.1007/978-3-319-09620-9\_11},
}

@article{HZ96,
  author =        {Pierre Hansen and Maolin Zheng},
  journal =       {Discrete Applied Mathematics},
  number =        {1},
  pages =         {275--284},
  title =         {Shortest shortest path trees of a network},
  volume =        {65},
  year =          {1996},
  doi =           {https://doi.org/10.1016/0166-218X(95)00038-S},
}

@article{Ze18,
  author =        {Meirav Zehavi},
  journal =       {Eur. J. Comb.},
  pages =         {175--203},
  title =         {The k-leaf spanning tree problem admits a klam value
                   of 39},
  volume =        {68},
  year =          {2018},
  bibsource =     {dblp computer science bibliography, https://dblp.org},
  doi =           {10.1016/J.EJC.2017.07.018},
  url =           {https://doi.org/10.1016/j.ejc.2017.07.018},
}

@article{HT95,
  author =        {Refael Hassin and Arie Tamir},
  journal =       {Inf. Process. Lett.},
  number =        {2},
  pages =         {109--111},
  title =         {On the Minimum Diameter Spanning Tree Problem},
  volume =        {53},
  year =          {1995},
  bibsource =     {dblp computer science bibliography, https://dblp.org},
  doi =           {10.1016/0020-0190(94)00183-Y},
  url =           {https://doi.org/10.1016/0020-0190(94)00183-Y},
}

@mastersthesis{Schmuck10,
  author =        {Nina Sabine Schmuck},
  school =        {Graz University of Technology},
  type =          {Diploma thesis},
  title =         {The Wiener index of a graph},
  year =          {2010},
}

@book{DF13,
  author =        {Rodney G. Downey and Michael R. Fellows},
  publisher =     {Springer},
  title =         {Fundamentals of Parameterized Complexity},
  year =          {2013},
  url =           {https://doi.org/10.1007/978-1-4471-5559-1},
}

@book{CFKLMPPS15,
  author =        {Marek Cygan and Fedor V. Fomin and Lukasz Kowalik and
                   Daniel Lokshtanov and D{\'{a}}niel Marx and
                   Marcin Pilipczuk and Michal Pilipczuk and
                   Saket Saurabh},
  publisher =     {Springer},
  title =         {Parameterized Algorithms},
  year =          {2015},
  bibsource =     {dblp computer science bibliography, https://dblp.org},
  url =           {https://doi.org/10.1007/978-3-319-21275-3},
}

@inproceedings{TCHP08,
  author =        {Marc Tedder and Derek G. Corneil and Michel Habib and
                   Christophe Paul},
  booktitle =     {Automata, Languages and Programming, 35th
                   International Colloquium, {ICALP} 2008, Reykjavik,
                   Iceland, July 7-11, 2008, Proceedings, Part {I:} Tack
                   {A:} Algorithms, Automata, Complexity, and Games},
  editor =        {Luca Aceto and Ivan Damg{\aa}rd and
                   Leslie Ann Goldberg and
                   Magn{\'{u}}s M. Halld{\'{o}}rsson and
                   Anna Ing{\'{o}}lfsd{\'{o}}ttir and Igor Walukiewicz},
  pages =         {634--645},
  publisher =     {Springer},
  series =        {Lecture Notes in Computer Science},
  title =         {Simpler Linear-Time Modular Decomposition Via
                   Recursive Factorizing Permutations},
  volume =        {5125},
  year =          {2008},
  bibsource =     {dblp computer science bibliography, https://dblp.org},
  doi =           {10.1007/978-3-540-70575-8\_52},
  url =           {https://doi.org/10.1007/978-3-540-70575-8\_52},
}

@inproceedings{GLO13,
  author =        {Jakub Gajarsk{\'{y}} and Michael Lampis and
                   Sebastian Ordyniak},
  booktitle =     {8th International Symposium on Parameterized and
                   Exact Computation ({IPEC} 2013)},
  pages =         {163--176},
  publisher =     {Springer},
  series =        {Lecture Notes in Computer Science},
  title =         {Parameterized Algorithms for Modular-Width},
  volume =        {8246},
  year =          {2013},
  bibsource =     {dblp computer science bibliography, https://dblp.org},
  doi =           {10.1007/978-3-319-03898-8\_15},
  url =           {https://doi.org/10.1007/978-3-319-03898-8\_15},
}

@inproceedings{KN18,
  author =        {Stefan Kratsch and Florian Nelles},
  booktitle =     {26th Annual European Symposium on Algorithms, {ESA}
                   2018, Helsinki, Finland, August 20-22, 2018},
  editor =        {Yossi Azar and Hannah Bast and Grzegorz Herman},
  pages =         {55:1--55:15},
  publisher =     {Schloss Dagstuhl - Leibniz-Zentrum f{\"{u}}r
                   Informatik},
  series =        {LIPIcs},
  title =         {Efficient and Adaptive Parameterized Algorithms on
                   Modular Decompositions},
  volume =        {112},
  year =          {2018},
  bibsource =     {dblp computer science bibliography, https://dblp.org},
  doi =           {10.4230/LIPICS.ESA.2018.55},
  url =           {https://doi.org/10.4230/LIPIcs.ESA.2018.55},
}

@article{CDP19,
  author =        {David Coudert and Guillaume Ducoffe and
                   Alexandru Popa},
  journal =       {{ACM} Transactions on Algorithms},
  number =        {3},
  pages =         {33:1--33:57},
  title =         {Fully Polynomial {FPT} Algorithms for Some Classes of
                   Bounded Clique-width Graphs},
  volume =        {15},
  year =          {2019},
  bibsource =     {dblp computer science bibliography, https://dblp.org},
  doi =           {10.1145/3310228},
  url =           {https://doi.org/10.1145/3310228},
}

@article{DEG01,
  author =        {Dobrynin, Andrey A and Entringer, Roger and
                   Gutman, Ivan},
  journal =       {Acta Applicandae Mathematica},
  number =        {3},
  pages =         {211--249},
  title =         {Wiener Index of Trees: Theory and Applications},
  volume =        {66},
  year =          {2001},
  doi =           {10.1023/A:1010767517079},
}

@article{DH16,
  author =        {P{\aa}l Gr{\o}n{\aa}s Drange and Markus S. Dregi and
                   Pim van 't Hof},
  journal =       {Algorithmica},
  number =        {4},
  pages =         {1181--1202},
  title =         {On the Computational Complexity of Vertex Integrity
                   and Component Order Connectivity},
  volume =        {76},
  year =          {2016},
  doi =           {10.1007/S00453-016-0127-X},
  url =           {https://doi.org/10.1007/s00453-016-0127-x},
}

@article{Lokshtanov15,
  author =        {Daniel Lokshtanov},
  journal =       {CoRR},
  title =         {Parameterized Integer Quadratic Programming:
                   Variables and Coefficients},
  volume =        {abs/1511.00310},
  year =          {2015},
  url =           {http://arxiv.org/abs/1511.00310},
}

@article{MP88,
  author =        {Bojan Mohar and Tomaz Pisanski},
  journal =       {Journal of mathematical chemistry},
  pages =         {167--277},
  title =         {How to compute the Wiener index of a graph},
  volume =        {2},
  year =          {1988},
}

@book{GJ79,
  author =        {M. R. Garey and David S. Johnson},
  publisher =     {W. H. Freeman},
  title =         {Computers and Intractability: {A} Guide to the Theory
                   of NP-Completeness},
  year =          {1979},
  bibsource =     {dblp computer science bibliography, https://dblp.org},
  isbn =          {0-7167-1044-7},
}

\newpage

\appendix

\section{Preliminaries for the Appendix}

\appendixProofText

\section{Missing Proof Details}

\appendixProofs

\end{document}